\newcommand{\subparagraph}{}
\newcommand{\lyxdot}{.}
\providecommand{\algorithmname}{Algorithm}
\newtheorem{proposition}{Proposition}
\newtheorem{corollary}{Corollary}
\newtheorem{lemma}{Lemma}
\newtheorem{definition}{Definition}
\newtheorem{assumption}{Assumption}
\setlist{noitemsep}
\titlespacing\section{0pt}{10pt plus 4pt minus 2pt}{3pt plus 2pt minus 2pt}
\titlespacing\subsection{0pt}{10pt plus 4pt minus 2pt}{3pt plus 2pt minus 2pt}
\titlespacing\subsubsection{0pt}{7pt plus 4pt minus 2pt}{3pt plus 2pt minus 2pt}
\begin{document}

\title{Dynamic Service Placement for Mobile Micro-Clouds with Predicted Future Costs}

\author{
\IEEEauthorblockN{Shiqiang Wang, Rahul Urgaonkar, Ting He, Kevin Chan, Murtaza Zafer, and Kin K. Leung}

\IEEEcompsocitemizethanks{
\vspace{-0.05in}
\IEEEcompsocthanksitem{S. Wang is with IBM T. J. Watson Research Center, Yorktown Heights, NY, United States. Email: wangshiq@us.ibm.com}

\IEEEcompsocthanksitem{R. Urgaonkar is with Amazon Inc., Seattle, WA, United States. Email: rahul.urgaonkar@gmail.com}

\IEEEcompsocthanksitem{T. He is with the School of Electrical Engineering and Computer Science, Pennsylvania State University, University Park, PA, United States. Email: tzh58@psu.edu}

\IEEEcompsocthanksitem{K. Chan is with the Army Research Laboratory, Adelphi, MD, United States. Email: kevin.s.chan.civ@mail.mil}

\IEEEcompsocthanksitem{M. Zafer is with Nyansa Inc., Palo Alto, CA, United States. Email: murtaza.zafer.us@ieee.org}

\IEEEcompsocthanksitem{K. K. Leung is with the Department of Electrical and Electronic Engineering, Imperial College London, United Kingdom. Email: kin.leung@imperial.ac.uk}

\IEEEcompsocthanksitem{A preliminary version of this paper was presented at IEEE ICC 2015 \cite{wang2015dynamic}.}

\IEEEcompsocthanksitem{This is the author's version of the paper accepted for publication in the IEEE Transactions on Parallel and Distributed Systems,  DOI: 10.1109/TPDS.2016.2604814.  \newline
\textcopyright 2016 IEEE. Personal use of this material is permitted. Permission from IEEE must be obtained for all other uses, in any current or future media, including reprinting/republishing this material for advertising or promotional purposes, creating new collective works, for resale or redistribution to servers or lists, or reuse of any copyrighted component of this work in other works.
}

}

}

\IEEEaftertitletext{\vspace{-0.4in}}

\IEEEtitleabstractindextext{
\begin{abstract}
Mobile micro-clouds are promising for enabling performance-critical
cloud applications. However, one challenge therein is the dynamics
at the network edge. In this paper, we study how to place service
instances to cope with these dynamics, where multiple users and service
instances coexist in the system. Our goal is to find the optimal placement
(configuration) of instances to minimize the average cost over time,
leveraging the ability of predicting future cost parameters with known
accuracy. We first propose an offline algorithm that solves for the
optimal configuration in a specific look-ahead time-window. Then,
we propose an online approximation algorithm with polynomial time-complexity
to find the placement in real-time whenever an instance arrives. We
analytically show that the online algorithm is $O(1)$-competitive
for a broad family of cost functions. Afterwards, the impact of prediction
errors is considered and a method for finding the optimal look-ahead
window size is proposed, which minimizes an upper bound of the average
actual cost. The effectiveness of the proposed approach is evaluated
by simulations with both synthetic and real-world (San Francisco taxi)
user-mobility traces. The theoretical methodology used in this paper can potentially
be applied to a larger class of dynamic resource allocation problems. \end{abstract}
\begin{IEEEkeywords}
Cloud computing, fog/edge computing, online approximation
algorithm, optimization, resource allocation, wireless networks
\end{IEEEkeywords}
}

\maketitle

\setstretch{0.98}

\section{Introduction}

Many emerging applications, such as video streaming, real-time face/object
recognition, require high data processing capability. However, portable
devices (e.g. smartphones) are generally limited by their size and
battery life, which makes them incapable of performing complex computational
tasks. A remedy for this is to utilize cloud computing techniques,
where the cloud performs the computation for its users. In the traditional
setting, cloud services are provided by centralized data-centers that
may be located \emph{far away} from end-users, which can be inefficient
because users may experience long latency and poor connectivity due
to long-distance communication \cite{cloudletCMUMobiCASE}. The newly
emerging idea of mobile micro-clouds (MMCs) is to place the cloud
\emph{closer} to end-users, so that users can have fast and reliable
access to services. A small-sized server cluster hosting an MMC is
directly connected to a network component at the network edge. For
example, it can be connected to the wireless basestation, as proposed
in \cite{IBMWhitepaper} and \cite{CommMagEdgeComput}, providing
cloud services to users that are either connected to the basestation
or are within a reasonable distance from it. It can also be connected
to other network entities that are in close proximity to users. Fig.
\ref{fig:scenario} shows an application scenario where MMCs coexist
with a backend cloud. MMCs can be used for many applications that
require high reliability or high data processing capability \cite{cloudletCMUMobiCASE}.
Similar concepts include cloudlet \cite{cloudletCMUMobiCASE},
follow me cloud \cite{FollowMeMagazine}, fog computing, edge computing \cite{Edge-as-a-service},
small cell cloud \cite{becvar2014pimrc}, etc. We use the term MMC
in this paper. 

\begin{figure}
\center{\includegraphics[width=0.8\linewidth]{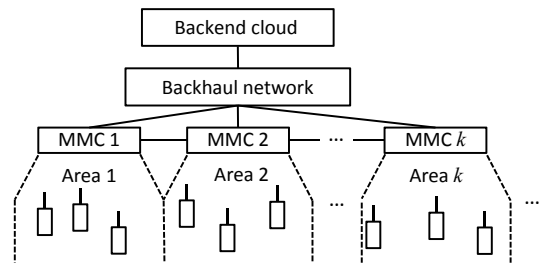}}

\protect\caption{Application scenario.}
\label{fig:scenario}
\end{figure}

One important issue in MMCs is to decide which MMC should perform
the computation for a particular user or a set of users, taking into
account user mobility and other dynamic changes in the network. Providing
a service to a user (or a set of users) requires starting a service
instance, which can be run either in the backend cloud or in one of
the MMCs, and the question is how to choose the optimal location to
run the service instance. Besides, users may move across different
geographical areas due to mobility, thus another question is whether
we should migrate the service instance from one cloud (which can be
either an MMC or the backend cloud) to another cloud when the user
location or network condition changes. For every cloud, there is a
cost\footnote{The term ``cost'' in this paper is an abstract notion that can stand for monetary cost, service access latency of users, service interruption time, amount of transmission/processing resource consumption, etc.} associated with running the service instance in it, and there
is also a cost associated with migrating the service instance from
one cloud to another cloud. The placement and migration of service
instances therefore needs to properly take into account this cost.

\subsection{Related Work}

The abovementioned problems are related to application/workload placement
problems in cloud environments. Although existing work has studied
such problems under complex network topologies \cite{VNEmbeddingSurvey,refViNEYard},
they mainly focused on relatively static network conditions and fixed resource
demands in a data-center environment. The presence of dynamically changing resource availability
that is related to user mobility in an MMC environment has not been sufficiently considered. 

When user mobility exists, it is necessary to consider real-time (live)
migration of service instances. For example, it can be beneficial
to migrate the instance to a location closer to the user. Only a few
existing papers in the literature have studied this problem \cite{MDPFollowMeICC2014,wang2014milcom,wang2015IFIPNetworking}.
The main approach in \cite{MDPFollowMeICC2014,wang2014milcom,wang2015IFIPNetworking}
is to formulate the mobility-driven service instance migration problem
as a Markov decision process (MDP). Such a formulation is suitable
where the user mobility follows or can be approximated by a mobility
model that can be described by a Markov chain. However, there are
cases where the Markovian assumption is not valid \cite{SrivatsaNonMarkovianMobility}.
Besides, \cite{MDPFollowMeICC2014,wang2014milcom,wang2015IFIPNetworking}
either do not explicitly or only heuristically consider multiple users
and service instances, and they assume specific structures of the
cost function that are related to the locations of users and service
instances. Such cost structures may be inapplicable when the load
on different MMCs are imbalanced or when we consider the backend cloud
as a placement option. In addition, the existing MDP-based approaches
mainly consider service instances that are constantly running in the
cloud system; they do not consider instances that may arrive to and
depart from the system over time.

Systems with online (and usually unpredictable) arrivals and departures
have been studied in the field of online approximation algorithms
\cite{bookOnlineComputation,krumke2006online}. The goal is to design
efficient algorithms (usually with polynomial time-complexity) that
have reasonable competitive ratios%
\footnote{We define the \emph{competitive ratio} as the maximum ratio of the
cost from the online approximation algorithm to the true optimal cost
from offline placement. %
}. However, most existing work focus on problems that can be formulated
as integer \emph{linear} programs. Problems that have \emph{convex
but non-linear} objective functions have attracted attention only
very recently \cite{AzarOnlineConvexObj14,BuchbinderOnlinePackingCovering},
where the focus is on online covering problems in which new constraints
arrive over time. Our problem is different from the existing work
in the sense that the online arrivals in our problem are abstracted
as \emph{change }in constraints (or, with a slightly different but
equivalent formulation, adding new \emph{variables}) instead of adding
new constraints, and we consider the average cost over multiple timeslots.
Meanwhile, online departures are not considered in \cite{AzarOnlineConvexObj14,BuchbinderOnlinePackingCovering}.

Concurrently with the work presented in this paper, we have considered
non-realtime applications in \cite{urgaonkar2015performance}, where
users submit job requests that can be processed after some time. Different
from \cite{urgaonkar2015performance}, we consider users continuously
connected to services in this paper, which is often the case for delay-sensitive
applications (such as live video streaming). The technical approach
in this paper is fundamentally different from that in \cite{urgaonkar2015performance}.
Besides, a Markovian mobility model is still assumed in \cite{urgaonkar2015performance}.

Another related problem is the load balancing in distributed systems, where the goal is to even out the load distribution across machines.
Migration cost, future cost parameter prediction and the impact of prediction error are not considered in load balancing problems \cite{refViNEYard,hsiao2013load,AzarLoadBalancingSurvey,LiOptimalDynamicMobility,LiLoadBalancing}.
We consider all these aspects in this paper, and in addition, we consider a generic cost definition that can be defined to favor load balancing as well as other aspects. 

We also note that existing online algorithms with provable performance
guarantees are often of theoretical nature \cite{bookOnlineComputation,krumke2006online,AzarOnlineConvexObj14,BuchbinderOnlinePackingCovering},
which may not be straightforward to apply in practical systems because these algorithms can be conceptually complex thus difficult to understand. At
the same time, most online algorithms applied in practice are of heuristic
nature without theoretically provable optimality guarantees \cite{VNEmbeddingSurvey}; the performance of such algorithms are usually evaluated under a specific experimentation setting (see references of  \cite{VNEmbeddingSurvey}), thus they may perform
poorly under other settings that possibly occur in practice \cite{chen1999ordinal}. For example, in the machine scheduling problem considered in \cite{Aspnes:1997:ORV:258128.258201}, a greedy algorithm (which is a common heuristic) that works well in some cases does not work well in other cases. We propose a simple and
practically applicable online algorithm with theoretically provable performance
guarantees in this paper, and also verify its performance with simulation
using both synthetic arrivals and real-world user traces.


\subsection{Main Contributions \label{sub:Main-Contributions}}

In this paper, we consider a general setting which allows heterogeneity
in cost values, network structure, and mobility models. We assume
that the cost is related to a finite set of parameters, which can
include the locations and preferences of users, load in the system,
database locations, etc. We focus on the case where there is an underlying
mechanism to predict the future values of these parameters, and also
assume that the prediction mechanism provides the most likely future
values and an upper bound on possible deviation of the actual value
from the predicted value. Such an assumption is valid for many prediction
methods that provide guarantees on prediction accuracy. Based on the
predicted parameters, the (predicted) future costs of each configuration
can be found, in which each configuration represents one particular
placement sequence of service instances. 

With the above assumption, we formulate a problem of finding the optimal
configuration of service instances that minimizes the average cost
over time. We define a look-ahead window to specify the amount of
time that we look (predict) into the future. The main contributions
of this paper are summarized as follows:
\begin{enumerate}
\item We first focus on the \emph{offline problem }of service instance placement
using predicted costs within a specific look-ahead window, where the
instance arrivals and departures within this look-ahead window are
assumed to be known beforehand. We show that this problem is equivalent
to a shortest-path problem in a virtual graph formed by all possible
configurations, and propose an algorithm (Algorithm \ref{alg:shortestPath}
in Section \ref{sub:AlgorithmOffline}) to find its optimal solution
using dynamic programming.
\item We note that it is often practically infeasible to know in advance
about when an instance will arrive to or depart from the system. Meanwhile,
Algorithm \ref{alg:shortestPath} may have exponential time-complexity
when there exist multiple instances. Therefore, we propose an \emph{online
approximation algorithm} that finds the placement of a service instance
upon its arrival with polynomial time-complexity. The proposed online
algorithm calls Algorithm \ref{alg:shortestPath} as a subroutine
for each instance upon its arrival. We analytically evaluate the performance
of this online algorithm compared to the optimal offline placement.
The proposed online algorithm is $O(1)$-competitive%
\footnote{We say that an online algorithm is $c$\emph{-competitive }if its
competitive ratio is upper bounded by $c$.%
} for certain types of cost functions (including those which are linear,
polynomial, or in some other specific form), under some mild assumptions.
\item Considering the existence of prediction errors, we propose a method
to find the \emph{optimal look-ahead window size}, such that an upper
bound on the actual placement cost is minimized. 
\item The effectiveness of the proposed approach is evaluated by \emph{simulations
}with both synthetic traces and \emph{real-world mobility traces}
of San Francisco taxis.
\end{enumerate}

The remainder of this paper is organized as follows. The problem formulation
is described in Section \ref{sec:problemFormulation}. Section \ref{sec:optSolutionGivenLookAhead}
proposes an offline algorithm to find the optimal sequence of service
instance placement with given look-ahead window size. The online placement
algorithm and its performance analysis are presented in Section \ref{sec:onlinePlacement}.
Section \ref{sec:findLookAhead} proposes a method to find the optimal
look-ahead window size. Section \ref{sec:simulation} presents the
simulation results and Section \ref{sec:Conclusions} draws conclusions.

\section{Problem Formulation}

\label{sec:problemFormulation}

We consider a cloud computing system as shown in Fig.~\ref{fig:scenario},
where the clouds are indexed by $k\in\{1,2,...,K\}$. Each cloud $k$
can be either an MMC or a backend cloud. All MMCs together with the
backend cloud can host service instances that may arrive and leave
the system over time. A service instance is a process that is executed
for a particular task of a cloud service. Each service instance \emph{may
serve one or a group users}, where there usually exists data transfer
between the instance and the users it is serving. A time-slotted system
as shown in Fig. \ref{fig:timing} is considered, in which the actual
physical time interval corresponding to each slot $t=1,2,3,...$ can
be either the same or different. 

We consider a window-based control framework, where every $T$ slots,
a controller performs cost prediction and computes the service instance
configuration for the next $T$ slots. We define these $T$ consecutive
slots as a \emph{look-ahead window}. Service instance placement within
each window is found either at the beginning of the window (in the
offline case) or whenever an instance arrives (in the online case).
We limit ourselves to within one look-ahead window when finding the
configuration. In other words, we do not attempt to find the placement
in the next window until the time for the current window has elapsed
and the next window starts. Our solution can also be extended to a
slot-based control framework where the controller computes the next
$T$-slot configuration at the beginning of every slot, based on predicted
cost parameters for the next $T$ slots. We leave the detailed comparison
of these frameworks and their variations for future work. 

\begin{figure}
\centering \includegraphics[width=1\columnwidth]{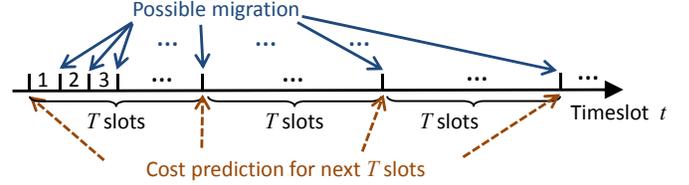}

\protect\caption{Timing of the proposed approach.}
\label{fig:timing}
\end{figure}

\subsection{Definitions }

We introduce some definitions in the following. A summary of main
notations is given in Appendix \ref{app:SummaryNotations}. 

\subsubsection{Service Instances \label{sub:serviceInstanceIndexing}}

We say a service instance \emph{arrives }to the system if it is created,
and we say it \emph{departs }from the system if its operation is finished.
Service instances may arrive and depart over time. We keep an index
counter to assign an index for each new instance. The counter is initialized
to zero when the cloud system starts to operate%
\footnote{This is for ease of presentation. In practice, the index
can be reset when reaching a maximum counter number,
and the definition of service configurations (defined later) can be
easily modified accordingly.%
}. Upon a service instance arrival, we increment the counter by one,
so that if the previously arrived instance has index $i$, a newly
arrived instance will have index $i+1$. With this definition, if
$i<i'$, instance $i$ arrives no later than instance $i'$. A particular
instance $i$ can only arrive to the system once, and we assume that
arrivals always occur at the beginning of a slot and departures always
occur at the end of a slot. For example, consider timeslots $t=1,2,3,4,5$,
instance $i=2$ may arrive to the system at the beginning of slot
$t=2$, and depart from the system at the end of slot $t=4$. At any
timeslot $t$, instance $i$ can have one of the following states:
not arrived, running, or departed. For the above example, instance
$i=2$ has not yet arrived to the system in slot $t=1$, it is running
in slots $t=2,3,4$, and it has already departed in slot $t=5$. Note
that an instance can be running across multiple windows each containing
$T$ slots before it departs.

\subsubsection{Service Configurations \label{sub:Configurations}}

Consider an arbitrary sequence of consecutive timeslots $t\in\{t_{0},t_{0}+1,...,t_{0}+Q-1\}$,
where $Q$ is an integer. For simplicity, assume that the instance with the smallest
index running in slot $t_{0}$ has index $i=1$, and the instance
with the largest index running in \emph{any of} the slots in $\{t_{0},...,t_{0}+Q-1\}$
has index $M$. According to the index assignment discussed in Section
\ref{sub:serviceInstanceIndexing}, there can be at most $M$ instances
running in any slot $t\in\{t_{0},...,t_{0}+Q-1\}$. 

We define a $Q$-by-$M$ matrix denoted by $\boldsymbol{\pi}$, where
its $(q,i)$th ($q\in\{1,...,Q\}$) element $(\boldsymbol{\pi})_{qi}\in\{0,1,2,...,K\}$
denotes the location of service instance $i$ in slot $t_{q}\triangleq t_{0}+q-1$ (
``$\triangleq$'' stands for ``is defined to be equal to'').
We set $(\boldsymbol{\pi})_{qi}$ according to the state of instance
$i$ in slot $t_{q}$, as follows 
\[
(\boldsymbol{\pi})_{qi}=\begin{cases}
0, & \textrm{if }i\textrm{ is not running in slot }t_{q}\\
k, & \textrm{if }i\textrm{ is running in cloud }k\textrm{ in slot }t_{q}
\end{cases}
\]
where instance $i$ is not running if it has not yet arrived or has
already departed. The matrix $\boldsymbol{\pi}$ is called the \emph{configuration}
of instances in slots $\{t_{0},...,t_{0}+Q-1\}$. Throughout this
paper, we use matrix $\boldsymbol{\pi}$ to represent configurations
in different subsets of timeslots. We write $\boldsymbol{\pi}(t_{0},t_{1},...,t_{n})$
to explicitly denote the configuration in slots $\{t_{0},t_{1},...,t_{n}\}$
(we have $Q=t_{n}-t_{0}+1$), and we write $\boldsymbol{\pi}$
for short where the considered slots can be inferred from the context.
For a single slot $t$, $\boldsymbol{\pi}(t)$ becomes
a vector (i.e., $Q=1$). 

\emph{Remark:} The configurations in different slots can
appear either in the same matrix or in different matrices. This means,
from $\boldsymbol{\pi}(t_{0},...,t_{0}+Q-1)$, we can get $\boldsymbol{\pi}(t)$
for any $t\in\{t_{0},...,t_{0}+Q-1\}$, as well as $\boldsymbol{\pi}(t-1,t)$
for any $t\in\{t_{0}+1,...,t_{0}+Q-1\}$, etc., and vice versa. For
the ease of presentation later, we define $(\boldsymbol{\pi}(0))_{i}=0$
for any $i$.


\subsubsection{Costs \label{sub:CostDef}}

The \emph{cost} can stand for different performance-related factors in practice, such as monetary cost (expressed as the price in some currency), service access latency of users (in seconds), service interruption time (in seconds), amount of transmission/processing resource consumption (in the number of bits to transfer, CPU cycles, memory size, etc.), or a combination of these. As long as these aspects can be expressed in some form of a cost function, we can treat them in the same optimization framework, thus we use the generic notion of cost in this paper.

We consider two types of costs. The \emph{local cost} $U(t,\boldsymbol{\pi}(t))$
specifies the cost of data transmission (e.g., between each pair of
user and service instance) and processing in slot $t$ when the configuration
in slot $t$ is $\boldsymbol{\pi}(t)$. Its value can depend on many
factors, including user location, network condition, load of clouds,
etc., as discussed in Section~\ref{sub:Main-Contributions}. When
a service instance is initiated in slot $t$, the local cost in slot
$t$ also includes the cost of initial placement of the corresponding
service instance(s). We then define the \emph{migration cost} $W(t,\boldsymbol{\pi}(t-1),\boldsymbol{\pi}(t))$,
which specifies the cost related to migration between slots $t-1$
and $t$, which respectively have configurations $\boldsymbol{\pi}(t-1)$
and $\boldsymbol{\pi}(t)$. There is no migration cost in the very
first timeslot (start of the system), thus we define $W(1,\cdot,\cdot)=0$.
The sum of local and migration costs in slot $t$ when following configuration
$\boldsymbol{\pi}(t-1,t)$ is given by 
\begin{equation}
C_{\boldsymbol{\pi}(t-1,t)}(t)\triangleq U(t,\boldsymbol{\pi}(t))+W(t,\boldsymbol{\pi}(t-1),\boldsymbol{\pi}(t))\label{eq:costDef}
\end{equation}
The above defined costs are \emph{aggregated costs }for all service
instances in the system during slot $t$. We will give concrete examples
of these costs later in Section \ref{sub:PerfAnalysisDefinitions}.

\subsection{Actual and Predicted Costs}

To distinguish between the actual and predicted cost values, for a
given configuration $\boldsymbol{\pi}$, we let $A_{\boldsymbol{\pi}}(t)$
denote the \emph{actual} value of $C_{\boldsymbol{\pi}}(t)$, and
let $D_{\boldsymbol{\pi}}^{t_{0}}(t)$ denote the \emph{predicted}
(most likely) value of $C_{\boldsymbol{\pi}}(t)$, when cost-parameter
prediction is performed at the beginning of slot $t_{0}$. For completeness
of notations, we define $D_{\boldsymbol{\pi}}^{t_{0}}(t)=A_{\boldsymbol{\pi}}(t)$
for $t<t_{0}$, because at the beginning of $t_{0}$, the costs of
all past timeslots are known. For $t\geq t_{0}$, we assume that the
absolute difference between $A_{\boldsymbol{\pi}}(t)$ and $D_{\boldsymbol{\pi}}^{t_{0}}(t)$
is at most 
\[
\epsilon(\tau)\triangleq\max_{\boldsymbol{\pi},t_{0}}\left|A_{\boldsymbol{\pi}}(t_{0}+\tau)-D_{\boldsymbol{\pi}}^{t_{0}}(t_{0}+\tau)\right|
\]
which represents the maximum error when looking ahead for $\tau$
slots, among all possible configurations $\boldsymbol{\pi}$ (note
that only $\boldsymbol{\pi}(t_{0}+\tau-1)$ and $\boldsymbol{\pi}(t_{0}+\tau)$
are relevant) and all possible prediction time instant $t_{0}$. The
function $\epsilon(\tau)$ is assumed to be non-decreasing with $\tau$,
because we generally cannot have lower error when we look farther
ahead into the future. The specific value of $\epsilon(\tau)$ is
assumed to be provided by the cost prediction module. 

We note that specific methods for predicting future cost parameters
are beyond the scope of this paper, but we anticipate that existing
approaches such as \cite{cloudMonitorSurvey}, \cite{Cho:2011:FMU:2020408.2020579}
and \cite{lacurts2014cicada} can be applied. For example, one simple
approach is to measure cost parameters on the current network condition,
and regard them as parameters for the future cost until the next measurement
is taken. The prediction accuracy in this case is related to how fast
the cost parameters vary, which can be estimated from historical records.
We regard these cost parameters as predictable because they are generally
related to the overall state of the system or historical pattern of
users, which are unlikely to vary significantly from its previous
state or pattern within a short time. This is \emph{different} from
arrivals and departures of instances, which can be spontaneous
and unlikely to follow a predictable pattern.

\subsection{Our Goal}

Our ultimate goal is to find the optimal configuration $\boldsymbol{\pi}^{*}(1,...,\infty)$
that minimizes the \emph{actual} average cost over a sufficiently
long time, i.e.
\begin{equation}
\boldsymbol{\pi}^{*}(1,...,\infty)\!=\!\arg\!\!\min_{\boldsymbol{\pi}(1,...,\infty)}\lim_{T_{\textrm{max}}\rightarrow\infty}\frac{\sum_{t=1}^{T_{\textrm{max}}}A_{\boldsymbol{\pi}(t-1,t)}(t)}{T_{\textrm{max}}}\label{eq:objFunc}
\end{equation}
However, it is impractical to find the optimal solution to (\ref{eq:objFunc}),
because we cannot precisely predict the future costs and also do not
have exact knowledge on instance arrival and departure events in the
future. Therefore, we focus on obtaining an approximate solution to
(\ref{eq:objFunc}) by utilizing \emph{predicted} cost values that
are collected every $T$ slots. 

Now, the service placement problem includes two parts: one is finding
the look-ahead window size $T$, discussed in Section
\ref{sec:findLookAhead}; the other is finding the configuration within
each window, where we consider both offline and online placements, discussed in Sections \ref{sec:optSolutionGivenLookAhead}
(offline placement) and \ref{sec:onlinePlacement} (online placement). The offline placement assumes that at the beginning
of window $T$, we know the exact arrival and departure times of each
instance within the rest of window $T$, whereas the online placement
does not assume this knowledge. We note that the notion of ``offline''
here does \emph{not }imply exact knowledge of future costs. Both offline
and online placements in Sections \ref{sec:optSolutionGivenLookAhead}
and \ref{sec:onlinePlacement} are based on the predicted costs $D_{\boldsymbol{\pi}}^{t_{0}}(t)$,
the actual cost $A_{\boldsymbol{\pi}}(t)$ is considered later in
Section \ref{sec:findLookAhead}.

\section{Offline Service Placement with Given Look-Ahead Window Size}

\label{sec:optSolutionGivenLookAhead}

In this section, we focus on the offline placement problem, where
the arrival and departure times of future instances are assumed to
be exactly known. We denote the configuration found for this problem
by $\boldsymbol{\pi}_{\textrm{off}}$.

\subsection{Procedure \label{sub:offlineHighLevel}}

We start with illustrating the high-level procedure of finding $\boldsymbol{\pi}_{\textrm{off}}$.
When the look-ahead window size $T$ is given, the configuration $\boldsymbol{\pi}_{\textrm{off}}$
is found sequentially for each window (containing timeslots $t_{0},...,t_{0}+T-1$),
by solving the following optimization problem: 
\begin{equation}
\boldsymbol{\pi}_{\textrm{off}}(t_{0},...,t_{0}\!+\! T\!-\!1)=\arg\!\!\min_{\boldsymbol{\pi}(t_{0},...,t_{0}+T-1)}\!\!\sum_{t=t_{0}}^{t_{0}+T-1}D_{\boldsymbol{\pi}(t-1,t)}^{t_{0}}(t)\label{eq:objFuncFrame}
\end{equation}
where $D_{\boldsymbol{\pi}}^{t_{0}}(t)$ can be found based on the
parameters obtained from the cost prediction module. The procedure
is shown in Algorithm \ref{alg:offlineHighLevel}.

In Algorithm \ref{alg:offlineHighLevel}, every time when solving
(\ref{eq:objFuncFrame}), we get the value of $\boldsymbol{\pi}_{\textrm{off}}$
for additional $T$ slots. This is sufficient in practice (compared
to an alternative approach that directly solves for $\boldsymbol{\pi}_{\textrm{off}}$
for all slots) because we only need to know where to place the instances
in the current slot. The value of $D_{\boldsymbol{\pi}(t-1,t)}^{t_{0}}(t)$
in (\ref{eq:objFuncFrame}) depends on the configuration in slot $t_{0}-1$,
i.e. $\boldsymbol{\pi}(t_{0}-1)$, according to (\ref{eq:costDef}).
When $t_{0}=1$, $\boldsymbol{\pi}(t_{0}-1)$ can be regarded as an
arbitrary value, because the migration cost $W(t,\cdot,\cdot)=0$
for $t=1$. 

Intuitively, at the beginning of slot $t_{0}$, (\ref{eq:objFuncFrame})
finds the optimal configuration that minimizes the predicted cost
over the next $T$ slots, given the locations of instances in slot
$t_{0}-1$. We focus on solving (\ref{eq:objFuncFrame}) next.

\begin{algorithm} 
\caption{Procedure of offline service placement} 
\label{alg:offlineHighLevel} 
{\footnotesize
\begin{algorithmic}[1] 
\STATE Initialize $t_{0}=1$
\LOOP
\STATE At the beginning of slot $t_0$, find the solution to (\ref{eq:objFuncFrame})
\STATE Apply placements $\boldsymbol{\pi}_{\textrm{off}}(t_{0},...,t_{0}+T-1)$ in timeslots $t_{0},...,t_{0}+T-1$
\STATE $t_{0}\leftarrow t_{0}+T$
\ENDLOOP
\end{algorithmic} 
}
\end{algorithm}

\subsection{Equivalence to Shortest-Path Problem}

The problem in (\ref{eq:objFuncFrame}) is equivalent to a shortest-path
problem with $D_{\boldsymbol{\pi}(t-1,t)}^{t_{0}}(t)$ as weights,
as shown in Fig.~\ref{fig:shortestPathIllustration}. Each edge represents
one possible combination of configurations in adjacent timeslots,
and the weight on each edge is the predicted cost for such configurations.
The configuration in slot $t_{0}-1$ is always given, and the number
of possible configurations in subsequent timeslots is at most $K^{M}$,
where $M$ is defined as in Section \ref{sub:Configurations} for
the current window $\{t_{0},...,t_{0}+T-1\}$, and we note that depending
on whether the instance is running in the system or not, the number
of possible configurations in a slot is either $K$ or one (for configuration
$0$). Node B is a dummy node to ensure that we find a single shortest
path, and the edges connecting node B have zero weights. It is obvious
that the optimal solution to (\ref{eq:objFuncFrame}) can be found
by taking the shortest (minimum-weighted) path from node $\boldsymbol{\pi}(t_{0}-1)$
to node B in Fig. \ref{fig:shortestPathIllustration}; the nodes
that the shortest path traverses correspond to the optimal solution
$\boldsymbol{\pi}_{\textrm{off}}(t_{0},...,t_{0}+T-1)$ for (\ref{eq:objFuncFrame}).

\begin{figure}
\centering \includegraphics[width=0.9\columnwidth]{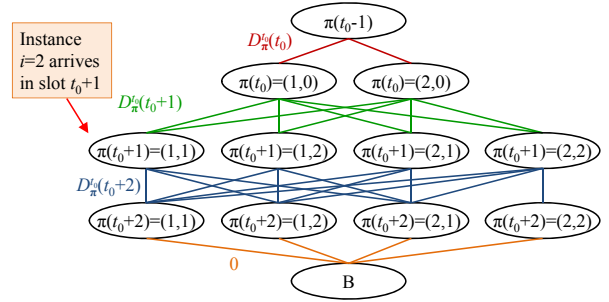}

\protect\caption{Shortest-path formulation with $K=2$, $M=2$, and $T=3$. Instance
$i=1$ is running in all slots, instance $i=2$ arrives at the beginning
of slot $t_{0}+1$ and is running in slots $t_{0}+1$ and $t_{0}+2$.}
\label{fig:shortestPathIllustration}
\end{figure}

\subsection{Algorithm\label{sub:AlgorithmOffline}}

We can solve the abovementioned shortest-path problem by means of
dynamic programming \cite{powell2007approximate}. The algorithm is
shown in Algorithm~\ref{alg:shortestPath}, where we use $U_{p}(t,\mathbf{m})$
and $W_{p}(t,\mathbf{n},\mathbf{m})$ to respectively denote the predicted
local and migration costs when $\boldsymbol{\pi}(t)=\mathbf{m}$
and $\boldsymbol{\pi}(t-1)=\mathbf{n}$. 

In the algorithm, Lines \ref{alg:shortestPath:iteration:start}--\ref{alg:shortestPath:iteration:end}
iteratively find the shortest path (minimum objective function) for
each timeslot. The iteration starts from the second level of the virtual graph in Fig.~\ref{fig:shortestPathIllustration}, which contains nodes with $\pi(t_0)$. It iterates through all the subsequent levels that respectively contain nodes with $\pi(t_0+1)$, $\pi(t_0+2)$, etc., excluding the last level with node B.
In each iteration, the optimal solution for every possible
(single-slot) configuration $\mathbf{m}$ is found by solving the
Bellman's equation of the problem (Line \ref{alg:shortestPath:iteration:bellman}). Essentially, the Bellman's equation finds the shortest path between the top node $\pi(t_0 -1)$ and the current node under consideration (e.g., node $\pi(t_0 +1) = (1,1)$ in Fig. \ref{fig:shortestPathIllustration}), by considering all the nodes in the previous level (nodes $\pi(t_0) = (1,0)$ and $\pi(t_0) = (2,0)$ in Fig. \ref{fig:shortestPathIllustration}). The sum weight on the shortest path between each node in the previous level and the top node $\pi(t_0 -1)$ is stored in $\nu_\mathbf{m}$, and the corresponding nodes that this shortest path traverses through is stored in $\boldsymbol{\xi}_\mathbf{m}$. Based on $\nu_\mathbf{m}$ and $\boldsymbol{\xi}_\mathbf{m}$, Line  \ref{alg:shortestPath:iteration:bellman} finds the shortest paths for all nodes in the current level, which can again be used for finding the shortest paths for all nodes in the next level (in the next iteration).

After iterating through all levels, the algorithm has found the shortest paths between top node $\pi(t_0 -1)$ and all nodes in the last level with $\pi(t_0 -T-1)$. Now, Lines \ref{alg:shortestPath:finalOptimal1}
and \ref{alg:shortestPath:finalOptimal2} find the minimum of all these shortest paths, giving the optimal configuration. It is obvious that output
of this algorithm satisfies the Bellman's principle of optimality,
so the result is the shortest path and hence the optimal solution
to (\ref{eq:objFuncFrame}). 

\emph{Complexity:} When the vectors $\boldsymbol{\pi}_{\mathbf{m}}$
and $\boldsymbol{\xi}_{\mathbf{m}}$ are stored as linked-lists, Algorithm~\ref{alg:shortestPath}
has time-complexity $O\left(K^{2M}T\right)$, because the
minimization in Line~\ref{alg:shortestPath:iteration:bellman} requires
enumerating at most $K^{M}$ possible configurations, and
there can be at most $K^{M}T$ possible combinations of values of
$t$ and $\mathbf{m}$.

\begin{algorithm} 
\caption{Algorithm for solving (\ref{eq:objFuncFrame})} 
\label{alg:shortestPath} 
{\footnotesize
\begin{algorithmic}[1] 
\STATE Define variables $\mathbf{m}$ and $\mathbf{n}$ to represent configurations respectively in the current and previous iteration (level of graph)
\STATE Define vectors $\boldsymbol{\pi}_\mathbf{m}$ and $\boldsymbol{\xi}_\mathbf{m}$ for all $\mathbf{m},\mathbf{n}$, where $\boldsymbol{\pi}_\mathbf{m}$ (correspondingly, $\boldsymbol{\xi}_\mathbf{m}$) records the optimal configuration given that the configuration at the current (correspondingly, previous) timeslot of iteration is $\mathbf{m}$
\STATE Define variables $\mu_\mathbf{m}$ and $\nu_\mathbf{m}$ for all $\mathbf{m}$ to record the sum cost values from slot $t_0$ respectively to the current and previous slot of iteration, given that the configuration is $\mathbf{m}$ in the current or previous slot

\STATE Initialize $\mu_\mathbf{m} \leftarrow 0$ and $\boldsymbol{\pi}_\mathbf{m} \leftarrow \emptyset$ for all $\mathbf{m}$

\FOR {$t=t_0,...,t_{0}+T-1$}  \label{alg:shortestPath:iteration:start}
  \FOR {all $\mathbf{m}$}  \label{alg:shortestPath:forall1:start}
    \STATE $\nu_\mathbf{m} \leftarrow \mu_\mathbf{m}$
    \STATE $\boldsymbol{\xi}_\mathbf{m} \leftarrow \boldsymbol{\pi}_\mathbf{m}$
  \ENDFOR \label{alg:shortestPath:forall1:end}
  
  \FOR {all $\mathbf{m}$} \label{alg:shortestPath:forall2:start}
    \STATE $\mathbf{n}^* \leftarrow \arg\min_\mathbf{n} \left\{ \nu_{\mathbf{n}}+U_p(t,\mathbf{m})+W_p(t,\mathbf{n},\mathbf{m}) \right\}$   \label{alg:shortestPath:iteration:bellman}
    \STATE $\boldsymbol{\pi}_\mathbf{m}(t_0,...,t-1) \leftarrow \boldsymbol{\xi}_{\mathbf{n}^*}(t_0,...,t-1)$  \label{alg:shortestPath:iteration:placementUpdt}
    \STATE ${\pi}_\mathbf{m}(t) \leftarrow \mathbf{m}$
    \STATE $\mu_\mathbf{m} \leftarrow \nu_{\mathbf{n}^*}+U_p(t,\mathbf{m})+W_p(t,\mathbf{n}^*,\mathbf{m})$
  \ENDFOR \label{alg:shortestPath:forall2:end}
\ENDFOR   \label{alg:shortestPath:iteration:end}

\STATE $\mathbf{m}^* \leftarrow \arg\min_\mathbf{m} \mu_{\mathbf{m}}$ \label{alg:shortestPath:finalOptimal1}
\STATE $\boldsymbol{\pi}_\textrm{off}(t_0,...,t_{0}+T-1) \leftarrow \boldsymbol{\pi}_{\mathbf{m}^*}(t_0,...,t_{0}+T-1)$ \label{alg:shortestPath:finalOptimal2}

\RETURN $\boldsymbol{\pi}_\textrm{off}(t_0,...,t_{0}+T-1)$  
\end{algorithmic} 
}
\end{algorithm}

\section{Complexity Reduction and Online Service Placement \label{sec:onlinePlacement}}

The complexity of Algorithm \ref{alg:shortestPath} is exponential
in the number of instances $M$, so it is desirable to reduce the
complexity. In this section, we propose a method that can find an
approximate solution to (\ref{eq:objFuncFrame}) and, at the same
time, handle online instance arrivals and departures that are not
known beforehand. We will also show that (\ref{eq:objFuncFrame})
is NP-hard when $M$ is non-constant, which justifies the need to
solve (\ref{eq:objFuncFrame}) approximately in an efficient manner.

\subsection{Procedure \label{sub:ProcedureOnline}}

In the online case, we modify the procedure given in Algorithm~\ref{alg:offlineHighLevel}
so that instances are placed one-by-one, where each  placement greedily
minimizes the objective function given in (\ref{eq:objFuncFrame}),
while the configurations of previously placed instances remain unchanged. 

We assume that each service instance $i$ has a maximum lifetime $T_{\textrm{life}}(i)$,
denoting the maximum number of remaining timeslots (including the
current slot) that the instance remains in the system. The value of
$T_{\textrm{life}}(i)$ may be infinity for instances that can potentially
stay in the system for an arbitrary amount of time. The actual time
that the instance stays in the system may be shorter than $T_{\textrm{life}}(i)$,
but it cannot be longer than $T_{\textrm{life}}(i)$. When an instance
leaves the system before its maximum lifetime has elapsed, we say
that such a service instance departure is \emph{unpredictable}.

We use $\boldsymbol{\pi}_{\textrm{on}}$ to denote the configuration
$\boldsymbol{\pi}$ computed by online placement. The configuration
$\boldsymbol{\pi}_{\textrm{on}}$ is updated every time when an instance
arrives or unpredictably departs. At the beginning of the window (before
any instance has arrived), it is initiated as an all-zero matrix.

For a specific look-ahead window $\{t_{0},...,t_{0}+T-1\}$, when
service instance $i$ arrives in slot $t\in\{t_{0},...,t_{0}+T-1\}$,
we assume that this instance stays in the system until slot $t_{e}=\min\left\{ t+T_{\textrm{life}}(i)-1;t_{0}+T-1\right\} $,
and accordingly update the configuration by 
\begin{align}
 & \boldsymbol{\pi}_{\textrm{on}}(t,...,t_{e})=\arg\min_{\boldsymbol{\pi}(t_{a},...,t_{e})}\sum_{t=t_{a}}^{t_{e}}D_{\boldsymbol{\pi}(t-1,t)}^{t_{0}}(t)\label{eq:objFuncFrameOnline}\\
 & \textrm{s.t.}\:\boldsymbol{\pi}(t,...,t_{e})=\boldsymbol{\pi}_{\textrm{on}}(t,...,t_{e})\textrm{ except for column }i\nonumber 
\end{align}
Note that only the configuration of instance $i$ (which is
assumed to be stored in the $i$th column of $\boldsymbol{\pi}$) is found and updated
in (\ref{eq:objFuncFrameOnline}), the configurations of all other
instances $i'\neq i$ remain unchanged. The solution to (\ref{eq:objFuncFrameOnline})
can still be found with Algorithm \ref{alg:shortestPath}. The only
difference is that vectors $\mathbf{m}$ and $\mathbf{n}$ now become
scalar values within $\{1,...,K\}$, because we only consider the
configuration of a single instance $i$. The complexity in this case
becomes $O(K^{2}T)$. At the beginning of the window, all the instances
that have not departed after slot $t_{0}-1$ are seen as arrivals
in slot $t_{0}$, because we independently consider the placements
in each window of size $T$. When multiple instances arrive simultaneously,
an arbitrary arrival sequence is assigned to them; the instances
are still placed one-by-one by greedily minimizing (\ref{eq:objFuncFrameOnline}). 

When an instance $i$ unpredictably departs at the end of slot $t\in\{t_{0},...,t_{0}+T-1\}$,
we update $\boldsymbol{\pi}_{\textrm{on}}$ such that the $i$th column
of $\boldsymbol{\pi}_{\textrm{on}}(t+1,...,t_{0}+T-1)$ is set to
zero.

The online procedure described above is shown in Algorithm \ref{alg:onlineHighLevel}.
Recall that $\boldsymbol{\pi}_{\textrm{on}}(t,...,t_{e})$ and $\boldsymbol{\pi}_{\textrm{on}}(t+1,...,t_{0}+T-1)$
are both part of a larger configuration matrix $\boldsymbol{\pi}_{\textrm{on}}(t_{0},...,t_{0}+T-1)$
(see Section \ref{sub:Configurations}).

\emph{Complexity:} When placing a total of $M$ instances, for a specific
look-ahead window with size $T$, we can find the configurations of
these $M$ instances with complexity $O(K^{2}TM)$, because (\ref{eq:objFuncFrameOnline})
is solved for $M$ times, each with complexity $O(K^{2}T)$. 

\begin{algorithm} 
\caption{Procedure of online service placement} 
\label{alg:onlineHighLevel} 
{\footnotesize
\begin{algorithmic}[1] 
\STATE Initialize $t_{0}=1$
\LOOP
\STATE Initialize $\boldsymbol{\pi}_{\textrm{on}}(t_0,...,t_0+T-1)$ as an all-zero matrix
\FOR {each timeslot $t=t_0,...,t_0+T-1$}

\FOR {each instance $i$ arriving at the beginning of slot~$t$}
\STATE $t_{e}\leftarrow \min\left\{ t+T_{\textrm{life}}(i)-1;t_{0}+T-1\right\} $
\STATE Update $\boldsymbol{\pi}_{\textrm{on}}(t,...,t_{e})$ with the result from (\ref{eq:objFuncFrameOnline})
\STATE Apply configurations specified in the $i$th column of $\boldsymbol{\pi}_{\textrm{on}}(t,...,t_{e})$ for service instance $i$ in timeslots $t,...,t_{e}$ until instance $i$ departs
\ENDFOR

\FOR {each instance $i$ departing at the end of slot~$t$}
\STATE \label{algOnline:lineSetRemainingToZero} Set the $i$th column of $\boldsymbol{\pi}_{\textrm{on}}(t+1,...,t_0+T-1)$ to zero
\ENDFOR

\ENDFOR
\STATE $t_{0}\leftarrow t_{0}+T$
\ENDLOOP
\end{algorithmic} 
}
\end{algorithm} 

\emph{Remark:} It is important to note that in the above procedure,
the configuration $\boldsymbol{\pi}_{\textrm{on}}$ (and thus the
cost value $D_{\boldsymbol{\pi}(t-1,t)}^{t_{0}}(t)$ for \emph{any}
$t\in\{t_{0},...,t_{0}+T-1\}$) may vary upon instance arrival or
departure. It follows that the $T$-slot sum cost $\sum_{t=t_{0}}^{t_{0}+T-1}D_{\boldsymbol{\pi}_{\textrm{on}}(t-1,t)}^{t_{0}}(t)$
may vary whenever an instance arrives or departs at an arbitrary slot
$t\in\{t_{0},...,t_{0}+T-1\}$, and the value of $\sum_{t=t_{0}}^{t_{0}+T-1}D_{\boldsymbol{\pi}_{\textrm{on}}(t-1,t)}^{t_{0}}(t)$
stands for the predicted sum cost (over the current window containing
$T$ slots) under the \emph{current} configuration, assuming that
no new instance arrives and no instance unpredictably departs in the
future. This variation in configuration and cost upon instance arrival/departure is frequently mentioned in the analysis
presented next.

\subsection{Performance Analysis \label{sub:PerformanceAnalysisOnlineAlg}}

It is clear that for a single look-ahead window, Algorithm~\ref{alg:onlineHighLevel}
has polynomial time-complexity while Algorithm \ref{alg:offlineHighLevel}
has exponential time-complexity. In this subsection, we show the NP-hardness
of the offline service placement problem, and discuss the optimality
gap between the online algorithm and the optimal offline placement.
Note that we only focus on a single look-ahead window in this subsection.
The interplay of multiple look-ahead windows and the impact of the
window size will be considered in Section~\ref{sec:findLookAhead}.

\subsubsection{Definitions \label{sub:PerfAnalysisDefinitions}}

For simplicity, we analyze the performance for a slightly restricted
(but still general) class of cost functions. We introduce some additional
definitions next (see Appendix~\ref{app:SummaryNotations} for a
summary of notations). 

\textbf{Indexing of Instances:} Here, we assume
that the instance with lowest index in the current window $\{t_{0},...,t_{0}+T-1\}$
has index $i=1$, and the last instance that arrives before the \emph{current
time of interest} has index $i=M$, where the current time of interest
can be any time within the current window. With this definition, $M$
does \emph{not} need to be the largest index in window $\{t_{0},...,t_{0}+T-1\}$.
Instead, it can be the index of \emph{any} instance that arrives within
$\{t_{0},...,t_{0}+T-1\}$. The cost of placing up to (and including)
instance $M$ is considered, where some instances $i\leq M$ may have
already departed from the system.

\textbf{Possible Configuration Sequence:} When considering a window
of $T$ slots, we define the set of all possible configurations of
a single instance as a set of $T$-dimensional vectors $\Lambda\triangleq\{\left(\lambda_{1},...,\lambda_{T}\right):\lambda_{n}\in\{0,1,...,K\},\forall n\in\{1,...,T\},$
where $\lambda_{n}$ is non-zero for at most one block of consecutive
values of $n\}$. We also define a vector $\boldsymbol{\lambda}\in\Lambda$
to represent one \emph{possible configuration sequence} of a single
service instance across these $T$ consecutive slots. For any instance
$i$, the $i$th column of configuration matrix $\boldsymbol{\pi}(t_{0},...,t_{0}+T-1)$
is equal to one particular value of $\boldsymbol{\lambda}$. 

We also define a binary variable $x_{i\boldsymbol{\lambda}}$, where
$x_{i\boldsymbol{\lambda}}=1$ if instance $i$ is placed according
to configuration sequence $\boldsymbol{\lambda}$ across slots $\{t_{0},...,t_{0}+T-1\}$
(i.e., the $i$th column of $\boldsymbol{\pi}(t_{0},...,t_{0}+T-1)$
is equal to $\boldsymbol{\lambda}$), and $x_{i\boldsymbol{\lambda}}=0$
otherwise. We always have $\sum_{\boldsymbol{\lambda}\in\Lambda}x_{i\boldsymbol{\lambda}}=1$
for all $i\in\{1,...,M\}$. 

We note that the values of $x_{i\boldsymbol{\lambda}}$ may vary over
time due to arrivals and unpredictable departures of instances, which
can be seen from Algorithm \ref{alg:onlineHighLevel} and by noting
the relationship between $\boldsymbol{\lambda}$ and $\boldsymbol{\pi}$.
Before instance $i$ arrives, $x_{i\boldsymbol{\lambda}_{0}}=1$ for
$\boldsymbol{\lambda}_{0}=[0,...,0]$ which contains all zeros, and
$x_{i\boldsymbol{\lambda}}=0$ for $\boldsymbol{\lambda}\neq\boldsymbol{\lambda}_{0}$.
Upon arrival of instance $i$, we have $x_{i\boldsymbol{\lambda}_{0}}=0$
and $x_{i\boldsymbol{\lambda}_{1}}=1$ for a particular $\boldsymbol{\lambda}_{1}$.
When instance $i$ unpredictably departs at slot $t'$, its configuration
sequence switches from $\boldsymbol{\lambda}_{1}$ to an alternative
(but partly correlated) sequence $\boldsymbol{\lambda}'_{1}$ (i.e.,
$(\boldsymbol{\lambda}'_{1})_{t}=(\boldsymbol{\lambda}_{1})_{t}$
for $t\leq t'$ and $(\boldsymbol{\lambda}'_{1})_{t}=0$ for $t>t'$,
where $(\boldsymbol{\lambda})_{t}$ denotes the $t$th element of
$\boldsymbol{\lambda}$), according to Line \ref{algOnline:lineSetRemainingToZero}
in Algorithm \ref{alg:onlineHighLevel}, after which $x_{i\boldsymbol{\lambda}_{1}}=0$
and $x_{i\boldsymbol{\lambda}'_{1}}=1$.

\textbf{Resource Consumption: }We assume that the costs are related
to the resource consumption, and for the ease of presentation, we
consider two types of resource consumptions. The first type is associated
with serving user requests, i.e., data transmission and processing
when a cloud is running a service instance, which we refer to as the
\emph{local resource consumption}. The second type is associated with
migration, i.e., migrating an instance from one cloud to another
cloud, which we refer to as the \emph{migration resource consumption}. 

If we know that instance $i$ operates under configuration sequence
$\boldsymbol{\lambda}$, then we know whether instance $i$ is placed
on cloud $k$ in slot $t$, for any $k\in\{1,...,K\}$ and $t\in\{t_{0},...,t_{0}+T-1\}$.
We also know whether instance $i$ is migrated from cloud $k$ to
cloud $l$ ($l\in\{1,2,...,K\}$) between slots $t-1$ and $t$. We
use $a_{i\boldsymbol{\lambda}k}(t)\geq0$ to denote the local resource
consumption at cloud $k$ in slot $t$ when instance $i$ is operating
under $\boldsymbol{\lambda}$, where $a_{i\boldsymbol{\lambda}k}(t)=0$
if $(\boldsymbol{\lambda})_{t}\neq k$. We use $b_{i\boldsymbol{\lambda}kl}(t)\geq0$
to denote the\emph{ }migration resource consumption when instance
$i$ operating under $\boldsymbol{\lambda}$ is assigned to cloud
$k$ in slot $t-1$ and to cloud $l$ in slot $t$, where $b_{i\boldsymbol{\lambda}kl}(t)=0$
if $(\boldsymbol{\lambda})_{t-1}\neq k$ or $(\boldsymbol{\lambda})_{t}\neq l$,
and we note that the configuration in slot $t_{0}-1$ (before the
start of the current window) is assumed to be given and thus independent
of $\boldsymbol{\lambda}$. The values of $a_{i\boldsymbol{\lambda}k}(t)$
and $b_{i\boldsymbol{\lambda}kl}(t)$ are either service-specific
parameters that are known beforehand, or they can be found as part
of the cost prediction.

We denote the sum local resource consumption at cloud $k$ by $y_{k}(t)\triangleq\sum_{i=1}^{M}\sum_{\boldsymbol{\lambda}\in\Lambda}a_{i\boldsymbol{\lambda}k}(t)x_{i\boldsymbol{\lambda}}$,
and denote the sum migration resource consumption from cloud $k$
to cloud $l$ by $z_{kl}(t)\triangleq\sum_{i=1}^{M}\sum_{\boldsymbol{\lambda}\in\Lambda}b_{i\boldsymbol{\lambda}kl}(t)x_{i\boldsymbol{\lambda}}$.
We may omit the argument $t$ in the following discussion.

\emph{Remark:} The local and migration resource consumptions defined
above can be related to CPU and communication bandwidth occupation, etc.,
or the sum of them. We only consider these two types of resource consumption
for the ease of presentation. By applying the same theoretical framework,
the performance gap results (presented later) can be extended to incorporate
multiple types of resources and more sophisticated cost functions,
and similar results yield for the general case.

\textbf{Costs: }We refine the costs defined in Section \ref{sub:CostDef}
by considering the cost for each cloud or each pair of clouds. The
local cost at cloud $k$ in timeslot $t$ is denoted by $u_{k,t}\left(y_{k}(t)\right)$.
When an instance is initiated in slot $t$, the local cost in slot
$t$ also includes the cost of initial placement of the corresponding
instance. The migration cost\emph{ }from cloud $k$ to cloud $l$
between slots $t-1$ and $t$ is denoted by\emph{ }$w_{kl,t}\left(y_{k}(t-1),y_{l}(t),z_{kl}(t)\right)$.
Besides $z_{kl}(t)$, the migration cost is also related to $y_{k}(t-1)$
and $y_{l}(t)$, because additional processing may be needed for migration,
and the cost for such processing can be related to the current load
at clouds $k$ and $l$. The functions $u_{k,t}\left(y\right)$ and
$w_{kl,t}\left(y_{k},y_{l},z_{kl}\right)$ can be different for different
slots $t$ and different clouds $k$ and $l$, and they can depend
on many factors, such as network condition, background load of the
cloud, etc. Noting that any constant term added to the cost function
does not affect the optimal configuration, we set $u_{k,t}(0)=0$
and $w_{kl,t}(0,0,0)=0$. We also set $w_{kl,t}(\cdot,\cdot,0)=0$,
because there is no migration cost if we do not migrate. There is
also no migration cost at the start of the first timeslot, thus we
set $w_{kl,t}(\cdot,\cdot,\cdot)=0$ for $t=1$. With these definitions,
the aggregated costs $U(t,\boldsymbol{\pi}(t))$ and $W(t,\boldsymbol{\pi}(t-1),\boldsymbol{\pi}(t))$
can be explicitly expressed as
\begin{align}
U(t,\boldsymbol{\pi}(t)) & \triangleq\sum_{k=1}^{K}u_{k,t}\left(y_{k}(t)\right)\label{eq:localCostBasedOnPiDef}\\
W\!(t,\boldsymbol{\pi}(t\!-\!1),\!\boldsymbol{\pi}(t))\!\! & \triangleq\!\!\sum_{k=1}^{K}\!\sum_{l=1}^{K}\! w_{kl,t}\!\left(y_{k}(t\!-\!1),y_{l}(t),z_{kl}(t)\right)\label{eq:migCostBasedOnPiDef}
\end{align}

We then assume that the following assumption is satisfied for the
cost functions, which holds for a large class of practical cost functions,
such as those related to the delay performance or load balancing \cite{refViNEYard}.

\begin{assumption} \label{condition:costFunc} Both $u_{k,t}(y)$
and $w_{kl,t}(y_{k},y_{l},z_{kl})$ are \emph{convex non-decreasing}
functions of $y$ (or $y_{k},y_{l},z_{kl}$), satisfying: 
\begin{itemize}
\item $\frac{du_{k,t}}{dy}(0)>0$
\item $\frac{\partial w_{kl,t}}{\partial z_{kl}}\left(\cdot,\cdot,0\right)>0$
for $t\geq2$
\end{itemize}
for all $t$, $k$, and $l$ (unless stated otherwise), where $\frac{du_{k,t}}{dy}(0)$
denotes the derivative of $u_{k,t}$ with respect to (w.r.t.) $y$
evaluated at $y=0$, and $\frac{\partial w_{kl,t}}{\partial z_{kl}}\left(\cdot,\cdot,0\right)$
denotes the partial derivative of $w_{kl,t}$ w.r.t. $z_{kl}$ evaluated
at $z_{kl}=0$ and arbitrary $y_{k}$ and $y_{l}$. \end{assumption}

\textbf{Vector Notation:} To simplify the presentation, we use vectors
to denote a collection of variables across multiple clouds, slots,
or configuration sequences. For simplicity, we index each element
in the vector with multiple indexes that are related to the index
of the element, and use the general notion $\left(\mathbf{g}\right)_{h_{1}h_{2}}$
(or $\left(\mathbf{g}\right)_{h_{1}h_{2}h_{3}}$) to denote the $(h_{1},h_{2})$th
(or $(h_{1},h_{2},h_{3})$th) element in an arbitrary vector $\mathbf{g}$.
Because we know the range of each index, multiple indexes can be easily
mapped to a single index. We regard each vector as a \emph{single-indexed}
vector for the purpose of vector concatenation (i.e., joining two
vectors into one vector) and gradient computation later. 

We define vectors $\mathbf{y}$ (with $KT$ elements), $\mathbf{z}$
(with $K^{2}T$ elements), $\mathbf{x}$ (with $MK^{T}$ elements),
$\mathbf{a}_{i\boldsymbol{\lambda}}$ (with $KT$ elements), and $\mathbf{b}_{i\boldsymbol{\lambda}}$
(with $K^{2}T$ elements), for every value of $i\in\{1,2,...,M\}$
and $\boldsymbol{\lambda}\in\Lambda$. Different values of $i$ and
$\boldsymbol{\lambda}$ correspond to different vectors $\mathbf{a}_{i\boldsymbol{\lambda}}$
and $\mathbf{b}_{i\boldsymbol{\lambda}}$. The elements in these vectors
are defined as follows:
\[
\left(\mathbf{y}\right)_{kt}\triangleq y_{k}(t),\:\left(\mathbf{z}\right)_{klt}\triangleq z_{kl}(t),\:\left(\mathbf{x}\right)_{i\boldsymbol{\lambda}}\triangleq x_{i\boldsymbol{\lambda}},
\]
\[
\left(\mathbf{a}_{i\boldsymbol{\lambda}}\right)_{kt}\triangleq a_{i\boldsymbol{\lambda}k}(t),\:\left(\mathbf{b}_{i\boldsymbol{\lambda}}\right)_{klt}\triangleq b_{i\boldsymbol{\lambda}kl}(t)
\]
As discussed earlier in this section, $x_{i\boldsymbol{\lambda}}$
may unpredictably change over time due to arrivals and departures
of service instances. It follows that the vectors $\mathbf{x}$, $\mathbf{y}$,
and $\mathbf{z}$ may vary over time (recall that $\mathbf{y}$ and
$\mathbf{z}$ are dependent on $\mathbf{x}$ by definition). The vectors
$\mathbf{a}_{i\boldsymbol{\lambda}}$ and $\mathbf{b}_{i\boldsymbol{\lambda}}$
are constant. 

\textbf{Alternative Cost Expression:} Using the above definitions,
we can write the sum cost of all $T$ slots as follows 
\begin{align}
 & \widetilde{D}\left(\mathbf{x}\right)\triangleq\widetilde{D}\left(\mathbf{y},\mathbf{z}\right)\triangleq\sum_{t=t_{0}}^{t_{0}+T-1}\Bigg[\sum_{k=1}^{K}u_{k,t}\left(y_{k}(t)\right)\nonumber \\
 & \quad\quad\quad\quad\quad+\sum_{k=1}^{K}\sum_{l=1}^{K}w_{kl,t}\left(y_{k}(t-1),y_{l}(t),z_{kl}(t)\right)\Bigg]\label{eq:costDef_T_xi}
\end{align}
where the cost function $\widetilde{D}(\cdot)$ can be expressed either
in terms of $\mathbf{x}$ or in terms of $\left(\mathbf{y},\mathbf{z}\right)$.
The cost function defined in (\ref{eq:costDef_T_xi}) is equivalent
to $\sum_{t=t_{0}}^{t_{0}+T-1}D_{\boldsymbol{\pi}(t-1,t)}^{t_{0}}(t)$,
readers are also referred to the per-slot cost defined in (\ref{eq:costDef})
for comparison. The value of $\widetilde{D}\left(\mathbf{x}\right)$
or, equivalently, $\widetilde{D}\left(\mathbf{y},\mathbf{z}\right)$
may vary over time due to service arrivals and unpredictable service
instance departures as discussed above.

\subsubsection{Equivalent Problem Formulation }

With the above definitions, the offline service placement problem
in (\ref{eq:objFuncFrame}) can be equivalently formulated as the
following, where our goal is to find the optimal configuration for
all service instances $1,2,...,M$ (we consider the offline case here
where we know when each instance arrives and no instance will unpredictably
leave after they have been placed):
\begin{align}
\min_{\mathbf{x}}\quad & \widetilde{D}\left(\mathbf{x}\right)\label{eq:optWithVectors}\\
s.t.\quad & \sum_{\boldsymbol{\lambda}\in\Lambda_{i}}x_{i\boldsymbol{\lambda}}=1,\forall i\in\{1,2,...,M\}\nonumber \\
 & x_{i\boldsymbol{\lambda}}\in\{0,1\},\forall i\in\{1,2,...,M\},\boldsymbol{\lambda}\in\Lambda_{i}\nonumber 
\end{align}
where $\Lambda_{i}\subseteq\Lambda$ is a subset of feasible configuration
sequences for instance $i$, i.e., sequences that contain those vectors
whose elements are non-zero starting from the slot at which $i$ arrives
and ending at the slot at which $i$ departs, while all other elements
of the vectors are zero.

We now show that (\ref{eq:optWithVectors}), and thus (\ref{eq:objFuncFrame}),
is NP-hard even in the offline case, which further justifies the need
for an approximation algorithm for solving the problem.

\begin{proposition}\textbf{ \label{prop:NPHardness} (NP-Hardness)}
The problem in (\ref{eq:optWithVectors}) in the offline sense, and
thus (\ref{eq:objFuncFrame}), is NP-hard.\end{proposition}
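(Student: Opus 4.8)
The plan is to prove NP-hardness by a polynomial-time reduction from the \textsc{Partition} problem, which is well known to be NP-complete. I would first isolate a special case of (\ref{eq:optWithVectors}) that strips away every feature irrelevant to the hardness: take a single-slot window, i.e. $T=1$ with $t_0=1$. Then the first-slot migration cost vanishes by definition ($w_{kl,1}(\cdot,\cdot,\cdot)=0$), each feasible configuration sequence $\boldsymbol{\lambda}$ of an instance collapses to a single cloud choice in $\{1,\dots,K\}$, and the problem degenerates to the pure assignment problem
\[
\min \sum_{k=1}^{K} u_{k,1}\!\left(y_k\right), \qquad y_k=\sum_{i=1}^{M}\sum_{\boldsymbol{\lambda}}a_{i\boldsymbol{\lambda}k}\,x_{i\boldsymbol{\lambda}},
\]
in which we place $M$ instances onto $K$ clouds so as to minimize a separable convex function of the resulting cloud loads. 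Showing that even this restricted load-assignment problem is NP-hard suffices.

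Next I would build the reduction. Given a \textsc{Partition} instance with positive integers $s_1,\dots,s_n$ summing to $2B$, set $K=2$ and $M=n$, and let instance $i$ have local resource consumption $s_i$ on whichever cloud hosts it (formally $a_{i\boldsymbol{\lambda}k}=s_i$ when $\boldsymbol{\lambda}$ assigns $i$ to cloud $k$, and $0$ otherwise). Then $y_1+y_2=\sum_i s_i=2B$ for every feasible assignment, so a choice of subset corresponds exactly to a placement. To keep the constructed instance inside the class governed by Assumption~\ref{condition:costFunc}, I would use the strictly convex, non-decreasing local cost $u_{k,1}(y)=y^2+\delta y$ with any fixed $\delta>0$; this gives $\tfrac{du_{k,1}}{dy}(0)=\delta>0$, and the migration condition (required only for $t\ge 2$) is vacuous. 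The objective then equals $y_1^2+y_2^2+2B\delta$, whose variable part is $y_1^2+y_2^2$.

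Finally I would close the equivalence. Using $y_1^2+y_2^2=4B^2-2y_1y_2$ with $y_1+y_2=2B$ fixed, the objective is minimized exactly when $y_1=y_2=B$, attaining the value $2B^2+2B\delta$, while any assignment with unequal loads is strictly worse. Hence the constructed placement problem has optimal cost equal to $2B^2+2B\delta$ if and only if some subset of $\{s_i\}$ sums to $B$, i.e. if and only if the \textsc{Partition} instance is a YES-instance. Since the construction is clearly polynomial in the input size, (\ref{eq:optWithVectors}) is NP-hard, and because it is equivalent to (\ref{eq:objFuncFrame}), that problem is NP-hard as well.

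The argument is short, so there is no deep obstacle; the only points needing care are (i) arranging the construction so that the hard instances still obey Assumption~\ref{condition:costFunc} — this is precisely why I add the linear term $\delta y$ rather than using $u(y)=y^2$ directly, whose derivative at $0$ would vanish — and (ii) confirming that restricting to $T=1,\ t_0=1$ is a legitimate instance of the general problem, so that hardness of the special case transfers to the full formulation. If one prefers to avoid the degenerate single-slot window, an alternative is to take $T\ge 2$ and assign a sufficiently large migration cost so that any optimal solution keeps every instance on one cloud throughout the window, reducing again to the same convex load-assignment core; I would only fall back to this variant if the use of $T=1$ were deemed objectionable.
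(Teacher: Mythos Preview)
Your proof is correct and follows the same overall strategy as the paper: reduce from \textsc{Partition}, specialize to $T=1$, $t_0=1$, $K=2$, so that migration costs vanish and the problem becomes a two-machine load-assignment problem under a convex separable objective satisfying Assumption~\ref{condition:costFunc}. The only substantive difference is the choice of local cost. The paper uses a piecewise-linear function with a kink at a normalized capacity of $1$, which then requires an extra discretization argument (showing that loads can only be integer multiples of a granularity $c$, so overshooting the capacity forces the cost strictly above a threshold). Your quadratic choice $u_{k,1}(y)=y^2+\delta y$ sidesteps that: with $y_1+y_2$ fixed, strict convexity makes the balanced split the unique minimizer directly, so the YES/NO gap is immediate. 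Both constructions respect Assumption~\ref{condition:costFunc}; yours is a bit more elementary, while the paper's piecewise-linear version has the flavor of a bin-packing decision problem. Either way, the reduction is polynomial and the conclusion follows.
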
\begin{proof}
Problem (\ref{eq:optWithVectors}) can be reduced from the partition
problem, which is known to be NP-complete \cite[Corollary 15.28]{korte2002combinatorial}.
See Appendix \ref{sec:NPHardProof} for details. \end{proof}

An online version of problem (\ref{eq:optWithVectors}) can be constructed
by updating $\Lambda_{i}$ over time. When an arbitrary instance $i$
has not yet arrived, we define $\Lambda_{i}$ as the set containing
an all-zero vector. After instance $i$ arrives, we assume that it
will run in the system until $t_{e}$ (defined in Section \ref{sub:ProcedureOnline}),
and update $\Lambda_{i}$ to conform to the arrival and departure
times of instance $i$ (see above). After instance $i$ departs, $\Lambda_{i}$
can be further updated so that the configurations corresponding to
all remaining slots are zero.

\subsubsection{Performance Gap}

As discussed earlier, Algorithm \ref{alg:onlineHighLevel} solves
(\ref{eq:optWithVectors}) in a greedy manner, where each service
instance $i$ is placed to greedily minimize 
in (\ref{eq:optWithVectors}). In the following, we compare the result
from Algorithm \ref{alg:onlineHighLevel} with the true optimal result,
where the optimal result assumes offline placement. We use $\mathbf{x}$
and $\left(\mathbf{y},\mathbf{z}\right)$ to denote the result from
Algorithm \ref{alg:onlineHighLevel}, and use $\mathbf{x}^{*}$ and
$\left(\mathbf{y}^{*},\mathbf{z}^{*}\right)$ to denote the offline
optimal result to (\ref{eq:optWithVectors}).

\begin{lemma} \textbf{\label{lemma:convexityOfH} (Convexity of $\widetilde{D}(\cdot)$)}
When Assumption \ref{condition:costFunc} is satisfied, the cost function
$\widetilde{D}\left(\mathbf{x}\right)$ or, equivalently, $\widetilde{D}\left(\mathbf{y},\mathbf{z}\right)$
is a non-decreasing convex function w.r.t. $\mathbf{x}$, and it is
also a non-decreasing convex function w.r.t. $\mathbf{y}$ and $\mathbf{z}$.
\end{lemma}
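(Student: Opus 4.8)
The plan is to prove the two claims in sequence: first establish convexity and monotonicity of $\widetilde{D}$ as a function of $(\mathbf{y},\mathbf{z})$, and then transfer these properties to $\mathbf{x}$ by exploiting the fact that $(\mathbf{y},\mathbf{z})$ is a non-negative linear image of $\mathbf{x}$.

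First I would work with the expression $\widetilde{D}(\mathbf{y},\mathbf{z})$ in (\ref{eq:costDef_T_xi}), which is a finite sum over $t,k,l$ of terms of the form $u_{k,t}(y_k(t))$ and $w_{kl,t}(y_k(t-1),y_l(t),z_{kl}(t))$. Each such term depends only on a small subset of the coordinates of the vector $(\mathbf{y},\mathbf{z})$. By Assumption \ref{condition:costFunc}, $u_{k,t}$ is convex and non-decreasing in its single argument and $w_{kl,t}$ is jointly convex and componentwise non-decreasing in $(y_k,y_l,z_{kl})$. A function of a few variables that is convex, regarded as a function of the full vector that is constant in the remaining coordinates, is convex on the whole space, because selecting the relevant coordinates is a linear projection and convexity is preserved under composition with linear maps. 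Summing finitely many convex functions then yields a convex function, so $\widetilde{D}(\mathbf{y},\mathbf{z})$ is convex in $(\mathbf{y},\mathbf{z})$. Monotonicity follows likewise: each summand is non-decreasing in every coordinate it depends on (and trivially constant, hence non-decreasing, in the others), so the sum is non-decreasing w.r.t.\ $\mathbf{y}$ and $\mathbf{z}$ in the componentwise partial order. The only boundary subtlety is the term $y_k(t_0-1)$ when $t=t_0$, which is a fixed given quantity rather than a free coordinate; fixing it leaves $w_{kl,t_0}$ convex and non-decreasing in its remaining arguments, so this does not affect the argument.

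Next I would recall from the definitions that $y_k(t)=\sum_{i=1}^{M}\sum_{\boldsymbol{\lambda}\in\Lambda}a_{i\boldsymbol{\lambda}k}(t)\,x_{i\boldsymbol{\lambda}}$ and $z_{kl}(t)=\sum_{i=1}^{M}\sum_{\boldsymbol{\lambda}\in\Lambda}b_{i\boldsymbol{\lambda}kl}(t)\,x_{i\boldsymbol{\lambda}}$, so that $(\mathbf{y},\mathbf{z})=\mathbf{P}\mathbf{x}$ for a fixed matrix $\mathbf{P}$ whose entries are the constants $a_{i\boldsymbol{\lambda}k}(t)\ge 0$ and $b_{i\boldsymbol{\lambda}kl}(t)\ge 0$; in particular $\mathbf{P}$ is a non-negative linear map. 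Writing $\widetilde{D}(\mathbf{x})=\widetilde{D}(\mathbf{P}\mathbf{x})$, convexity in $\mathbf{x}$ is immediate, since the composition of the convex function $\widetilde{D}(\cdot)$ with the linear map $\mathbf{P}$ is convex. For monotonicity in $\mathbf{x}$, observe that $\mathbf{P}\ge 0$ entrywise implies that $\mathbf{x}\le\mathbf{x}'$ (componentwise) gives $\mathbf{P}\mathbf{x}\le\mathbf{P}\mathbf{x}'$; combined with the monotonicity of $\widetilde{D}$ in $(\mathbf{y},\mathbf{z})$ established above, this yields $\widetilde{D}(\mathbf{P}\mathbf{x})\le\widetilde{D}(\mathbf{P}\mathbf{x}')$, i.e.\ $\widetilde{D}$ is non-decreasing in $\mathbf{x}$.

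I expect no deep obstacle; the result is essentially bookkeeping built on two standard facts, namely that a sum of convex functions is convex and that convexity is preserved under composition with a linear map. The one point requiring care is making the reduction precise: verifying that each term of (\ref{eq:costDef_T_xi}) is convex as a function of the full coordinate vector even though it only involves a few coordinates, and recognizing that the non-negativity of the coefficients $a_{i\boldsymbol{\lambda}k}(t)$ and $b_{i\boldsymbol{\lambda}kl}(t)$ is exactly what is needed to carry monotonicity through the linear map $\mathbf{P}$. I would also remark that the strict derivative conditions in Assumption \ref{condition:costFunc} are not used for this lemma; only the convexity and non-decreasing properties are invoked here, with the strict-positivity conditions reserved for the subsequent performance-gap analysis.
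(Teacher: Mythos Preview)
Your proposal is correct and follows essentially the same approach as the paper: the paper's proof also invokes Assumption~\ref{condition:costFunc} to conclude that each $u_{k,t}$ and $w_{kl,t}$ is convex non-decreasing, observes that $y_k(t)$ and $z_{kl}(t)$ are linear in $x_{i\boldsymbol{\lambda}}$ with non-negative weights, and then appeals to the closure of convex non-decreasing functions under sums and composition with linear maps. Your version is simply more explicit about the intermediate steps (the projection argument, the matrix $\mathbf{P}$, the boundary term at $t_0$, and the irrelevance of the strict-derivative conditions), but the underlying argument is the same.
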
\begin{proof} According to Assumption \ref{condition:costFunc},
$u_{k,t}\left(y_{k}(t)\right)$ and $w_{kl,t}\left(y_{k}(t-1),y_{l}(t),z_{kl}(t)\right)$
are non-decreasing convex functions. Because $y_{k}(t)$ and $z_{kl}(t)$
are linear mappings of $x_{i\boldsymbol{\lambda}}$ with non-negative
weights for any $t$, $k$, and $l$, and also because the sum of
non-decreasing convex functions is still a non-decreasing convex function,
the lemma holds \cite[Section 3.2]{boyd2004convex}.\end{proof}

In the following, we use $\nabla_{\mathbf{x}}$ to denote the gradient
w.r.t. \emph{each element} in vector $\mathbf{x}$, i.e., the $(i,\boldsymbol{\lambda})$th
element of $\nabla_{\mathbf{x}}\widetilde{D}(\mathbf{x})$ is $\frac{\partial\widetilde{D}(\mathbf{x})}{\partial x_{i\boldsymbol{\lambda}}}$.
Similarly, we use $\nabla_{\mathbf{y,z}}$ to denote the gradient
w.r.t. each element in vector $(\mathbf{y,z})$, where $(\mathbf{y,z})$
is a vector that concatenates vectors $\mathbf{y}$ and $\mathbf{z}$. 

\begin{proposition}\textbf{ \label{prop:performanceGapResult} (Performance
Gap)} When Assumption \ref{condition:costFunc} is satisfied, we have
\begin{equation}
\widetilde{D}(\mathbf{x})\leq\widetilde{D}(\phi\psi\mathbf{x}^{*})\label{eq:performGap_x}
\end{equation}
or, equivalently, 
\begin{equation}
\widetilde{D}\left(\mathbf{y},\mathbf{z}\right)\leq\widetilde{D}\left(\phi\psi\mathbf{y}^{*},\phi\psi\mathbf{z}^{*}\right)\label{eq:performGap_yz}
\end{equation}
where $\phi$ and $\psi$ are constants satisfying
\begin{equation}
\phi\geq\frac{\nabla_{\mathbf{y,z}}\widetilde{D}\left(\mathbf{y}_{\textrm{max}}+\mathbf{a}_{i\boldsymbol{\lambda}},\mathbf{z}_{\textrm{max}}+\mathbf{b}_{i\boldsymbol{\lambda}}\right)\cdot\left(\mathbf{a}_{i\boldsymbol{\lambda}},\mathbf{b}_{i\boldsymbol{\lambda}}\right)}{\nabla_{\mathbf{y,z}}\widetilde{D}\left(\mathbf{y},\mathbf{z}\right)\cdot\left(\mathbf{a}_{i\boldsymbol{\lambda}},\mathbf{b}_{i\boldsymbol{\lambda}}\right)}\label{eq:alphaDef}
\end{equation}
\begin{equation}
\psi\geq\frac{\nabla_{\mathbf{x}}\widetilde{D}\left(\mathbf{x}\right)\cdot\mathbf{x}}{\widetilde{D}(\mathbf{x})}=\frac{\nabla_{\mathbf{y,z}}\widetilde{D}\left(\mathbf{y},\mathbf{z}\right)\cdot\left(\mathbf{y},\mathbf{z}\right)}{\widetilde{D}\left(\mathbf{y},\mathbf{z}\right)}\label{eq:betaDef}
\end{equation}
for any $i$ and $\boldsymbol{\lambda}\in\Lambda_{i}$, in which $\mathbf{y}_{\textrm{max}}$
and $\mathbf{z}_{\textrm{max}}$ respectively denote the maximum values
of $\mathbf{y}$ and $\mathbf{z}$ (the maximum is taken element-wise)
\emph{after any number of instance arrivals} within slots $\{t_{0},...,t_{0}+T-1\}$
until the current time of interest (at which time the latest arrived
instance has index $M$), $\left(\mathbf{a}_{i\boldsymbol{\lambda}},\mathbf{b}_{i\boldsymbol{\lambda}}\right)$
is a vector that concatenates $\mathbf{a}_{i\boldsymbol{\lambda}}$
and $\mathbf{b}_{i\boldsymbol{\lambda}}$, and ``$\cdot$'' denotes
the dot-product. \end{proposition}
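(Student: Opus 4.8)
The plan is to combine a greedy telescoping argument with the two structural properties of $\widetilde{D}$ encoded by $\phi$ and $\psi$, exploiting the convexity and monotonicity established in Lemma~\ref{lemma:convexityOfH}. Since the two forms (\ref{eq:performGap_x}) and (\ref{eq:performGap_yz}) are equivalent (the maps $\mathbf{x}\mapsto\mathbf{y}$ and $\mathbf{x}\mapsto\mathbf{z}$ are linear and homogeneous, so $\widetilde{D}(\phi\psi\mathbf{x}^{*})=\widetilde{D}(\phi\psi\mathbf{y}^{*},\phi\psi\mathbf{z}^{*})$), I would work entirely in the $(\mathbf{y},\mathbf{z})$ representation. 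Let $\boldsymbol{\lambda}_{i}^{*}$ be the configuration sequence that the offline optimum $\mathbf{x}^{*}$ assigns to instance $i$, so that $\sum_{i=1}^{M}\mathbf{a}_{i\boldsymbol{\lambda}_{i}^{*}}=\mathbf{y}^{*}$ and $\sum_{i=1}^{M}\mathbf{b}_{i\boldsymbol{\lambda}_{i}^{*}}=\mathbf{z}^{*}$. Writing $(\mathbf{y}^{(i)},\mathbf{z}^{(i)})$ for the online state after instances $1,\dots,i$ have been placed by Algorithm~\ref{alg:onlineHighLevel}, and using $\widetilde{D}(0,0)=0$, I would telescope $\widetilde{D}(\mathbf{y},\mathbf{z})=\sum_{i=1}^{M}[\widetilde{D}(\mathbf{y}^{(i)},\mathbf{z}^{(i)})-\widetilde{D}(\mathbf{y}^{(i-1)},\mathbf{z}^{(i-1)})]$. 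Because $\boldsymbol{\lambda}_{i}^{*}$ is feasible for instance $i$ and the greedy step minimizes the objective over all feasible configurations while holding instances $1,\dots,i-1$ fixed, each term is at most the marginal cost of instead placing $i$ under $\boldsymbol{\lambda}_{i}^{*}$, which convexity bounds by the gradient at the upper endpoint dotted with the increment, i.e. by $\nabla_{\mathbf{y,z}}\widetilde{D}(\mathbf{y}^{(i-1)}+\mathbf{a}_{i\boldsymbol{\lambda}_{i}^{*}},\mathbf{z}^{(i-1)}+\mathbf{b}_{i\boldsymbol{\lambda}_{i}^{*}})\cdot(\mathbf{a}_{i\boldsymbol{\lambda}_{i}^{*}},\mathbf{b}_{i\boldsymbol{\lambda}_{i}^{*}})$.

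Next I would push the base point up to the peak load. Since every intermediate state satisfies $\mathbf{y}^{(i-1)}\leq\mathbf{y}_{\textrm{max}}$ and $\mathbf{z}^{(i-1)}\leq\mathbf{z}_{\textrm{max}}$ element-wise and the increment is non-negative, I would argue that this directional derivative does not decrease when the base point is raised to $(\mathbf{y}_{\textrm{max}},\mathbf{z}_{\textrm{max}})$; the definition of $\phi$ in (\ref{eq:alphaDef}), whose numerator is exactly the gradient evaluated at $(\mathbf{y}_{\textrm{max}}+\mathbf{a}_{i\boldsymbol{\lambda}},\mathbf{z}_{\textrm{max}}+\mathbf{b}_{i\boldsymbol{\lambda}})$, then gives the per-instance bound $\nabla_{\mathbf{y,z}}\widetilde{D}(\mathbf{y}_{\textrm{max}}+\mathbf{a}_{i\boldsymbol{\lambda}_{i}^{*}},\mathbf{z}_{\textrm{max}}+\mathbf{b}_{i\boldsymbol{\lambda}_{i}^{*}})\cdot(\mathbf{a}_{i\boldsymbol{\lambda}_{i}^{*}},\mathbf{b}_{i\boldsymbol{\lambda}_{i}^{*}})\leq\phi\,\nabla_{\mathbf{y,z}}\widetilde{D}(\mathbf{y},\mathbf{z})\cdot(\mathbf{a}_{i\boldsymbol{\lambda}_{i}^{*}},\mathbf{b}_{i\boldsymbol{\lambda}_{i}^{*}})$. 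Summing over $i$ and using $\sum_{i}(\mathbf{a}_{i\boldsymbol{\lambda}_{i}^{*}},\mathbf{b}_{i\boldsymbol{\lambda}_{i}^{*}})=(\mathbf{y}^{*},\mathbf{z}^{*})$ yields the intermediate inequality $\widetilde{D}(\mathbf{y},\mathbf{z})\leq\phi\,\nabla_{\mathbf{y,z}}\widetilde{D}(\mathbf{y},\mathbf{z})\cdot(\mathbf{y}^{*},\mathbf{z}^{*})$.

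Finally I would convert this linearized bound into a function value at the scaled optimum. Applying the first-order convexity inequality for $\widetilde{D}$ at the online point $(\mathbf{y},\mathbf{z})$ against $(\phi\psi\mathbf{y}^{*},\phi\psi\mathbf{z}^{*})$ gives $\widetilde{D}(\phi\psi\mathbf{y}^{*},\phi\psi\mathbf{z}^{*})\geq\widetilde{D}(\mathbf{y},\mathbf{z})+\phi\psi\,\nabla_{\mathbf{y,z}}\widetilde{D}(\mathbf{y},\mathbf{z})\cdot(\mathbf{y}^{*},\mathbf{z}^{*})-\nabla_{\mathbf{y,z}}\widetilde{D}(\mathbf{y},\mathbf{z})\cdot(\mathbf{y},\mathbf{z})$. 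The definition of $\psi$ in (\ref{eq:betaDef}) bounds the subtracted term by $\psi\widetilde{D}(\mathbf{y},\mathbf{z})$, whereas the intermediate inequality gives $\phi\psi\,\nabla_{\mathbf{y,z}}\widetilde{D}(\mathbf{y},\mathbf{z})\cdot(\mathbf{y}^{*},\mathbf{z}^{*})\geq\psi\widetilde{D}(\mathbf{y},\mathbf{z})$; these two cancel and leave $\widetilde{D}(\phi\psi\mathbf{y}^{*},\phi\psi\mathbf{z}^{*})\geq\widetilde{D}(\mathbf{y},\mathbf{z})$, which is (\ref{eq:performGap_yz}) and hence (\ref{eq:performGap_x}).

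I expect the main obstacle to be the step that raises the base load to the peak $(\mathbf{y}_{\textrm{max}},\mathbf{z}_{\textrm{max}})$: asserting $\nabla_{\mathbf{y,z}}\widetilde{D}(P)\cdot v\leq\nabla_{\mathbf{y,z}}\widetilde{D}(P')\cdot v$ for $P\leq P'$ and $v\geq0$ does \emph{not} follow from convexity and monotonicity alone, since it requires the marginal costs to be non-decreasing in the background load (non-negative cross terms). For the local-cost part this is automatic because each $u_{k,t}$ depends on a single coordinate, so I would verify it directly there; for the migration terms I would rely on the natural property --- consistent with the modeling remark that migration overhead grows with the load at the endpoint clouds --- that each $w_{kl,t}$ has partial derivatives non-decreasing in its load arguments, and flag this as the hypothesis that makes the $\phi$-step valid. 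The remaining bookkeeping (greedy minimality and departures, the latter absorbed into the element-wise maxima $\mathbf{y}_{\textrm{max}},\mathbf{z}_{\textrm{max}}$) is routine once this monotonicity is in place.
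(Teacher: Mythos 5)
Your proposal follows essentially the same route as the paper's own proof: the paper also decomposes the online cost into per-instance placement increments $v_j$, bounds each increment via greedy minimality plus the first-order convexity inequality at the shifted point, raises the base point to $(\mathbf{y}_{\textrm{max}},\mathbf{z}_{\textrm{max}})$ to invoke the definition of $\phi$ (its Lemmas \ref{lemma:viBound2} and \ref{lemma:viBound1}), and closes with the first-order inequality at the online point against the scaled optimum together with $\sum_{\boldsymbol{\lambda}\in\Lambda}x^{*}_{i\boldsymbol{\lambda}}=1$ and the definition of $\psi$. The only cosmetic difference is that you remain in $(\mathbf{y},\mathbf{z})$ coordinates throughout, whereas the paper phrases the final chain in $\mathbf{x}$ and passes between the two representations via its Lemmas \ref{lemma:viBound_gradientRelationship1} and \ref{lemma:viBound_gradientRelationship2}; by linearity of $\mathbf{x}\mapsto(\mathbf{y},\mathbf{z})$ these are the same computation.

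One step of yours is stated too strongly. The telescoping $\widetilde{D}(\mathbf{y},\mathbf{z})=\sum_{i}\bigl[\widetilde{D}(\mathbf{y}^{(i)},\mathbf{z}^{(i)})-\widetilde{D}(\mathbf{y}^{(i-1)},\mathbf{z}^{(i-1)})\bigr]$ holds with equality only if no instance unpredictably departs; with departures, the state just before placing $j$ differs from the current-time state restricted to instances $1,\dots,j-1$, and the correct statement is the inequality $\widetilde{D}(\mathbf{x}_{M})\leq\sum_{j}v_{j}$, where $v_{j}$ is the increment \emph{at placement time} (the paper's Lemma \ref{lemma:viBound2}). This is not ``absorbed into $(\mathbf{y}_{\textrm{max}},\mathbf{z}_{\textrm{max}})$'': it needs a separate comparison of the realized increment at the (smaller) current base point with the increment at the placement-time base point, which requires precisely the increasing-increments property you flag---increments of $\widetilde{D}$ along a fixed non-negative direction being non-decreasing in the base point. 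On that flag you are right on target, and your caution is in fact sharper than the original: neither this step nor the raise-to-the-peak step in the $\phi$ bound (inequality (\ref{eq:proofVUpBound2}) in the paper's proof) follows from convexity and coordinatewise monotonicity alone (a cost such as $w=\max(y_{k},y_{l},z_{kl})$ is convex and non-decreasing yet violates it); it requires non-negative cross differences, i.e., marginal costs non-decreasing in the background load. The paper invokes both steps citing only ``convex non-decreasing'' and verifies the needed monotonicity explicitly only for the polynomial family (Lemma \ref{lemma:proofPolyCostMonotonicityDiff} in Appendix \ref{sec:polyCostCompetProof}), so making that hypothesis explicit, as you do, closes a gap the published proof leaves implicit rather than creating one.
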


\begin{proof} See Appendix \ref{sec:performanceGapProof}. \end{proof}

\emph{Remark:} We note that according to the definition of $M$ in
Section \ref{sub:PerfAnalysisDefinitions}, the bound given in Proposition~\ref{prop:performanceGapResult}
holds at any time of interest within slots $\{t_{0},...,t_{0}+T-1\}$,
i.e., for any number of instances that has arrived to the system,
where some of them may have already departed.

\subsubsection{Intuitive Explanation to the Constants $\phi$ and $\psi$}

The constants $\phi$ and $\psi$ in Proposition \ref{prop:performanceGapResult}
are related to ``how convex'' the cost function is. In other words,
they are related to how fast the cost of placing a single instance
changes under different amount of existing resource consumption. Figure
\ref{fig:PerformanceGapIllustration} shows an illustrative example,
where we only consider one cloud and one timeslot (i.e., $t=1$, $T=1$,
and $K=1$). In this case, setting $\phi=\frac{d\widetilde{D}}{dy}(y_{\textrm{max}}+a_{\textrm{max}})\big/\frac{d\widetilde{D}}{dy}(y)$
satisfies (\ref{eq:alphaDef}), where $a_{\textrm{max}}$ denotes
the maximum resource consumption of a single instance. Similarly,
setting $\psi=\frac{d\widetilde{D}}{dy}(y)\cdot y\big/\widetilde{D}(y)$
satisfies (\ref{eq:betaDef}). We can see that the values of $\phi$
and $\psi$ need to be larger when the cost function is more convex.
For the general case, there is a weighted sum in both the numerator
and denominator in (\ref{eq:alphaDef}) and (\ref{eq:betaDef}). However,
when we look at a single cloud (for the local cost) or a single pair
of clouds (for the migration cost) in a single timeslot, the above
intuition still applies.

So, why is the optimality gap larger when the cost functions are more
convex, i.e., have a larger second order derivative? We note that
in the greedy assignment procedure in Algorithm \ref{alg:onlineHighLevel},
we choose the configuration of each instance $i$ by minimizing the
cost under the system state at the time when instance $i$ arrives,
where the system state represents the local and migration resource
consumptions as specified by vectors $\mathbf{y}$ and $\mathbf{z}$.
When cost functions are more convex, for an alternative system state
$(\mathbf{y}',\mathbf{z}')$, it is more likely that the placement
of instance $i$ (which was determined at system state $(\mathbf{y},\mathbf{z})$)
becomes far from optimum. This is because if cost functions are more
convex, the cost increase of placing a new instance $i$ (assuming
the same configuration for $i$) varies more when $(\mathbf{y},\mathbf{z})$
changes. This intuition is confirmed by formal results described
next.

\begin{figure}
\centering \includegraphics[width=0.7\columnwidth]{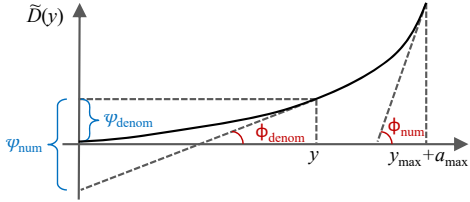}

\protect\caption{Illustration of the performance gap for $t=1$, $T=1$, and $K=1$,
where $a_{\textrm{max}}$ denotes the maximum resource consumption
of a single instance. In this example, (\ref{eq:alphaDef}) becomes
$\phi\geq\frac{\phi_{\textrm{num}}}{\phi_{\textrm{denom}}}$, and
(\ref{eq:betaDef}) becomes $\psi\geq\frac{\psi_{\textrm{num}}}{\psi_{\textrm{denom}}}$.}
\label{fig:PerformanceGapIllustration}
\end{figure}

\subsubsection{Linear Cost Functions \label{sub:LinearCostFunctions}}

Consider linear cost functions in the form of
\begin{align}
u_{k,t}(y) & =\gamma_{k,t}y\label{eq:linearLocalCost}\\
w_{kl,t}\left(y_{k},y_{l},z_{kl}\right) & =\kappa_{kl,t}^{(1)}y_{k}+\kappa_{kl,t}^{(2)}y_{l}+\kappa_{kl,t}^{(3)}z_{kl}\label{eq:linearMigCost}
\end{align}
where the constants $\gamma_{k,t},\kappa_{kl,t}^{(3)}>0$ and $\kappa_{kl,t}^{(1)},\kappa_{kl,t}^{(2)}\geq0$.

\begin{proposition} \label{prop:linearCostCompetitive} When the
cost functions are defined as in (\ref{eq:linearLocalCost}) and (\ref{eq:linearMigCost}),
Algorithm \ref{alg:onlineHighLevel} provides the optimal solution.
\end{proposition}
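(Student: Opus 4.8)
The plan is to exploit the fact that, under the linear model (\ref{eq:linearLocalCost})--(\ref{eq:linearMigCost}), the objective $\widetilde{D}(\mathbf{x})$ becomes an affine and, crucially, \emph{separable} function of the placement variables. Once I show that $\widetilde{D}$ contains no term coupling two distinct instances, the offline problem (\ref{eq:optWithVectors}) decouples into $M$ independent single-instance subproblems, and I can then argue that the per-instance choice made greedily by Algorithm~\ref{alg:onlineHighLevel} coincides with the optimal choice in each subproblem.

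First, I would substitute (\ref{eq:linearLocalCost}) and (\ref{eq:linearMigCost}) into (\ref{eq:costDef_T_xi}) and replace every $y_{k}(t)$ and $z_{kl}(t)$ by $y_{k}(t)=\sum_{i,\boldsymbol{\lambda}}a_{i\boldsymbol{\lambda}k}(t)x_{i\boldsymbol{\lambda}}$ and $z_{kl}(t)=\sum_{i,\boldsymbol{\lambda}}b_{i\boldsymbol{\lambda}kl}(t)x_{i\boldsymbol{\lambda}}$. Gathering the coefficient of each $x_{i\boldsymbol{\lambda}}$ then gives
\[
\widetilde{D}(\mathbf{x})=\sum_{i=1}^{M}\sum_{\boldsymbol{\lambda}\in\Lambda_{i}}c_{i\boldsymbol{\lambda}}\,x_{i\boldsymbol{\lambda}}+d,
\]
where each coefficient $c_{i\boldsymbol{\lambda}}$ depends only on the single-instance data $\mathbf{a}_{i\boldsymbol{\lambda}},\mathbf{b}_{i\boldsymbol{\lambda}}$ together with $\gamma$ and $\kappa^{(1)},\kappa^{(2)},\kappa^{(3)}$, and $d\geq0$ is a constant arising from the fixed slot-$(t_{0}-1)$ configuration that enters the $\kappa^{(1)}$-term at $t=t_{0}$.

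Next, since the only constraints $\sum_{\boldsymbol{\lambda}\in\Lambda_{i}}x_{i\boldsymbol{\lambda}}=1$ involve one instance at a time, this separable objective makes (\ref{eq:optWithVectors}) split into the $M$ independent problems $\min_{\boldsymbol{\lambda}\in\Lambda_{i}}c_{i\boldsymbol{\lambda}}$, so the offline optimum $\mathbf{x}^{*}$ places every instance $i$ at some $\arg\min_{\boldsymbol{\lambda}}c_{i\boldsymbol{\lambda}}$. On the online side, when Algorithm~\ref{alg:onlineHighLevel} inserts instance $i$ while freezing all earlier columns, the induced increment of $\widetilde{D}$ equals exactly $c_{i\boldsymbol{\lambda}}$ by separability; hence the greedy step (\ref{eq:objFuncFrameOnline}) selects the same minimizer as the corresponding offline subproblem. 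As this holds for each instance and the per-instance optima are mutually independent, the online configuration attains all of them at once and thus equals the offline optimum.

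I expect the main obstacle to be the bookkeeping behind separability rather than any deep idea. The migration cost $w_{kl,t}$ depends not only on $z_{kl}$ but also on the aggregate loads $y_{k}(t-1)$ and $y_{l}(t)$, which superficially couple all instances sharing a cloud; linearity dissolves this, because the $\kappa^{(1)}y_{k}+\kappa^{(2)}y_{l}$ contributions expand additively into per-instance pieces just as the $\kappa^{(3)}z_{kl}$ term does. I would also handle the window boundary with care: at $t=t_{0}$ the load $y_{k}(t_{0}-1)$ is fixed, so it feeds the constant $d$ rather than any $c_{i\boldsymbol{\lambda}}$. As a cross-check, the same conclusion follows from Proposition~\ref{prop:performanceGapResult}: for linear cost the gradient $\nabla_{\mathbf{y,z}}\widetilde{D}$ is constant, so (\ref{eq:alphaDef}) holds with $\phi=1$, and because $\widetilde{D}$ is degree-one homogeneous up to the constant $d\geq0$ we get $\nabla_{\mathbf{y,z}}\widetilde{D}\cdot(\mathbf{y},\mathbf{z})=\widetilde{D}(\mathbf{y},\mathbf{z})-d\leq\widetilde{D}(\mathbf{y},\mathbf{z})$, so (\ref{eq:betaDef}) holds with $\psi=1$; then (\ref{eq:performGap_x}) reduces to $\widetilde{D}(\mathbf{x})\leq\widetilde{D}(\mathbf{x}^{*})$, and optimality of $\mathbf{x}^{*}$ forces equality.
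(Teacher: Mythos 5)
Your proposal is correct, and your lead argument takes a genuinely different route from the paper, while your ``cross-check'' \emph{is} the paper's proof. The paper disposes of this proposition as a corollary of Proposition \ref{prop:performanceGapResult}: since the gradient $\nabla_{\mathbf{y,z}}\widetilde{D}$ is constant under (\ref{eq:linearLocalCost})--(\ref{eq:linearMigCost}), it takes $\phi=\psi=1$ in (\ref{eq:alphaDef}) and (\ref{eq:betaDef}), so (\ref{eq:performGap_x}) yields $\widetilde{D}(\mathbf{x})\leq\widetilde{D}(\mathbf{x}^{*})$, whence optimality. Your primary argument instead expands (\ref{eq:costDef_T_xi}) into the separable form $\sum_{i,\boldsymbol{\lambda}}c_{i\boldsymbol{\lambda}}x_{i\boldsymbol{\lambda}}+d$ and observes that (\ref{eq:optWithVectors}) decouples into $M$ independent single-instance problems whose offline minimizers coincide with the greedy increments of Algorithm \ref{alg:onlineHighLevel}; this is more elementary and self-contained, needing neither convexity, first-order conditions, nor the $\mathbf{y}_{\textrm{max}},\mathbf{z}_{\textrm{max}}$ machinery behind Proposition \ref{prop:performanceGapResult}, and it formalizes the paper's own informal remark that previous placements have no impact on later placement costs when costs are linear. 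The paper's route buys economy (Proposition \ref{prop:performanceGapResult} is already in hand) and situates the linear case as the $\Omega=1$ base point of the general competitive analysis; your route buys transparency and also quietly repairs a small imprecision: when the fixed boundary load $y_{k}(t_{0}-1)$ enters the $\kappa^{(1)}$-terms at $t=t_{0}$, the paper's asserted equality $\nabla_{\mathbf{y,z}}\widetilde{D}\cdot(\mathbf{y},\mathbf{z})=\widetilde{D}(\mathbf{y},\mathbf{z})$ holds only up to the additive constant $d\geq0$, and your observation that $\nabla_{\mathbf{y,z}}\widetilde{D}\cdot(\mathbf{y},\mathbf{z})=\widetilde{D}-d\leq\widetilde{D}$ suffices for (\ref{eq:betaDef}) is exactly the needed fix. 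One shared caveat, not a defect of yours relative to the paper: both arguments implicitly require that the sequence assigned to instance $i$ by the offline optimum be among the candidates over which the greedy step (\ref{eq:objFuncFrameOnline}) minimizes at placement time, which is exact when assumed lifetimes match actual departure times; this is the same assumption underlying Lemma \ref{lemma:viBound1} in the paper's proof of Proposition \ref{prop:performanceGapResult}, so your argument is no less general than the paper's.
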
 \begin{proof}We have
\[
\nabla_{\mathbf{y,z}}\widetilde{D}\left(\mathbf{y}_{\textrm{max}}+\mathbf{a}_{i\boldsymbol{\lambda}},\mathbf{z}_{\textrm{max}}+\mathbf{b}_{i\boldsymbol{\lambda}}\right)=\nabla_{\mathbf{y,z}}\widetilde{D}\left(\mathbf{y},\mathbf{z}\right)
\]
\[
\nabla_{\mathbf{y,z}}\widetilde{D}\left(\mathbf{y},\mathbf{z}\right)\cdot\left(\mathbf{y},\mathbf{z}\right)=\widetilde{D}\left(\mathbf{y},\mathbf{z}\right)
\]
because the gradient in this case is a constant. Hence, choosing $\phi=\psi=1$
satisfies (\ref{eq:alphaDef}) and (\ref{eq:betaDef}), yielding $\widetilde{D}(\mathbf{x})\leq\widetilde{D}(\mathbf{x}^{*})$
which means that the solution from Algorithm \ref{alg:onlineHighLevel}
is not worse than the optimal solution. \end{proof}

This implies that the greedy service placement is optimal for linear
cost functions, which is intuitive because the previous placements
have no impact on the cost of later placements when the cost function
is linear.

\subsubsection{Polynomial Cost Functions \label{sub:PolynomialCostFunctions}}

Consider polynomial cost functions in the form of
\begin{align}
u_{k,t}(y) & =\sum_{\rho}\gamma_{k,t}^{(\rho)}y^{\rho}\label{eq:polyLocalCost}\\
w_{kl,t}\left(y_{k},y_{l},z_{kl}\right) & =\sum_{\rho_{1}}\sum_{\rho_{2}}\sum_{\rho_{3}}\kappa_{kl,t}^{(\rho_{1},\rho_{2},\rho_{3})}y_{k}^{\rho_{1}}y_{l}^{\rho_{2}}z_{kl}^{\rho_{3}}\label{eq:polyMigCost}
\end{align}
where $\rho,\rho_{1},\rho_{2},\rho_{3}$ are integers satisfying $\rho\geq1$,
$\rho_{1}+\rho_{2}+\rho_{3}\geq1$ and the constants $\gamma_{k,t}^{(\rho)}\geq0$,
$\kappa_{kl,t}^{(\rho_{1},\rho_{2},\rho_{3})}\geq0$. 

We first introduce the following assumption which can be satisfied
in most practical systems with an upper bound on resource consumptions
and departure rates.

\begin{assumption} \label{condition:boundedArrDept} The following
is satisfied:
\begin{itemize}
\item For all $i,\boldsymbol{\lambda},k,l,t$, there exists a constants
$a_{\textrm{max}}$ and $b_{\textrm{max}}$, such that 
$a_{i\boldsymbol{\lambda}k}(t)\leq a_{\textrm{max}}$
and
$b_{i\boldsymbol{\lambda}kl}(t)\leq b_{\textrm{max}}$

\item The number of instances that unpredictably leave the system in each
slot is upper bounded by a constant $B_{d}$. 
\end{itemize}
\end{assumption}

\begin{proposition} \label{prop:polyCostCompetitive} Assume that
the cost functions are defined as in (\ref{eq:polyLocalCost}) and
(\ref{eq:polyMigCost}) while satisfying Assumption \ref{condition:costFunc},
and that Assumption \ref{condition:boundedArrDept} is satisfied.

Let $\Omega$ denote the maximum value of $\rho$ such that $\gamma_{k,t}^{(\rho)}>0$
or $\kappa_{kl,t}^{(\rho_{1},\rho_{2},\rho_{3})}>0$, subject to $\rho_{1}+\rho_{2}+\rho_{3}=\rho$.
The value of $\Omega$ represents the highest order of the polynomial
cost functions. 

Define $\Gamma(\mathcal{I}(M))\triangleq {\widetilde{D}(\mathbf{x}_{\mathcal{I}(M)})} \Big/ {\widetilde{D}(\mathbf{x}_{\mathcal{I}(M)}^{*})}$,
where $\mathcal{I}(M)$ is a problem input%
\footnote{A particular problem input specifies the time each instance arrives/departs as
well as the values of $\mathbf{a}_{i\boldsymbol{\lambda}}$ and $\mathbf{b}_{i\boldsymbol{\lambda}}$
for each $i,\boldsymbol{\lambda}$.%
} containing $M$ instances, and $\mathbf{x}_{\mathcal{I}(M)}$ and
$\mathbf{x}_{\mathcal{I}(M)}^{*}$ are respectively the online and
offline (optimal) results for input $\mathcal{I}(M)$. We say that
Algorithm~\ref{alg:onlineHighLevel} is $c$-competitive in placing
$M$ instances if $\Gamma\triangleq\max_{\mathcal{I}(M)}\Gamma(\mathcal{I}(M))\leq c$
for a given $M$. We have:
\begin{itemize}
\item Algorithm \ref{alg:onlineHighLevel} is $O(1)$-competitive. 
\item In particular, for any $\delta>0$, there exists a sufficiently large
$M$, such that Algorithm \ref{alg:onlineHighLevel} is $\left(\Omega^{\Omega}+\delta\right)$-competitive. 
\end{itemize}
\end{proposition}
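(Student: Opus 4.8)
The plan is to convert the additive performance-gap bound of Proposition~\ref{prop:performanceGapResult} into a multiplicative competitive ratio by exploiting the homogeneity structure of polynomial costs, and then to control the two constants $\phi$ and $\psi$ separately. Starting from $\widetilde{D}(\mathbf{x})\le\widetilde{D}(\phi\psi\mathbf{x}^{*})$, I would first observe that under (\ref{eq:polyLocalCost})--(\ref{eq:polyMigCost}) with non-negative coefficients, $\widetilde{D}$ is a polynomial in the joint variable $(\mathbf{y},\mathbf{z})$ each of whose monomials has total degree at most $\Omega$. Since $\phi,\psi\ge1$ ($\psi\ge1$ because every monomial of $\widetilde{D}$ has degree at least one, and $\phi\ge1$ because the component-wise non-decreasing gradient is evaluated at a larger argument in the numerator of (\ref{eq:alphaDef}) than in the denominator), scaling a non-negative argument by $s=\phi\psi\ge1$ multiplies each degree-$\rho$ monomial by $s^{\rho}\le s^{\Omega}$, whence $\widetilde{D}(s\mathbf{x}^{*})\le s^{\Omega}\widetilde{D}(\mathbf{x}^{*})$. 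This yields $\Gamma(\mathcal{I}(M))\le(\phi\psi)^{\Omega}$ and reduces the proposition to bounding $\phi$ and $\psi$.

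Bounding $\psi$ is immediate: applying Euler's identity monomial-by-monomial to the numerator of (\ref{eq:betaDef}) gives $\nabla_{\mathbf{y,z}}\widetilde{D}(\mathbf{y},\mathbf{z})\cdot(\mathbf{y},\mathbf{z})=\sum_{m}(\deg m)\,m(\mathbf{y},\mathbf{z})\le\Omega\,\widetilde{D}(\mathbf{y},\mathbf{z})$, so $\psi\le\Omega$ for every $M$ and every input. For the first bullet ($O(1)$-competitiveness) it then suffices to bound $\phi$ by a constant independent of $M$. Here I would lower-bound the denominator of (\ref{eq:alphaDef}) using Assumption~\ref{condition:costFunc}: convexity makes each gradient component at least its value at the origin, and $\frac{du_{k,t}}{dy}(0)>0$ together with $\frac{\partial w_{kl,t}}{\partial z_{kl}}(\cdot,\cdot,0)>0$ keeps that floor strictly positive along the nonzero directions of $(\mathbf{a}_{i\boldsymbol{\lambda}},\mathbf{b}_{i\boldsymbol{\lambda}})$. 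I would upper-bound the numerator using Assumption~\ref{condition:boundedArrDept}: each per-instance consumption is capped by $a_{\textrm{max}},b_{\textrm{max}}$ and the per-slot departures by $B_{d}$, so the peak loads $\mathbf{y}_{\textrm{max}},\mathbf{z}_{\textrm{max}}$ exceed the final online loads by only a bounded amount and the degree-$(\Omega-1)$ gradient stays within a constant factor of the denominator. This gives $\phi\le\Phi$ for a constant depending only on $\Omega,K,T,a_{\textrm{max}},b_{\textrm{max}},B_{d}$ and the cost-derivative bounds, hence $\Gamma\le(\Phi\Omega)^{\Omega}=O(1)$.

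The second bullet is the main obstacle. To reach $\Omega^{\Omega}+\delta$ I would argue that the effective value of $\phi$ tends to $1$ as $M\to\infty$: since $(\mathbf{a}_{i\boldsymbol{\lambda}},\mathbf{b}_{i\boldsymbol{\lambda}})$ is a single instance's resource vector while the loads at which the numerator gradient is evaluated aggregate all placed instances over only $K$ clouds and $T$ slots, the perturbation a single instance induces on the gradient becomes negligible relative to the load, driving each ratio in (\ref{eq:alphaDef}) toward $1$. The difficulty is that this is genuinely \emph{false} instance-by-instance: an instance sitting alone on a lightly loaded cloud, or one whose peak load was inflated by later departures, keeps its ratio bounded away from $1$ (of order $2^{\Omega-1}$), so the worst-case $\phi$ used in the first bullet does not decay. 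The crux is therefore to replace the per-instance worst case by an \emph{aggregate} argument: the number of such ``boundary'' instances is limited (through $B_{d}$ and the finite cloud capacity) and each contributes only $O(1)$ cost, whereas the optimal cost grows with $M$, so their relative contribution vanishes and the limiting ratio is governed by $\psi^{\Omega}\le\Omega^{\Omega}$. Making this precise --- quantifying the vanishing fraction and choosing $M$ large enough that the excess over $\Omega^{\Omega}$ drops below $\delta$ --- is the technical heart of the argument.
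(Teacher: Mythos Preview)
Your overall scaffolding---invoke Proposition~\ref{prop:performanceGapResult}, use polynomial homogeneity to pass from $\widetilde{D}(\phi\psi\mathbf{x}^{*})$ to $(\phi\psi)^{\Omega}\widetilde{D}(\mathbf{x}^{*})$, and then control $\phi$ and $\psi$ separately---is exactly what the paper does. Your Euler-identity argument giving $\psi\le\Omega$ for \emph{all} $(\mathbf{y},\mathbf{z})$ is in fact cleaner than the paper's Lemma~\ref{lemma:proofPolyCostBound}, which only obtains $\psi\le\Omega+\delta$ via a limit as the loads grow; your version is both shorter and uniform in $M$.

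Where you diverge is in handling $\phi$ for the $\Omega^{\Omega}+\delta$ bound. The paper does \emph{not} switch to an aggregate argument; it stays within the per-instance framework of Proposition~\ref{prop:performanceGapResult}. The two ingredients are: (i) Lemma~\ref{lemma:existenceOfBoundInPolyCost}, which uses Assumption~\ref{condition:boundedArrDept} to show $(\mathbf{y}_{\max}+\mathbf{a}_{i\boldsymbol{\lambda}},\mathbf{z}_{\max}+\mathbf{b}_{i\boldsymbol{\lambda}})-(\mathbf{y},\mathbf{z})\le B\mathbf{e}$ for a constant $B$ independent of $M$ (you already have this idea in your $O(1)$ argument); and (ii) Lemma~\ref{lemma:proofPolyCostBound}, which shows that for polynomial costs the ratio $\frac{du}{dy}(y+B)\big/\frac{du}{dy}(y)$ (and its migration analogue) tends to $1$ as $y\to\infty$. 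Combining these, the paper replaces the RHS of (\ref{eq:alphaDef}) by the looser ratio $\nabla\widetilde{D}((\mathbf{y},\mathbf{z})+B\mathbf{e})\cdot(\mathbf{a}_{i\boldsymbol{\lambda}},\mathbf{b}_{i\boldsymbol{\lambda}})\big/\nabla\widetilde{D}(\mathbf{y},\mathbf{z})\cdot(\mathbf{a}_{i\boldsymbol{\lambda}},\mathbf{b}_{i\boldsymbol{\lambda}})$ and then asserts that, since per-slot departures are bounded by $B_{d}$, the loads $\mathbf{y},\mathbf{z}$ grow with $M$, driving this ratio to $1+\delta$. Hence $\phi=1+\delta$, $\psi=\Omega+\delta$, and $(\phi\psi)^{\Omega}=\Omega^{\Omega}+\delta'$.

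Your worry about a lightly loaded cloud keeping the per-instance ratio away from $1$ is perceptive---the paper simply asserts that ``$\mathbf{y}$ and $\mathbf{z}$ increase with $M$'' without isolating components---but the paper's intended route is still the per-instance one, not an aggregate bookkeeping of boundary instances. Your proposed aggregate argument would require abandoning Proposition~\ref{prop:performanceGapResult} altogether (since that proposition needs a single $\phi$ valid for every $i,\boldsymbol{\lambda}$), which is a substantially different and harder proof than what the paper writes. If you want to match the paper, drop the aggregate detour and argue $\phi\to 1$ directly from the bounded-gap lemma plus growing loads.
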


\begin{proof} See Appendix \ref{sec:polyCostCompetProof}. \end{proof}

Proposition \ref{prop:polyCostCompetitive} states that the competitive
ratio does not indefinitely increase with increasing number of instances
(specified by $M$). Instead, it approaches a constant value when
$M$ becomes large.

When the cost functions are linear as in (\ref{eq:linearLocalCost})
and (\ref{eq:linearMigCost}), we have $\Omega=1$. In this case,
Proposition \ref{prop:polyCostCompetitive} gives a competitive ratio
upper bound of $1+\delta$ (for sufficiently large $M$) where $\delta>0$
can be arbitrarily small, while Proposition \ref{prop:linearCostCompetitive}
shows that Algorithm \ref{alg:onlineHighLevel} is optimal. This means
that the competitive ratio upper bound given in Proposition \ref{prop:polyCostCompetitive}
is \emph{asymptotically tight} as $M$ goes to infinity.

\subsubsection{Linear Cost at Backend Cloud \label{sub:onlinePerformanceOtherCostFunc}}

Algorithm \ref{alg:onlineHighLevel} is also $O(1)$-competitive for
some more general forms of cost functions. For example, consider a
simple case where there is no migration resource consumption, i.e.
$b_{i\boldsymbol{\lambda}kl}(t)=0$ for all $i,\boldsymbol{\lambda},k,l$.
Define $u_{k_{0},t}(y)=\gamma y$ for some cloud $k_{0}$ and all
$t$, where $\gamma>0$ is a constant. For all other clouds $k\neq k_{0}$,
define $u_{k,t}(y)$ as a general cost function while satisfying Assumption~\ref{condition:costFunc}
and some additional mild assumptions presented below. Assume that
there exists a constant $a_{\textrm{max}}$ such that $a_{i\boldsymbol{\lambda}k}(t)\leq a_{\textrm{max}}$
for all $i,\boldsymbol{\lambda},k,t$. 

Because $u_{k,t}(y)$ is convex non-decreasing and Algorithm~\ref{alg:onlineHighLevel}
operates in a greedy manner, if $\frac{du_{k,t}}{dy}(y)>\gamma$,
no new instance will be placed on cloud $k$, as it incurs higher
cost than placing it on $k_{0}$. As a result, the maximum value of
$y_{k}(t)$ is bounded, let us denote this upper bound by $y_{k}^{\textrm{max}}(t)$.
We note that $y_{k}^{\textrm{max}}(t)$ is \emph{only dependent on
the cost function definition} and is independent of the number of
arrived instances.

Assume $u_{k,t}(y_{k}^{\textrm{max}}(t))<\infty$ and $\frac{du_{k,t}}{dy}(y_{k}^{\textrm{max}}(t)+a_{\textrm{max}})<\infty$
for all $k\neq k_{0}$ and $t$. When ignoring the cost at cloud $k_{0}$,
the ratio $\Gamma(\mathcal{I}(M))$ does not indefinitely grow with
incoming instances, because among all $y_{k}(t)\in[0,y_{k}^{\textrm{max}}(t)]$
for all $t$ and $k\neq k_{0}$, we can find $\phi$ and $\psi$ that
satisfy (\ref{eq:alphaDef}) and (\ref{eq:betaDef}), we can also
find the competitive ratio $\Gamma\triangleq\max_{\mathcal{I}(M)}\Gamma(\mathcal{I}(M))$.
The resulting $\Gamma$ is only dependent on the cost function definition,
hence it does not keep increasing with $M$. Taking into account the
cost at cloud $k_{0}$, the above result still applies, because the
cost at $k_{0}$ is linear in $y_{k_{0}}(t)$, so that in either of
(\ref{eq:alphaDef}), (\ref{eq:betaDef}), or in the expression of
$\Gamma(\mathcal{I}(M))$, the existence of this linear cost only
adds a same quantity (which might be different in different expressions
though) to both the numerator and denominator, which does not increase
$\Gamma$ (because $\Gamma \geq 1$).

The cloud $k_{0}$ can be considered as the backend cloud, which
usually has abundant resources thus its cost-per-unit-resource often
remains unchanged. This example can be generalized to cases with non-zero
migration resource consumption, and we will illustrate such an application
in the simulations in Section \ref{sec:simulation}.

\section{Optimal Look-Ahead Window Size}

\label{sec:findLookAhead}

In this section, we study how to find the optimal window size $T$
to look-ahead. When there are no errors in the cost prediction, setting
$T$ as large as possible can potentially bring the best long-term
performance. However, the problem becomes more complicated when we
consider the prediction error, because the farther ahead we look into
the future, the less accurate the prediction becomes. When $T$ is
large, the predicted cost value may be far away from the actual cost,
which can cause the configuration obtained from predicted costs $D_{\boldsymbol{\pi}}^{t_{0}}(t)$
with size-$T$ windows (denoted by $\boldsymbol{\pi}_{p}$) deviate
significantly from the true optimal configuration $\boldsymbol{\pi}^{*}$
obtained from actual costs $A_{\boldsymbol{\pi}}(t)$. Note that $\boldsymbol{\pi}^{*}$
is obtained from actual costs $A_{\boldsymbol{\pi}}(t)$, which is
different from $\mathbf{x}^{*}$ and $\left(\mathbf{y}^{*},\mathbf{z}^{*}\right)$
which are obtained from predicted costs $D_{\boldsymbol{\pi}}^{t_{0}}(t)$
as defined in Section \ref{sub:PerformanceAnalysisOnlineAlg}. Also
note that $\boldsymbol{\pi}_{p}$ and $\boldsymbol{\pi}^{*}$ specify
the configurations for an arbitrarily large number of timeslots, as
in (\ref{eq:objFunc}). Conversely, when $T$ is small, the solution
may not perform well in the long-term, because the look-ahead window
is small and the long-term effect of migration is not considered.
We have to find the optimal value of $T$ which minimizes
both the impact of prediction error and the impact of truncating the
look-ahead time-span.

We assume that there exists a constant $\sigma$ satisfying 
\begin{equation}
\max_{\boldsymbol{\pi}(t-1,t)}W_{a}(t,\boldsymbol{\pi}(t-1),\boldsymbol{\pi}(t)) \leq \sigma
\label{eq:sigmaDefForMigCostBound}
\end{equation}
 for any $t$, to represent the maximum value of the actual migration
cost in any slot, where $W_{a}(t,\boldsymbol{\pi}(t-1),\boldsymbol{\pi}(t))$
denotes the actual migration cost. The value of $\sigma$ is system-specific
and is related to the cost definition. 

To help with our analysis below, we define the sum-error starting
from slot $t_{0}$ up to slot $t_{0}+T-1$ as 
\begin{equation}
F(T)\triangleq\sum_{t=t_{0}}^{t_{0}+T-1}\epsilon(t-t_{0})
\end{equation}
 Because $\epsilon(t-t_{0})\geq0$ and $\epsilon(t-t_{0})$ is non-decreasing
with $t$, it is obvious that $F(T+2)-F(T+1)\geq F(T+1)-F(T)$. Hence,
$F(T)$ is a convex non-decreasing function for $T\geq0$, where we
define $F(0)=0$.

\subsection{Upper Bound on Cost Difference}

In the following, we focus on the objective function given in (\ref{eq:objFunc}),
and study how worse the configuration $\boldsymbol{\pi}_{p}$ can
perform, compared to the optimal configuration $\boldsymbol{\pi}^{*}$. 

\begin{proposition} \label{prop:costDifferenceBound} For look-ahead
window size $T$, suppose that we can solve (\ref{eq:objFuncFrame})
with competitive ratio $\Gamma\geq1$, the upper bound on the cost
difference (while taking the competitive ratio $\Gamma$ into account)
from configurations $\boldsymbol{\pi}_{p}$ and $\boldsymbol{\pi}^{*}$
is given by 
\begin{equation}
\lim_{T_{\textrm{max}}\rightarrow\infty}\!\!\left(\!\frac{\sum_{t=1}^{T_{\textrm{max}}}A_{\boldsymbol{\pi}_{p}}(t)}{T_{\textrm{max}}}\!-\!\Gamma\frac{\sum_{t=1}^{T_{\textrm{max}}}A_{\boldsymbol{\pi}^{*}}(t)}{T_{\textrm{max}}}\!\right)\!\leq\!\frac{(\Gamma\!+\!1)F(T)\!+\!\sigma}{T}\label{eq:CostAccuracyBound}
\end{equation}
 \end{proposition}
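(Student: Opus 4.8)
The plan is to prove a single per-window inequality and then average it over the whole horizon. I partition time into consecutive windows $W_{j}=\{jT+1,\dots,(j+1)T\}$ with prediction epoch $s_{j}=jT+1$, and abbreviate the window costs as $\mathrm{Act}_{j}(\boldsymbol{\pi})=\sum_{t\in W_{j}}A_{\boldsymbol{\pi}(t-1,t)}(t)$ and $\mathrm{Pred}_{j}(\boldsymbol{\pi})=\sum_{t\in W_{j}}D^{s_{j}}_{\boldsymbol{\pi}(t-1,t)}(t)$. Summing the per-slot accuracy bound $|A_{\boldsymbol{\pi}}(t)-D^{s_{j}}_{\boldsymbol{\pi}}(t)|\le\epsilon(t-s_{j})$ over the $T$ slots of a window gives the two-sided estimate $|\mathrm{Act}_{j}(\boldsymbol{\pi})-\mathrm{Pred}_{j}(\boldsymbol{\pi})|\le F(T)$ for \emph{every} configuration. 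This lets me pass between actual and predicted window costs at an additive price of $F(T)$ per conversion, and the two conversions I will need (one for $\boldsymbol{\pi}_{p}$, one for $\boldsymbol{\pi}^{*}$) are exactly what produce the coefficient $(\Gamma+1)$.

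The core chain I would establish for each window is: first, $\mathrm{Act}_{j}(\boldsymbol{\pi}_{p})\le\mathrm{Pred}_{j}(\boldsymbol{\pi}_{p})+F(T)$ by the accuracy bound; second, $\mathrm{Pred}_{j}(\boldsymbol{\pi}_{p})\le\Gamma\,\mathrm{Pred}_{j}(\boldsymbol{\pi}_{p}^{\mathrm{off}})$ by the assumed competitive ratio for (\ref{eq:objFuncFrame}), where $\boldsymbol{\pi}_{p}^{\mathrm{off}}$ is the offline predicted-cost optimum for window $j$; third, since $\boldsymbol{\pi}^{*}$ restricted to $W_{j}$ is feasible for that within-window problem, optimality gives $\mathrm{Pred}_{j}(\boldsymbol{\pi}_{p}^{\mathrm{off}})\le\mathrm{Pred}_{j}(\boldsymbol{\pi}^{*})$; and fourth, $\mathrm{Pred}_{j}(\boldsymbol{\pi}^{*})\le\mathrm{Act}_{j}(\boldsymbol{\pi}^{*})+F(T)$ again by the accuracy bound. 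Concatenating these four steps yields $\mathrm{Act}_{j}(\boldsymbol{\pi}_{p})\le\Gamma\,\mathrm{Act}_{j}(\boldsymbol{\pi}^{*})+(\Gamma+1)F(T)$, up to a correction coming from the window boundary.

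The hard part is that window boundary, and it is where the coefficient of $\sigma$ is decided. Because (\ref{eq:objFuncFrame}) treats every instance alive at $s_{j}$ as a fresh arrival at $s_{j}$, migration \emph{into} the window is not part of the objective that $\boldsymbol{\pi}_{p}$ optimizes; the comparison configuration derived from $\boldsymbol{\pi}^{*}$ therefore also carries no boundary-migration term, so nothing extra is multiplied by $\Gamma$ in the third step. The realized trajectory $\boldsymbol{\pi}_{p}$, however, does perform one actual migration at the start of each window, from $\boldsymbol{\pi}_{p}(s_{j}-1)$ to $\boldsymbol{\pi}_{p}(s_{j})$, whose aggregate actual cost is at most $\sigma$ by (\ref{eq:sigmaDefForMigCostBound}). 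I would split $\mathrm{Act}_{j}(\boldsymbol{\pi}_{p})$ into this single boundary migration plus the remaining (fresh-arrival) cost, bound the latter by the chain above, and carry the boundary term \emph{outside} the $\Gamma$-factor; this is precisely what turns a naive $\Gamma\sigma$ into the claimed $\sigma$. Getting this attribution right — charging the boundary migration to $\boldsymbol{\pi}_{p}$'s realized cost rather than to the $\Gamma$-scaled comparison term, together with the matching accounting of local-versus-migration cost in the boundary slot — is the delicate point; the first window needs no such term since $W(1,\cdot,\cdot)=0$.

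Combining, each window obeys $\mathrm{Act}_{j}(\boldsymbol{\pi}_{p})\le\Gamma\,\mathrm{Act}_{j}(\boldsymbol{\pi}^{*})+(\Gamma+1)F(T)+\sigma$. Summing over the $N=T_{\textrm{max}}/T$ windows (taking $T_{\textrm{max}}$ a multiple of $T$, the residual partial window being negligible in the limit), using $\sum_{j}\mathrm{Act}_{j}(\boldsymbol{\pi})=\sum_{t=1}^{T_{\textrm{max}}}A_{\boldsymbol{\pi}}(t)$, and dividing by $T_{\textrm{max}}=NT$ gives
\[
\frac{\sum_{t=1}^{T_{\textrm{max}}}A_{\boldsymbol{\pi}_{p}}(t)}{T_{\textrm{max}}}\le\Gamma\,\frac{\sum_{t=1}^{T_{\textrm{max}}}A_{\boldsymbol{\pi}^{*}}(t)}{T_{\textrm{max}}}+\frac{(\Gamma+1)F(T)+\sigma}{T},
\]
and letting $T_{\textrm{max}}\to\infty$ yields (\ref{eq:CostAccuracyBound}). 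I expect the only remaining loose ends to be the formal treatment of the trailing partial window and the precise fresh-arrival bookkeeping of the boundary slot, both of which are routine once the boundary attribution above is fixed.
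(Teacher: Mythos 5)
Your proof is correct and takes essentially the same route as the paper's own proof in Appendix E: two prediction-error conversions contributing $(\Gamma+1)F(T)$ (one of them scaled by $\Gamma$), a window-decoupled competitive-ratio comparison in which $\boldsymbol{\pi}^{*}$ restricted to each window serves as the feasible comparator, and the boundary migration charged at most $\sigma$ per window \emph{outside} the $\Gamma$ factor (the paper's ``neglect boundary migration, then add it back'' step, with no migration cost at $t=1$). The only cosmetic difference is bookkeeping: the paper aggregates inequalities over the whole horizon and handles the trailing partial window via convexity of $F$, i.e., $F\left(T_{\textrm{max}}-T\left\lfloor T_{\textrm{max}}/T\right\rfloor\right)\leq\frac{T_{\textrm{max}}-T\left\lfloor T_{\textrm{max}}/T\right\rfloor}{T}F(T)$, whereas you argue per-window along $T_{\textrm{max}}=NT$, which is harmless since the limit in (\ref{eq:CostAccuracyBound}) may be evaluated along that subsequence.
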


\begin{proof} See Appendix \ref{sec:costDifferenceBoundProof}. \end{proof} 

We assume in the following that the competitive ratio $\Gamma$ is
independent of the choice of $T$, and regard it as a given parameter
in the problem of finding optimal $T$. This assumption is justified
for several cost functions where there exist a uniform bound on the
competitive ratio for arbitrarily many services (see Sections \ref{sub:LinearCostFunctions}--\ref{sub:onlinePerformanceOtherCostFunc}).
We define the \emph{optimal look-ahead window size} as the solution to the
following optimization problem:
\begin{align}
\min_{T}\quad & \frac{(\Gamma+1)F(T)+\sigma}{T}\label{eq:errorCostObj}\\
\textrm{s.t.}\quad & T\geq1\nonumber 
\end{align}

Considering the original objective in (\ref{eq:objFunc}), the problem
(\ref{eq:errorCostObj}) can be regarded as finding the optimal look-ahead
window size such that an upper bound of the objective function in
(\ref{eq:objFunc}) is minimized (according to Proposition \ref{prop:costDifferenceBound}).
The solution to (\ref{eq:errorCostObj}) is the optimal window size
to look-ahead so that (in the worst case) the cost is closest to the
cost of the optimal configuration $\boldsymbol{\pi}^{*}$.

\subsection{Characteristics of the Problem in (\ref{eq:errorCostObj}) }

\label{sub:charOptWindLen}

We now study the characteristics of (\ref{eq:errorCostObj}). To help
with the analysis, we interchangeably use variable $T$ to represent
either a discrete or a continuous variable. We define a continuous
convex function $G(T)$, where $T\geq1$ is a continuous variable.
The function $G(T)$ is defined in such a way that $G(T)=F(T)$ for
all the discrete values $T\in\left\{ 1,2,...\right\} $, i.e., $G(T)$
is a \emph{continuous time extension} of $F(T)$. Such a definition
is always possible by connecting the discrete points in $F(T)$. Note
that we do not assume the continuity of the derivatives of $G(T)$,
which means that $\frac{dG(T)}{dT}$ may be non-continuous and $\frac{d^{2}G(T)}{dT^{2}}$
may have $+\infty$ values. However, these do not affect our analysis
below. We will work with continuous values of $T$ in some parts and
will discretize it when appropriate.

We define a function $\theta(T)\triangleq\frac{(\Gamma+1)G(T)+\sigma}{T}$
to represent the objective function in (\ref{eq:errorCostObj}) after
replacing $F(T)$ with $G(T)$, where $T$ is regarded as a continuous
variable. We take the logarithm of $\theta(T)$, yielding
\begin{equation}
\ln\theta=\ln\left((\Gamma+1)G(T)+\sigma\right)-\ln T
\end{equation}
Taking the derivative of $\ln\theta$, we have
\begin{equation}
\frac{d\ln\theta}{dT}=\frac{(\Gamma+1)\frac{dG(T)}{dT}}{(\Gamma+1)G(T)+\sigma}-\frac{1}{T}\label{eq:diffLnErrorCost}
\end{equation}
We set (\ref{eq:diffLnErrorCost}) equal to zero, and rearrange the
equation, yielding
\begin{equation}
\Phi(T)\triangleq(\Gamma+1)T\frac{dG(T)}{dT}-(\Gamma+1)G(T)-\sigma=0\label{eq:diffTEqu0}
\end{equation}

We have the following proposition and its corollary, their proofs are given in Appendix \ref{sec:costOptTProof}.

\begin{proposition} \label{prop:optT} Let $T_{0}$ denote a solution
to (\ref{eq:diffTEqu0}), if the solution exists, then the optimal
look-ahead window size $T^{*}$ for problem (\ref{eq:errorCostObj})
is either $\left\lfloor T_{0}\right\rfloor $ or $\left\lceil T_{0}\right\rceil $,
where $\left\lfloor x\right\rfloor $ and $\left\lceil x\right\rceil $
respectively denote the floor (rounding down to integer) and ceiling
(rounding up to integer) of $x$ .\end{proposition}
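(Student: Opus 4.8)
The plan is to show that the continuous surrogate $\theta(T)$ introduced just before (\ref{eq:diffLnErrorCost}) is \emph{unimodal} (quasi-convex) on $T\geq 1$, with its continuous minimizer located exactly at a root $T_0$ of $\Phi$; the integer optimum of (\ref{eq:errorCostObj}) must then be one of the two integers bracketing $T_0$. First I would observe that since $G(T)=F(T)$ at every integer, minimizing the objective of (\ref{eq:errorCostObj}) over integers is the same as minimizing $\theta(T)$ over integers, and since $\ln(\cdot)$ is strictly increasing, this is equivalent to minimizing $\ln\theta(T)$. From (\ref{eq:diffLnErrorCost}) I would rewrite the derivative over a common denominator as
\[
\frac{d\ln\theta}{dT}=\frac{\Phi(T)}{T\left((\Gamma+1)G(T)+\sigma\right)},
\]
whose denominator is strictly positive for $T\geq 1$ (because $G(T)\geq 0$, $\sigma\geq 0$, and $T\geq 1$). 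Hence the sign of $\frac{d\ln\theta}{dT}$ coincides with the sign of $\Phi(T)$ everywhere on the feasible region.

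The key step, and the main obstacle, is to prove that $\Phi(T)$ is non-decreasing, since this is what forces a single sign change at $T_0$. Writing $\Phi(T)=(\Gamma+1)\,h(T)-\sigma$ with $h(T)\triangleq T\,\frac{dG}{dT}-G(T)$, it suffices to show $h$ is non-decreasing. If $G$ were twice differentiable this would be immediate, since then $h'(T)=T\,\frac{d^{2}G}{dT^{2}}\geq 0$ by convexity. The difficulty is that the paper explicitly permits $\frac{dG}{dT}$ to be discontinuous and $\frac{d^{2}G}{dT^{2}}$ to take $+\infty$ values, so I would instead argue purely from convexity of $G$. For any $1\leq T_{1}<T_{2}$, using a subgradient $g(T)$ of $G$ (for instance the right derivative), the supporting-line inequalities $g(T_{1})(T_{2}-T_{1})\leq G(T_{2})-G(T_{1})\leq g(T_{2})(T_{2}-T_{1})$, together with the monotonicity of subgradients $g(T_{1})\leq g(T_{2})$, give
\[
h(T_{2})-h(T_{1})\geq T_{1}\bigl(g(T_{2})-g(T_{1})\bigr)\geq 0 .
\]
This establishes monotonicity of $\Phi$ with no smoothness assumption.

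Finally, since $\Phi$ is non-decreasing with $\Phi(T_{0})=0$, I would conclude $\Phi(T)\leq 0$ for $T\leq T_{0}$ and $\Phi(T)\geq 0$ for $T\geq T_{0}$; by the sign relation above this makes $\theta$ non-increasing on $[1,T_{0}]$ and non-decreasing on $[T_{0},\infty)$, i.e.\ unimodal with continuous minimum at $T_{0}$. Transferring to the integers, every integer $T\leq\lfloor T_{0}\rfloor$ satisfies $\theta(T)\geq\theta(\lfloor T_{0}\rfloor)$ by the non-increasing part, and every integer $T\geq\lceil T_{0}\rceil$ satisfies $\theta(T)\geq\theta(\lceil T_{0}\rceil)$ by the non-decreasing part. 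Therefore the minimizer over $T\in\{1,2,\dots\}$ is either $\lfloor T_{0}\rfloor$ or $\lceil T_{0}\rceil$, whichever yields the smaller value, which is exactly the claim.
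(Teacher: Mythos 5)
Your proof is correct and follows the same overall strategy as the paper's own proof in Appendix \ref{sec:costOptTProof}: identify the sign of $\frac{d\ln\theta}{dT}$ with the sign of $\Phi(T)$, show $\Phi$ is non-decreasing so that $\theta$ decreases up to a root of $\Phi$ and increases after it, then transfer to the integers by rounding. The one step you handle genuinely differently is the monotonicity of $\Phi$, and your version is more rigorous than the paper's. The paper computes $\frac{d\Phi}{dT}=(\Gamma+1)T\frac{d^{2}G(T)}{dT^{2}}\geq 0$, which presupposes a second derivative even though Section \ref{sub:charOptWindLen} explicitly allows $\frac{dG(T)}{dT}$ to be discontinuous and $\frac{d^{2}G(T)}{dT^{2}}$ to take $+\infty$ values, dismissing the issue with a remark that it does not affect the analysis. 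Your subgradient argument --- the supporting-line inequalities $g(T_{1})(T_{2}-T_{1})\leq G(T_{2})-G(T_{1})\leq g(T_{2})(T_{2}-T_{1})$ combined with monotonicity of subgradients to give $h(T_{2})-h(T_{1})\geq T_{1}\left(g(T_{2})-g(T_{1})\right)\geq 0$ --- establishes the same monotonicity with no smoothness assumption at all, so it closes the gap the paper leaves open; this is what the second-derivative computation buys you only under extra regularity. A minor bookkeeping difference: the paper observes that the root set of $\Phi$ may be a whole interval $\left[T_{-},T_{+}\right]$ and locates the continuous minimum inside it, whereas your argument handles this implicitly, since it works for an arbitrary root $T_{0}$ and uses only the weak inequalities $\Phi(T)\leq 0$ for $T\leq T_{0}$ and $\Phi(T)\geq 0$ for $T\geq T_{0}$. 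Your final integer-transfer step (every integer $T\leq\left\lfloor T_{0}\right\rfloor$ satisfies $\theta(T)\geq\theta(\left\lfloor T_{0}\right\rfloor)$, and symmetrically above $\left\lceil T_{0}\right\rceil$) matches the paper's conclusion and is stated more carefully than in the original.
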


\begin{corollary} \label{prop:optTcorr} For window sizes $T$ and
$T+1$, if $\theta(T)<\theta(T+1)$, then the optimal size $T^{*}\leq T$;
if $\theta(T)>\theta(T+1)$, then $T^{*}\geq T+1$; if $\theta(T)=\theta(T+1)$,
then $T^{*}=T$.\end{corollary}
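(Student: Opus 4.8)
The plan is to show that the continuous surrogate $\theta(T)=\tfrac{(\Gamma+1)G(T)+\sigma}{T}$ is \emph{unimodal} (quasiconvex) on $[1,\infty)$, with its turning point at the root $T_0$ of $\Phi$, and then to argue that the integer minimizer of (\ref{eq:errorCostObj}) must bracket $T_0$. The whole argument hinges on identifying the sign of $\theta'(T)$ with the sign of $\Phi(T)$, and on the monotonicity of $\Phi$.

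First I would show $\mathrm{sign}\,\theta'(T)=\mathrm{sign}\,\Phi(T)$. Starting from (\ref{eq:diffLnErrorCost}), I put the two terms over the common denominator $T\bigl[(\Gamma+1)G(T)+\sigma\bigr]$, which is strictly positive because $T\geq1$, $G(T)\geq0$, and $\sigma\geq0$ (excluding the degenerate all-zero case). The resulting numerator is exactly $\Phi(T)$ as defined in (\ref{eq:diffTEqu0}), so $(\ln\theta)'(T)=\Phi(T)\big/\bigl(T[(\Gamma+1)G(T)+\sigma]\bigr)$; since $\ln$ is increasing, $\theta'$ and $(\ln\theta)'$ share the same sign, giving $\mathrm{sign}\,\theta'(T)=\mathrm{sign}\,\Phi(T)$.

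Next I would establish that $\Phi(T)$ is non-decreasing on $[1,\infty)$. Writing $\Phi(T)=(\Gamma+1)\bigl(T\,G'(T)-G(T)\bigr)-\sigma$, the quantity $G(T)-T\,G'(T)$ is precisely the $y$-intercept of the supporting line to the convex function $G$ at $T$, and for a convex function this intercept is non-increasing in $T$; hence $T\,G'(T)-G(T)$ is non-decreasing and so is $\Phi$. (Where $G$ is twice differentiable this is just $\tfrac{d}{dT}[T G'(T)-G(T)]=T\,G''(T)\geq0$, but the supporting-line formulation also covers the points of non-differentiability.) Combining the two facts, $\theta$ is non-increasing for $T\leq T_0$ and non-decreasing for $T\geq T_0$, i.e. unimodal with continuous minimum at $T_0$. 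The restriction of a unimodal function to the integers is again unimodal, so its minimizer over $\{1,2,\dots\}$ lies adjacent to $T_0$, namely at $\lfloor T_0\rfloor$ or $\lceil T_0\rceil$, which proves Proposition~\ref{prop:optT}. Corollary~\ref{prop:optTcorr} then follows from this discrete unimodality: comparing $\theta(T)$ and $\theta(T+1)$ tells us on which side of the trough we sit, so $\theta(T)<\theta(T+1)$ forces $T^*\leq T$, $\theta(T)>\theta(T+1)$ forces $T^*\geq T+1$, and equality places the flat minimum at $T$.

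The main obstacle I anticipate is the non-smoothness of $G$: because the text explicitly permits $dG/dT$ to be discontinuous and $d^2G/dT^2$ to take $+\infty$ values, the monotonicity of $\Phi$ cannot be obtained by naive differentiation and must be extracted from convexity directly through the supporting-line (subgradient) characterization. A secondary point to handle cleanly is the constraint $T\geq1$ together with the ``if the solution exists'' caveat: if $\Phi$ has no root in $[1,\infty)$, then it keeps a single sign there, $\theta$ is monotone, and the optimum sits at the boundary $T=1$ — a case worth flagging separately so that the bracketing claim is stated only under the existence hypothesis.
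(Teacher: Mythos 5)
Your proposal is correct and follows essentially the same route as the paper's proof: identify $\mathrm{sign}\,\theta'(T)$ with $\mathrm{sign}\,\Phi(T)$ via (\ref{eq:diffLnErrorCost})--(\ref{eq:diffTEqu0}), show $\Phi$ is non-decreasing from the convexity of $G$, conclude that $\theta$ is unimodal with its trough bracketing $T_0$, and read off the corollary from discrete unimodality (the paper phrases this last step simply as noting that convexity of $\theta$ was never used). Your two refinements --- the supporting-line (subgradient) argument for the monotonicity of $\Phi$ where $G'$ jumps, in place of the paper's direct use of $T\,\frac{d^{2}G(T)}{dT^{2}}\geq0$, and the explicit boundary case when $\Phi$ has no root so the optimum sits at $T=1$ --- tighten the same argument rather than change it.
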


\subsection{Finding the Optimal Solution \label{sub:predErrorOptWind}}

According to Proposition \ref{prop:optT}, we can solve (\ref{eq:diffTEqu0})
to find the optimal look-ahead window size. When $G(T)$ (and $F(T)$)
can be expressed in some specific analytical forms, the solution to
(\ref{eq:diffTEqu0}) can be found analytically. For example, consider
$G(T)=F(T)=\beta T^{\alpha}$, where $\beta>0$ and $\alpha>1$. In
this case, $T_{0}=\left(\frac{\sigma}{(\Gamma+1)\beta(\alpha-1)}\right)^{\frac{1}{\alpha}}$,
and $T^{*}=\arg\min_{T\in\left\{ \left\lfloor T_{0}\right\rfloor ,\left\lceil T_{0}\right\rceil \right\} }\theta\left(T\right)$.
One can also use such specific forms as an upper bound for a general
function.

When $G(T)$ (and $F(T)$) have more general forms, we can perform
a search on the optimal window size according to the properties discussed
in Section \ref{sub:charOptWindLen}. Because we do not know the convexity
of $\theta(T)$ or $\Phi(T)$, standard numerical methods for solving
(\ref{eq:errorCostObj}) or (\ref{eq:diffTEqu0}) may not be efficient.
However, from Corollary \ref{prop:optTcorr}, we know that the local
minimum of $\theta(T)$ is the global minimum, so we can
develop algorithms that use this property.

The optimal window size $T^{*}$ takes discrete values, so we
can perform a discrete search on $T\in\left\{ 1,2,...,T_{m}\right\} $,
where $T_{m}>1$ is a pre-specified upper limit on the search range.
We then compare $\theta(T)$ with $\theta(T+1)$ and determine the
optimal solution according to Corollary \ref{prop:optTcorr}. One
possible approach is to use binary search, as shown in Algorithm \ref{alg:optTBinSearch},
which has time-complexity of $O\left(\log T_{m}\right)$.

\emph{Remark:} The exact value of $\Gamma$ may be difficult
to find in practice, and (\ref{eq:CostAccuracyBound}) is an upper
bound which may have a gap from the actual value of the left hand-side
of (\ref{eq:CostAccuracyBound}). Therefore, in practice, we can regard
$\Gamma$ as a tuning parameter, which can be tuned so that the resulting
window size $T^{*}$ yields good performance. For a similar reason,
the parameter $\sigma$ can also be regarded as a tuning parameter
in practice.

\begin{algorithm} 
\caption{Binary search for finding optimal window size} 
\label{alg:optTBinSearch} 
{\footnotesize
\begin{algorithmic}[1] 

\STATE Initialize variables $T_- \leftarrow 1$ and $T_+ \leftarrow T_{m}$

\REPEAT 

\STATE $T \leftarrow \left\lfloor \left(T_- + T_+\right)/2 \right\rfloor $
\IF {$\theta(T) < \theta(T+1)$}
  \STATE $T_+ \leftarrow T$
\ELSIF {$\theta(T) > \theta(T+1)$}
  \STATE $T_- \leftarrow T+1$
\ELSIF {$\theta(T) = \theta(T+1)$}
  \RETURN $T$ //Optimum found
\ENDIF

\UNTIL{$T_- = T_+$}

\RETURN $T_-$  
\end{algorithmic} 
}
\end{algorithm}

\section{Simulation Results }

\label{sec:simulation} 

In the simulations, we assume that there exist a backend cloud (with
index $k_{0}$) and multiple MMCs. A service instance can be placed
either on one of the MMCs or on the backend cloud. We first define
\begin{equation}
R(y)\triangleq\begin{cases}
\frac{1}{1-\frac{y}{Y}}, & \textrm{if }y<Y\\
+\infty, & \textrm{if }y\geq Y
\end{cases}\label{eq:simCostR}
\end{equation}
where $Y$ denotes the capacity of a single MMC. Then, we define the
local and migration costs as in (\ref{eq:localCostBasedOnPiDef}), (\ref{eq:migCostBasedOnPiDef}), with 
\begin{align}
& u_{k,t}(y_{k}(t))\triangleq\begin{cases}
\tilde{g}y_{k}(t), & \textrm{if }k=k_{0}\\
y_{k}(t)R(y_{k}(t))+gr_{k}(t),\!\!\!\!\! & \textrm{if }k\neq k_{0}
\end{cases}\label{eq:simCost1} \\
 & w_{kl,t}(y_{k}(t-1),y_{l}(t),z_{kl}(t))\nonumber \\
 & \quad \triangleq\begin{cases}
\tilde{h}z_{kl}(t),\quad\quad\quad\quad\quad\textrm{ if }k=k_{0}\textrm{ or/and }l=k_{0}\\
z_{kl}(t)\left(R(y_{k}(t))+R(y_{l}(t))\right)+hs_{kl}(t),\textrm{ else}
\end{cases}\label{eq:simCost2}
\end{align}
where $y_{k}(t)$ and $z_{kl}(t)$ are sum resource consumptions defined as
in Section \ref{sub:PerfAnalysisDefinitions}, $r_{k}(t)$ is the
sum of the distances between each instance running on cloud $k$ and all users connected to this instance, $s_{kl}(t)$ is the 
distance between clouds $k$ and $l$ multiplied by the number migrated
instances from cloud $k$ to cloud $l$, and $\tilde{g},g,\tilde{h},h$
are simulation parameters (specified later). The \emph{distance} here is expressed
as the number of hops on the communication network.

Similar to Section \ref{sub:onlinePerformanceOtherCostFunc}
(but with migration cost here), we consider the scenario where the connection status to the backend cloud remains relatively unchanged.
Thus, in (\ref{eq:simCost1}) and (\ref{eq:simCost2}), $u_{k,t}(\cdot)$ and $w_{kl,t}(\cdot,\cdot,\cdot)$ are linear in  $y_{k}(t)$ and $z_{kl}(t)$ when involving the backend cloud $k_{0}$.
When not involving the
backend  cloud, the cost functions have two terms. The first term
contains $R(\cdot)$ and is related to the queuing
delay of data processing/transmission, because
$R(\cdot)$ has a similar form as the average queueing delay expression
from queueing theory. The additional coefficient $y_{k}(t)$ or
$z_{kl}(t)$ scales the delay by the total amount of workload so that
experiences of all instances (hosted at a cloud or being
migrated) are considered. This expression is also a widely used objective
(such as in \cite{refViNEYard}) for pushing the system towards a
load-balanced state. The second term has the distance of data transmission or migration, which is related to propagation delay. Thus, both queueing and propagation delays are captured in the cost definition above.

Note that the above defined cost functions are heterogeneous, because
the cost definitions are different depending on whether the backend
cloud is involved or not. Therefore, we cannot directly apply the
existing MDP-based approaches \cite{MDPFollowMeICC2014,wang2014milcom,wang2015IFIPNetworking}
to solve this problem. We consider users continuously connected to
service instances, so we also cannot apply the technique in \cite{urgaonkar2015performance}.

\subsection{Synthetic Arrivals and Departures}

To evaluate how much worse the online placement (presented in Section
\ref{sec:onlinePlacement}) performs compared to the optimal offline
placement (presented in Section \ref{sec:optSolutionGivenLookAhead}),
we first consider a setting with synthetic instance arrivals and departures.
For simplicity, we ignore the migration cost and set $g=0$ to make
the local cost independent of the distance $r_{k}(t)$. We set $Y=5$,
$\tilde{g}=3$, and the total number of clouds $K=5$ among which
one is the backend cloud. We simulate $4000$ arrivals, where the
local resource consumption of each arrival is uniformly distributed
within interval $[0.5,1.5]$. Before a new instance arrives, we generate
a random variable $H$ that is uniformly distributed within $[0,1]$.
If $H<0.1$, one randomly selected instance that is currently running
in the system (if any) departs. We only focus on the cost in a single
timeslot and assume that arrival and departure events happen within
this slot. The online placement greedily places each instance, while
the offline placement considers all instances as an entirety. We compare
the cost of the proposed online placement algorithm with a lower bound
of the cost of the optimal placement. The optimal lower bound is obtained
by solving an optimization problem that allows every instance to be
arbitrarily split across multiple clouds, in which case the problem
becomes a convex optimization problem due to the relaxation of integer
constraints. 

The simulation is run with $100$ different random seeds. Fig.~\ref{fig:simSynthetic}
shows the overall results. We see that the cost is convex increasing
when the number of arrived instances is small, and it increases linearly
when the number of instances becomes large, because in the latter
case, the MMCs are close to being overloaded and most instances are
placed at the backend  cloud. The fact that the average performance
ratio (defined as $\textrm{mean}\left(\widetilde{D}(\mathbf{x})\right) \Big/ \textrm{mean}\left(\widetilde{D}(\mathbf{x}^{*})\right)$)
converges with increasing number of instances in Fig.~\ref{fig:simSynthetic}(b)
supports our analysis in Section \ref{sub:onlinePerformanceOtherCostFunc}.

\begin{figure}
\centering \subfigure[]{ \includegraphics[width=0.48\columnwidth]{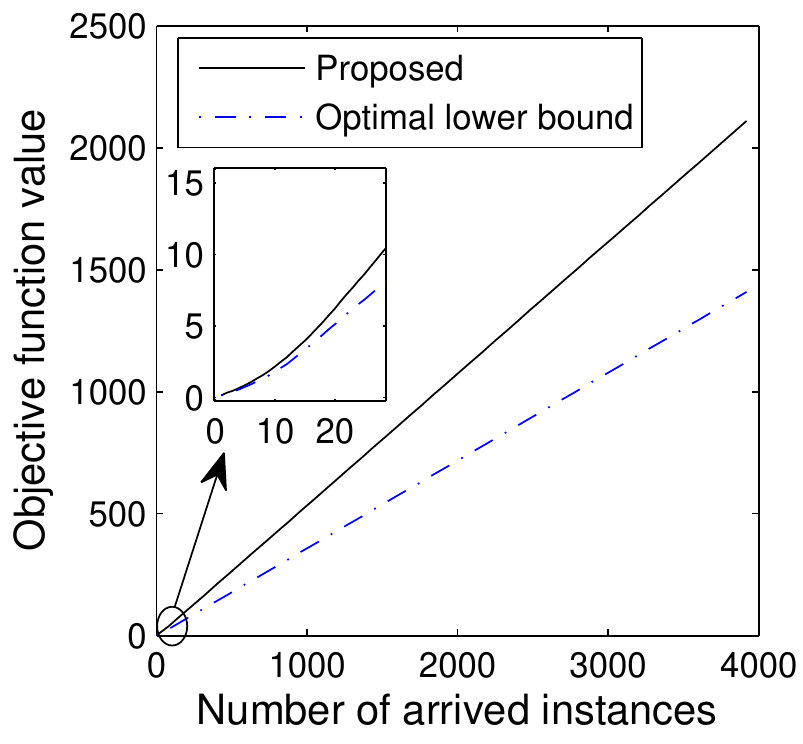}} \subfigure[]{\includegraphics[width=0.48\columnwidth]{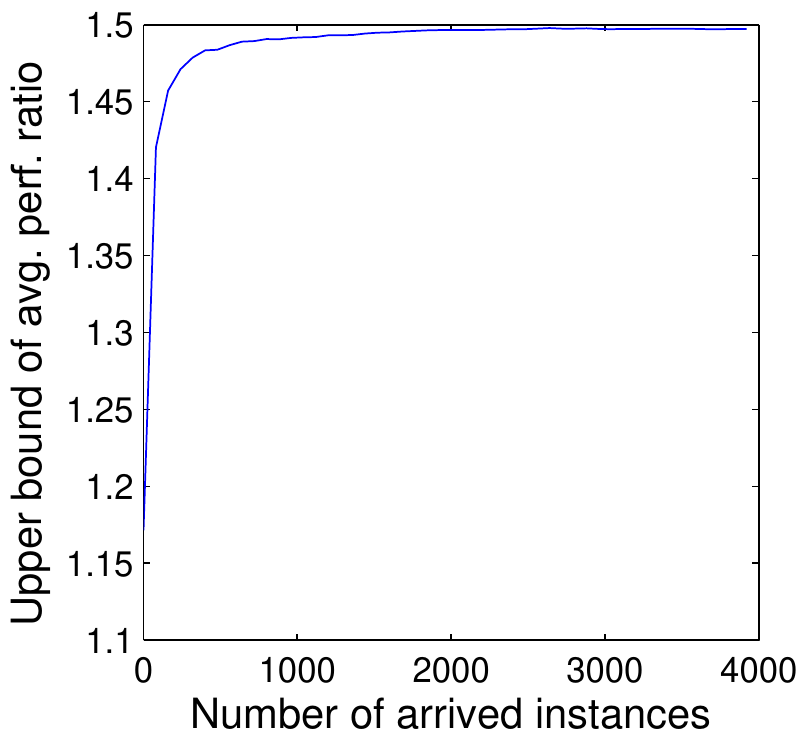}}

\protect\caption{Results with synthetic traces: (a) objective function value, (b) average
performance ratio.}
\label{fig:simSynthetic}
\end{figure}

\subsection{Real-World Traces}

To further evaluate the performance while considering the impact of
prediction errors and look-ahead window size, we perform simulations
using real-world San Francisco taxi traces obtained on the day of
May 31, 2008 \cite{comsnets09piorkowskiMobilityTraces,epfl-mobility-2009-02-24}.
Similar to \cite{wang2015IFIPNetworking}, we assume that the MMCs are deployed according to a hexagonal cellular
structure in the central area of San Francisco (the center of this area has latitude $37.762$ and longitude $-122.43$). The distance between the center points of adjacent
cells is $1000\textrm{ m}$. We consider $K-1=91$ cells (thus MMCs),
one backend cloud, and $50$ users (taxis) in total and not all the
users are active at a given time. A user is considered active if its most recent location update was received within $600$~s from the current time and its location is within the area covered by MMCs. Each user may require at most one service at a time from the cloud
when it is active, where the duration that each active user requires
(or, does not require) service is exponentially distributed with a
mean value of $50$ slots (or, $10$ slots). When a user requires
service, we assume that there is a service instance for this particular
request (independent from other users) running on one of the clouds.
The local and migration (if migration occurs) resource consumptions
of each such instance are set to $1$. We assume that the online algorithm
has no knowledge on the departure time of instances and set $T_{\textrm{life}}=\infty$
for all instances. Note that the taxi locations in the dataset are
unevenly distributed,
so it is still possible that one MMC hosts multiple services although
the maximum possible number of instances ($50$) is smaller than the
number of MMCs ($91$). The distance metric (for evaluating $r_{k}(t)$
and $s_{kl}(t)$) is defined as the minimum number of hops between
two locations on the cellular structure. The physical time corresponding
to each slot is set to $60$~s. We set the parameters $\Gamma=1.5,\sigma=2,Y=5,\tilde{g}=\tilde{h}=3,g=h=0.2$.
The cost prediction error is assumed to have an upper bound in the
form of $F(T)=\beta T^{\alpha}$ (see Section \ref{sub:predErrorOptWind}),
where we fix $\alpha=1.1$. The prediction error is generated randomly while ensuring that the upper
bound is satisfied.

The simulation results are shown in Fig. \ref{fig:simRealWorld}.
In Fig. \ref{fig:simRealWorld}(a), we can see that the result of
the proposed online placement approach (E) performs close to the case
of online placement with precise future knowledge (D), where approach
D assumes that all the future costs as well as instance arrival and
departure times are precisely known, but we still use the online algorithm
to determine the placement (i.e., we greedily place each instance),
because the offline algorithm is too time consuming due to its high
complexity. The proposed method E also outperforms alternative methods
including only placing on MMCs and never migrate the service instance
after initialization (A), always following the user when the user
moves to a different cell (B), as well as always placing the service
instance on the backend cloud (C). In approaches A and B, the instance
placement is determined greedily so that the distance between the
instance and its corresponding user is the shortest, subject to the
MMC capacity constraint $Y$ so that the costs are finite (see (\ref{eq:simCostR})).
The fluctuation of the cost during the day is because of different
number of users that require the service (thus different system load).
In Fig. \ref{fig:simRealWorld}(b), we show the average cost over
the day with different look-ahead window sizes and $\beta$ values
(a large $\beta$ indicates a large prediction error), where the average
results from $8$ different random seeds are shown. We see that the
optimal window size ($T^{*}$) found from the method proposed in Section
\ref{sec:findLookAhead} is close to the window size that brings the
lowest cost, which implies that the proposed method for finding $T^{*}$
is reasonably accurate. 

Additional results on the amount of computation time and floating-point operations (FLOP) for the results in Fig. \ref{fig:simRealWorld}(a) are given in Appendix \ref{sec:additionalSimResult}.

\begin{figure}
\centering \subfigure[]{ \includegraphics[width=0.55\columnwidth]{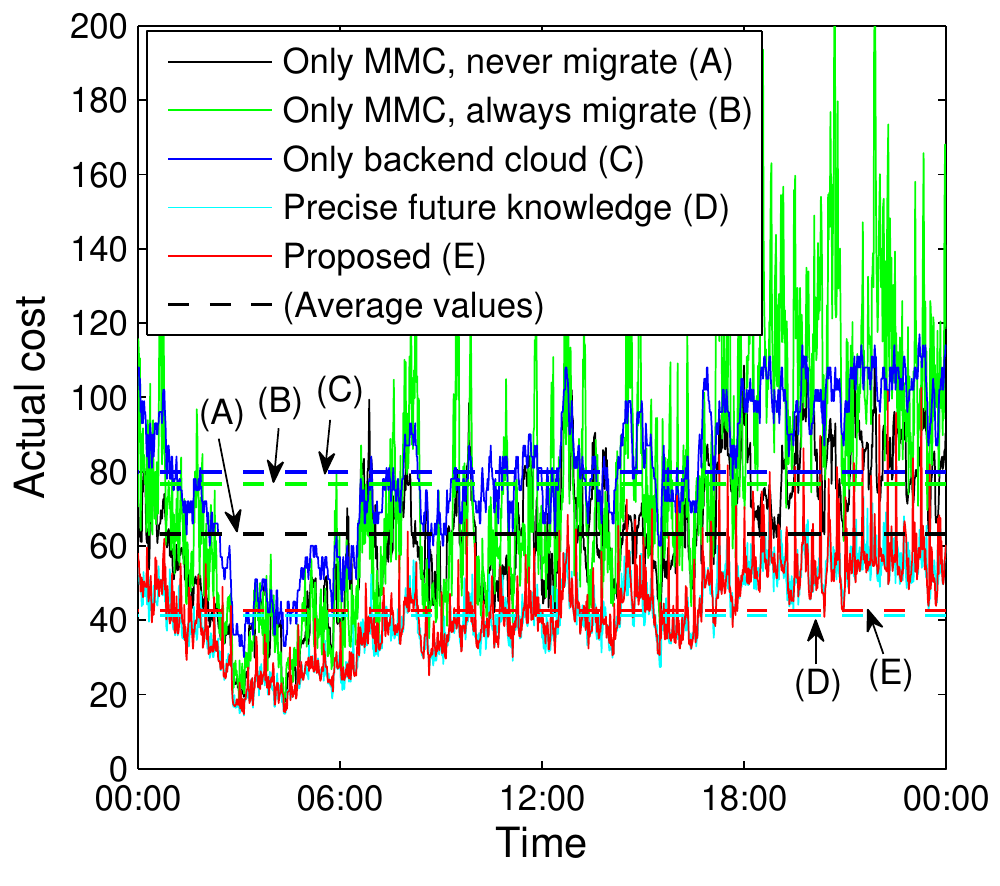}} \subfigure[]{\includegraphics[width=0.42\columnwidth]{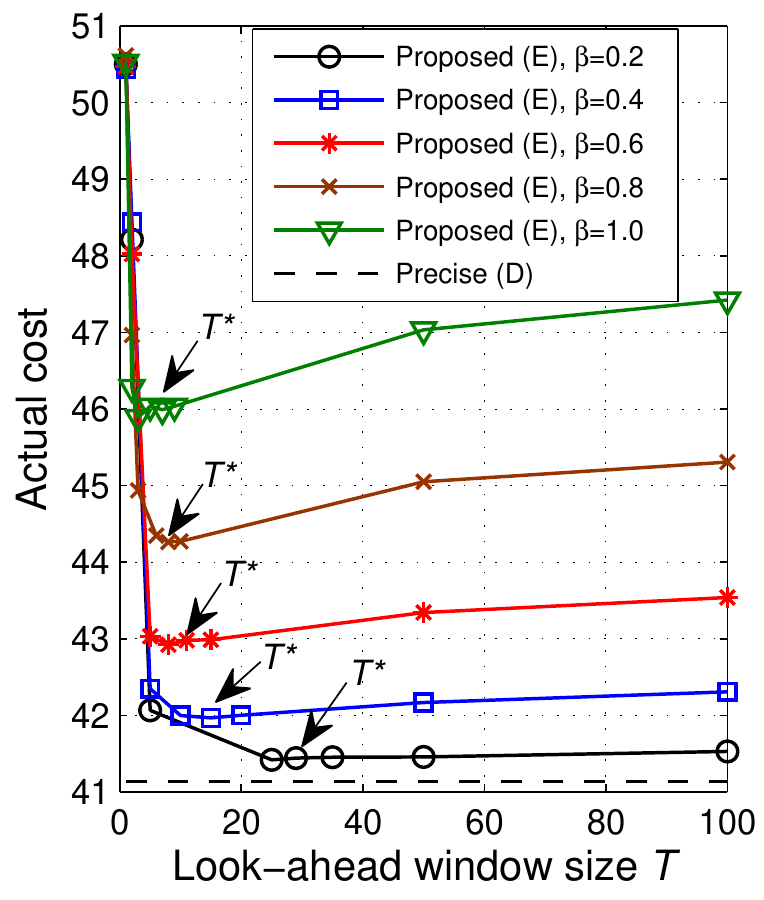}}

\protect\caption{Results with real-world traces (where the costs are summed over all
clouds, i.e., the $A(t)$ values): (a) Actual costs at different time
of a day, where $\beta=0.4$ for the proposed method E. The arrows
point to the average values over the whole day of the corresponding
policy. (b) Actual costs averaged over the whole day.}
\label{fig:simRealWorld}
\end{figure}

\section{Conclusions \label{sec:Conclusions}}

In this paper, we have studied the dynamic service placement problem
for MMCs with multiple service instances, where the future costs
are predictable within a known accuracy. We have proposed algorithms
for both offline and online placements, as well as a method for finding
the optimal look-ahead window size. The performance of the proposed
algorithms has been evaluated both analytically and using simulations
with synthetic instance arrival/departure traces and real-world user mobility traces of San Francisco
taxis. The simulation results support our analysis.

Our results are based on a general cost function that can represent different aspects in practice. As long as one can assign a cost for every possible configuration, the proposed algorithms are applicable, and the time-complexity results hold. The optimality gap for the online algorithm has been analyzed for a narrower class of functions, which is still very general as discussed earlier in the paper.

The theoretical framework used for analyzing the performance of online
placement can be extended to incorporate more general cases, such
as those where there exist multiple types of resources in each cloud.
We envision that the performance results are similar. We also note
that our framework can be applied to analyzing a large class of online
resource allocation problems that have convex objective functions.

\section*{Acknowledgment}

Contribution of S. Wang is related to his previous affiliation with Imperial College London. Contributions of R. Urgaonkar and T. He are related to their previous affiliation with IBM T. J. Watson Research Center. Contribution of M. Zafer is not related to his current employment at Nyansa Inc.

This research was sponsored in part by the U.S. Army Research Laboratory
and the U.K. Ministry of Defence and was accomplished under Agreement
Number W911NF-06-3-0001. The views and conclusions contained in this
document are those of the author(s) and should not be interpreted
as representing the official policies, either expressed or implied,
of the U.S. Army Research Laboratory, the U.S. Government, the U.K.
Ministry of Defence or the U.K. Government. The U.S. and U.K. Governments
are authorized to reproduce and distribute reprints for Government
purposes notwithstanding any copyright notation hereon.

\bibliographystyle{IEEEtran}
\bibliography{IEEEabrv,migrationWithPredictedCost}

\begin{IEEEbiography}[{\includegraphics[width=1in,height=1.25in,clip,keepaspectratio]{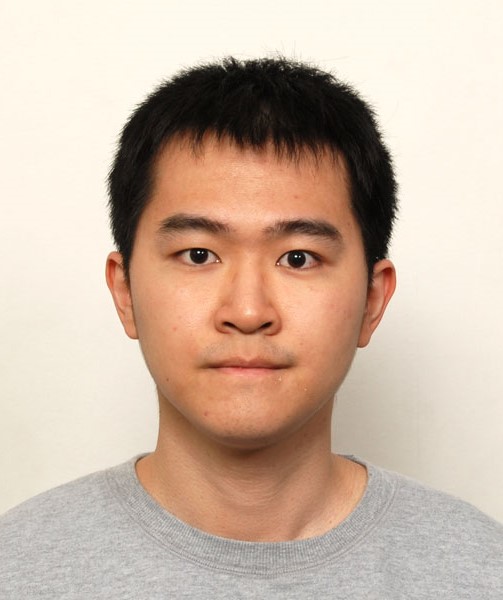}}]{Shiqiang Wang}
received the PhD degree from Imperial College London, United Kingdom, in 2015. Before that, he received the MEng and BEng degrees from Northeastern University, China, respectively in 2011 and 2009. He is currently a Research Staff Member at IBM T.J. Watson Research Center, United States, where he also worked as a Graduate-Level Co-op in the summers of 2014 and 2013. In the autumn of 2012, he worked at NEC Laboratories Europe, Germany.
His research interests include dynamic control mechanisms, optimization algorithms, protocol design and prototyping, with applications to mobile cloud
computing, hybrid and heterogeneous networks, ad-hoc networks, and cooperative communications. He has over 30 scholarly publications, and has served as a
technical program committee (TPC) member or reviewer for a number of international journals and conferences. He received the 2015 Chatschik Bisdikian Best Student Paper Award of the Network and Information Sciences International Technology Alliance (ITA).
\end{IEEEbiography}

\begin{IEEEbiography}[{\includegraphics[width=1in,height=1.25in,clip,keepaspectratio]{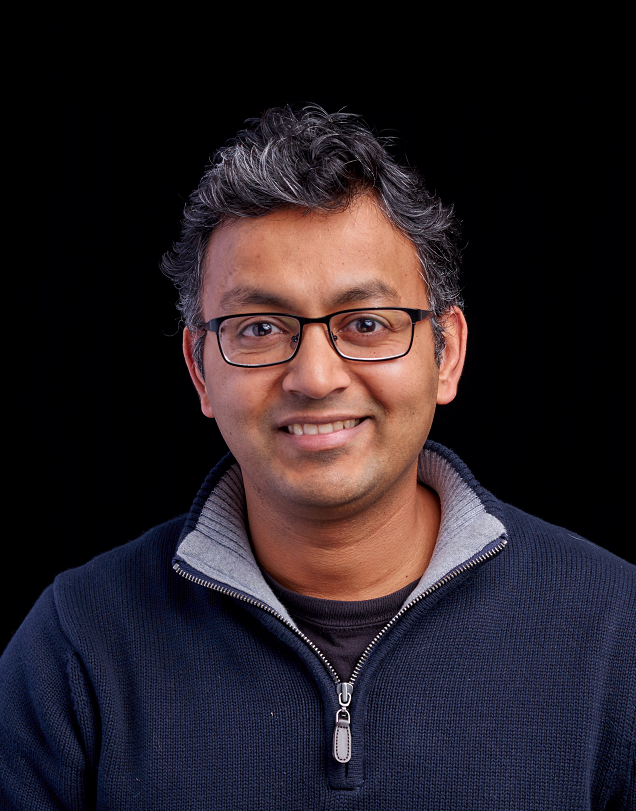}}]{Rahul Urgaonkar}
is an Operations Research Scientist with the Modeling and Optimization group at Amazon. Previously, he was with IBM Research where he was a task leader on the US Army Research Laboratory (ARL) funded Network Science Collaborative Technology Alliance (NS CTA) program. He was also a Primary Researcher in the US/UK International Technology Alliance (ITA) research programs. His research is in the area of stochastic optimization, algorithm design and control with applications to communication networks and cloud-computing systems. Dr. Urgaonkar obtained his Masters and PhD degrees from the University of Southern California and his Bachelors degree (all in Electrical Engineering) from the Indian Institute of Technology Bombay.
\end{IEEEbiography}

\begin{IEEEbiography}[{\includegraphics[width=1in,height=1.25in,clip,keepaspectratio]{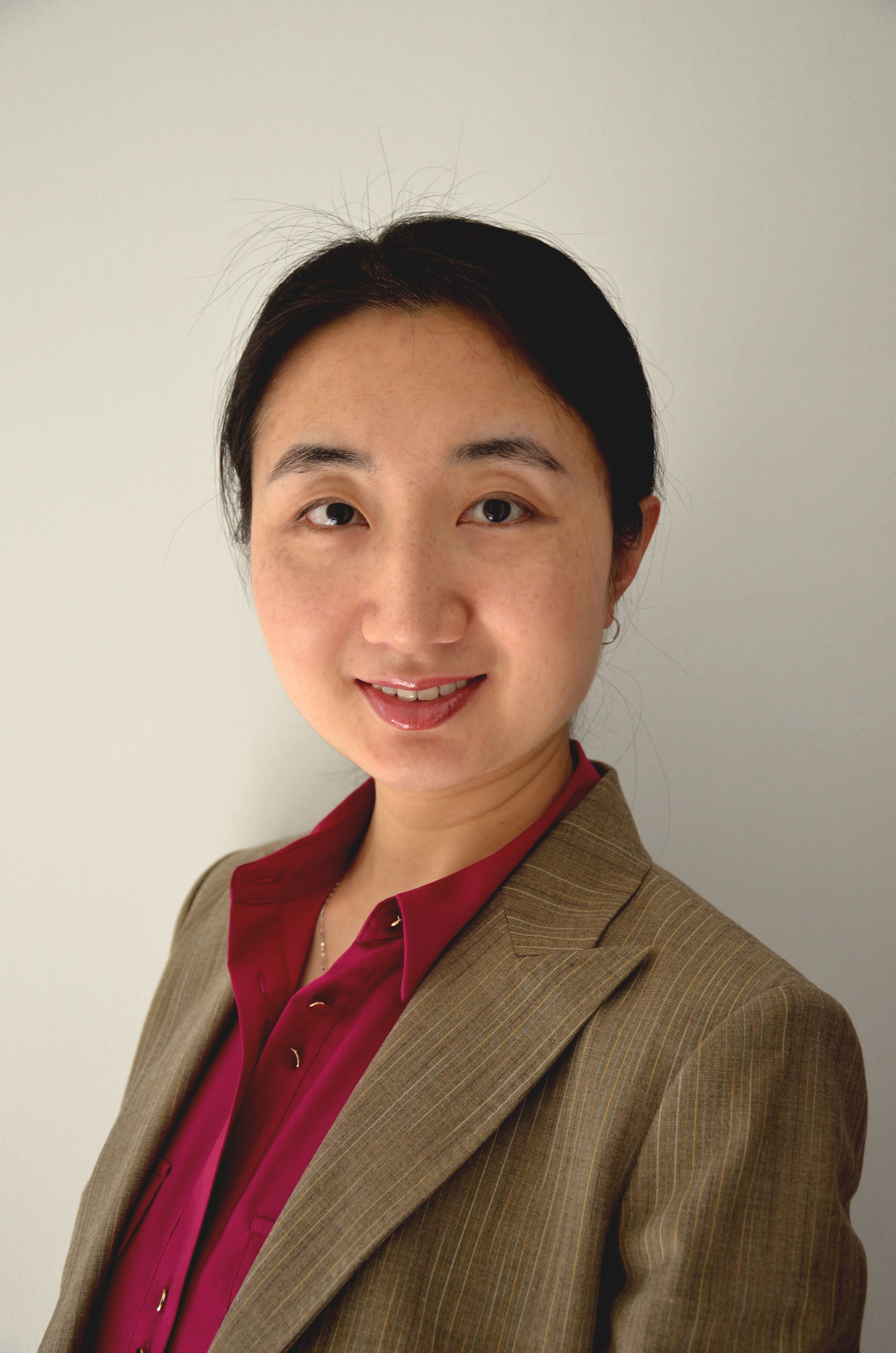}}]{Ting He}
received the B.S. degree in computer science from Peking University, China, in 2003 and the Ph.D. degree in electrical and computer engineering from Cornell University, Ithaca, NY, in 2007.
She is an Associate Professor in the School of Electrical Engineering and Computer Science at Pennsylvania State University, University Park, PA. From 2007 to 2016, she was a Research Staff Member in the Network Analytics Research Group at IBM T.J. Watson Research Center, Yorktown Heights, NY. Her work is in the broad areas of network modeling and optimization, statistical inference, and information theory. 
Dr. He is a senior member of IEEE. She has served as the Membership co-chair of ACM N2Women and the GHC PhD Forum committee. She has served on the TPC of a range of communications and networking conferences, including IEEE INFOCOM (Distinguished TPC Member), IEEE SECON, IEEE WiOpt, IEEE/ACM IWQoS, IEEE MILCOM, IEEE ICNC, IFIP Networking, etc. She received 
the Research Division Award and the Outstanding Contributor Awards from IBM in 2016, 2013, and 2009. She received the Most Collaboratively  
Complete Publications Award by ITA in 2015, the Best Paper Award at the 2013 International Conference on Distributed Computing Systems
(ICDCS), a Best Paper Nomination at the 2013 Internet Measurement Conference (IMC), and 
the Best Student Paper Award at the 2005 International Conference on Acoustic, Speech and Signal Processing (ICASSP). Her students received the Outstanding Student Paper Award at the 2015 ACM SIGMETRICS and the Best Student Paper Award at the 2013 ITA Annual Fall Meeting.\end{IEEEbiography}

\begin{IEEEbiography}[{\includegraphics[width=1in,height=1.25in,clip,keepaspectratio]{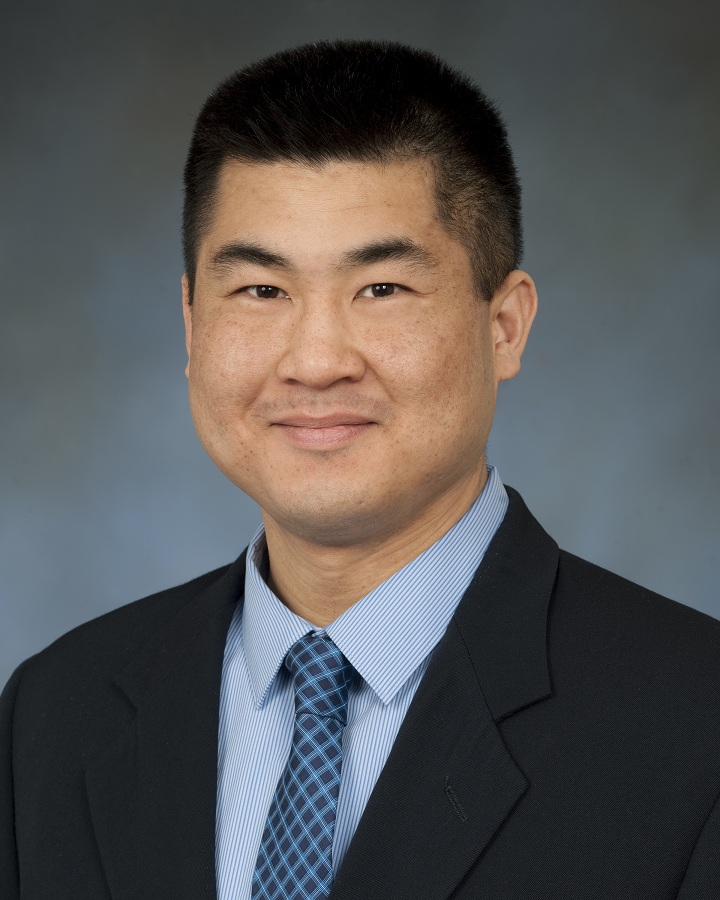}}]{Kevin Chan}
is research scientist with the Computational and Information Sciences Directorate at the U.S. Army Research Laboratory (Adelphi, MD). Previously, he was an ORAU postdoctoral research fellow at ARL. His research interests are in network science and dynamic distributed computing, with past work in dynamic networks, trust and distributed decision making and quality of information. He has been an active researcher in ARL's collaborative programs, the Network Science Collaborative Technology Alliance and Network and Information Sciences International Technology Alliance. Prior to ARL, he received a PhD in Electrical and Computer Engineering (ECE) and MSECE from Georgia Institute of Technology (Atlanta, GA. He also received a BS in ECE/EPP from Carnegie Mellon University (Pittsburgh, PA).
\end{IEEEbiography}

\begin{IEEEbiography}[{\includegraphics[width=1in,height=1.25in,clip,keepaspectratio]{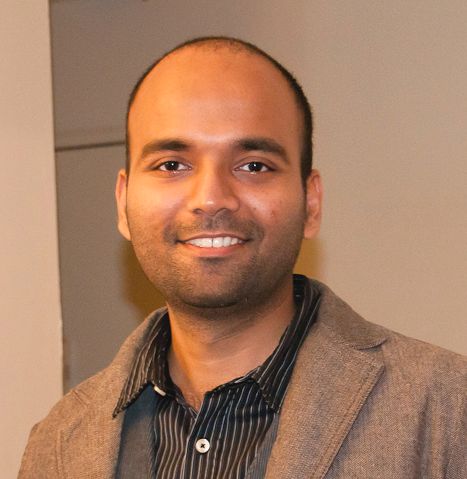}}]{Murtaza Zafer}
received the B.Tech. degree in Electrical Engineering from the Indian Institute of Technology, Madras, in 2001, and the Ph.D. and S.M. degrees in Electrical Engineering and Computer Science from the Massachusetts Institute of Technology
in 2003 and 2007 respectively. He currently works at Nyansa Inc., where he heads the analytics portfolio of the company, building a scalable big data system for analyzing network data. Prior to this, he was a Senior Research Engineer at Samsung Research
America, where his research focused on machine learning, deep-neural networks, big data and cloud computing systems. From 2007-2013 he was a Research Scientist at the IBM T.J. Watson Research Center, New York, where his research focused on computer and communication networks, data-analytics and cloud computing. He was a technical lead on several research projects in the US-UK funded multi-institutional International Technology Alliance program with emphasis on fundamental research in mobile wireless networks. He has previously worked at the Corporate R\&D center of Qualcomm Inc. and at Bell Laboratories, Alcatel-Lucent Inc., during the summers of 2003 and 2004 respectively. 

Dr. Zafer serves as an Associate Editor for the IEEE Network magazine. He is a co-recipient of the Best Paper Award at the IEEE/IFIP International Symposium on Integrated Network Management, 2013, and the Best Student Paper award at the International Symposium on Modeling and Optimization in Mobile, Ad Hoc, and Wireless Networks (WiOpt) in 2005, a recipient of the Siemens and Philips Award in 2001 and a recipient of several invention achievement awards at IBM Research.
\end{IEEEbiography}

\begin{IEEEbiography}[{\includegraphics[width=1in,height=1.25in,clip,keepaspectratio]{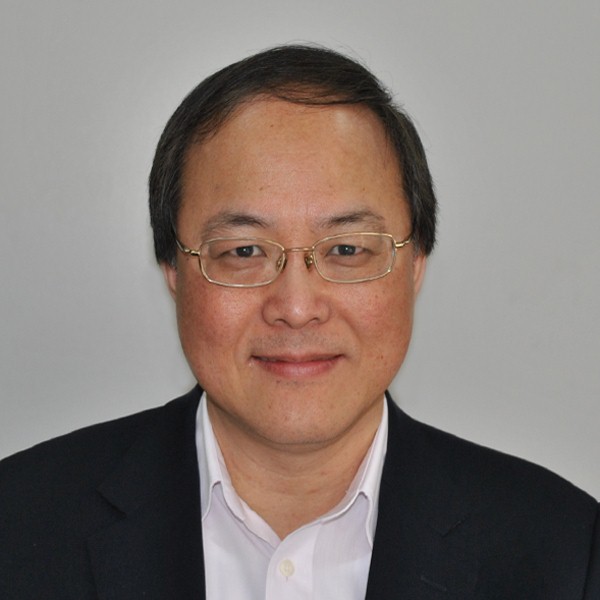}}]{Kin K. Leung}
received his B.S. degree from the Chinese University of Hong Kong in 1980, and his M.S. and Ph.D. degrees from University of California, Los
Angeles, in 1982 and 1985, respectively. He joined AT\&T Bell Labs in New Jersey in 1986 and worked at its successors, AT\&T Labs and Lucent Technologies
Bell Labs, until 2004. Since then, he has been the Tanaka Chair Professor in the Electrical and Electronic Engineering (EEE), and Computing Departments at
Imperial College in London. He is the Head of Communications and Signal Processing Group in the EEE Department. His current research focuses on protocols,
optimization and modeling of various wireless networks. He also works on multi-antenna and cross-layer designs for these networks. He received the
Distinguished Member of Technical Staff Award from AT\&T Bell Labs (1994), and was a co-recipient of the Lanchester Prize Honorable Mention Award (1997).
He was elected an IEEE Fellow (2001), received the Royal Society Wolfson Research Merits Award (2004-09) and became a member of Academia Europaea (2012).
He also received several best paper awards, including the IEEE PIMRC 2012 and ICDCS 2013. He has actively served on conference committees.  He serves as a
member (2009-11) and the chairman (2012-15) of the IEEE Fellow Evaluation Committee for Communications Society. He was a guest editor for the IEEE JSAC,
IEEE Wireless Communications and the MONET journal, and as an editor for the JSAC: Wireless Series, IEEE Transactions on Wireless Communications and IEEE
Transactions on Communications. Currently, he is an editor for the ACM Computing Survey and International Journal on Sensor Networks.
\end{IEEEbiography}

\clearpage

\appendices{}

\section{Summary of Notations}

\label{app:SummaryNotations} 

The main notations used in this paper are summarized in Table \ref{tab:mainNotations}. 

\begin{table*} \protect\caption{Summary of main notations}  \label{tab:mainNotations} 

\renewcommand{\arraystretch}{1.4} 

\center{\small

\begin{tabularx}{\linewidth}
{>{\setlength\hsize{0.3\hsize}\centering}X X} 
\hline Notation & Description\\ 
\hline 

$\triangleq$ & Defined to be equal to \\

$\cdot$ & Dot-product \\

$K$ & Total number of clouds \\

$k,l\in\{1,2,...,K\}$ & Cloud index \\

$t$ & Timeslot index \\

$T$ & Size of look-ahead window \\

$M$ & Total/maximum number of service instances under consideration \\

$i$ & Service instance index \\

$\boldsymbol{\pi}(t_{0},...,t_{n})$ & Configuration matrix for slots $\{t_{0},...,t_{n}\}$, written in $\boldsymbol{\pi}$ for short \\

$U(t,\boldsymbol{\pi}(t))$ & Local cost in slot $t$\\

$W(t,\boldsymbol{\pi}(t-1),\boldsymbol{\pi}(t))$ & Migration cost between slots $t-1$ and $t$\\

$C_{\boldsymbol{\pi}(t-1,t)}(t)$ & Sum of local and migration costs in slot $t$ when following configuration $\boldsymbol{\pi}(t-1,t)$\\

$A_{\boldsymbol{\pi}(t-1,t)}(t)$ & Actual cost (actual value of $C_{\boldsymbol{\pi}(t-1,t)}(t)$) \\

$D_{\boldsymbol{\pi}(t-1,t)}^{t_0}(t)$ & Predicted cost (predicted value of $C_{\boldsymbol{\pi}(t-1,t)}(t)$ when prediction is made at slot $t_0$) \\

$\epsilon(\tau)$ & Equal to $\max_{\boldsymbol{\pi}(t-1,t),t_{0}}\left|A_{\boldsymbol{\pi}(t-1,t)}(t_{0}+\tau)-D_{\boldsymbol{\pi}(t-1,t)}^{t_{0}}(t_{0}+\tau)\right|$, the maximum error when looking ahead for $\tau$ slots \\

$\Lambda$ & Set of all possible configuration sequences \\

$\boldsymbol{\lambda}\in\Lambda$ & Configuration sequence (when considering a particular instance $i$, it is equal to the $i$th column of $\boldsymbol{\pi}$) \\

$\Lambda_{i} \subseteq \Lambda$ & Subset of configuration sequences that conform to the arrival and departure times of instance $i$ \\

$x_{i\boldsymbol{\lambda}}$ & Binary variable specifying whether instance $i$ operates in configuration sequence $\boldsymbol{\lambda}$ \\

$a_{i\boldsymbol{\lambda}k}(t)$ & Local resource consumption at cloud $k$ in slot $t$ when instance $i$ is operating under configuration sequence $\boldsymbol{\lambda}$ \\

$b_{i\boldsymbol{\lambda}kl}(t)$ & Migration resource consumption when instance $i$ operating under configuration sequence $\boldsymbol{\lambda}$ is assigned to cloud $k$ in slot $t-1$ and to cloud $l$ in slot $t$ \\

$y_{k}(t)$ & Equal to $\sum_{i=1}^{M}\sum_{\boldsymbol{\lambda}\in\Lambda}a_{i\boldsymbol{\lambda}k}(t)x_{i\boldsymbol{\lambda}}$, sum local resource consumption at cloud $k$ \\

$z_{kl}(t)$ & Equal to $\sum_{i=1}^{M}\sum_{\boldsymbol{\lambda}\in\Lambda}b_{i\boldsymbol{\lambda}kl}(t)x_{i\boldsymbol{\lambda}}$, sum migration resource consumption from cloud $k$ to cloud $l$ \\

$u_{k,t}\left(y_{k}(t)\right)$ & Local cost at cloud $k$ in timeslot $t$ \\

$w_{kl,t}\left(y_{k}(t-1),y_{l}(t),z_{kl}(t)\right)$ & Migration cost from cloud $k$ to cloud $l$ between slots $t-1$ and $t$ \\

$\left(\mathbf{g}\right)_{h_{1}h_{2}}$ (or $\left(\mathbf{g}\right)_{h_{1}h_{2}h_{3}}$) & The $(h_{1},h_{2})$th (or $(h_{1},h_{2},h_{3})$th) element in an arbitrary vector or matrix $\mathbf{g}$ \\

$\mathbf{y}$ & Vector with elements $\left(\mathbf{y}\right)_{kt} \triangleq y_{k}(t)$ \\

$\mathbf{z}$ & Vector with elements $\left(\mathbf{z}\right)_{klt} \triangleq z_{kl}(t)$ \\

$\mathbf{x}$ & Vector with elements $\left(\mathbf{x}\right)_{i\boldsymbol{\lambda}} \triangleq x_{i\boldsymbol{\lambda}}$ \\

$\mathbf{a}_{i\boldsymbol{\lambda}}$ & Vector with elements $\left(\mathbf{a}_{i\boldsymbol{\lambda}}\right)_{kt} \triangleq a_{i\boldsymbol{\lambda}k}(t)$ \\

$\mathbf{b}_{i\boldsymbol{\lambda}}$ & Vector with elements $\left(\mathbf{b}_{i\boldsymbol{\lambda}}\right)_{klt} \triangleq b_{i\boldsymbol{\lambda}kl}(t) $ \\

$\widetilde{D}\left(\mathbf{x}\right)$, $\widetilde{D}\left(\mathbf{y},\mathbf{z}\right)$ & Sum (predicted) cost of all $T$ slots, defined in (\ref{eq:costDef_T_xi}) \\

$\phi$, $\psi$ & Parameters related to the performance gap, defined in (\ref{eq:alphaDef}) and (\ref{eq:betaDef})  \\

$\Gamma$ & Competitive ratio of Algorithm \ref{alg:onlineHighLevel}  \\

$\sigma$ & Parameter related to the migration cost, defined in (\ref{eq:sigmaDefForMigCostBound}) \\

$F(T)$ & Equal to $\sum_{t=t_{0}}^{t_{0}+T-1}\epsilon(t-t_{0})$, the sum-error starting from slot $t_{0}$ up to slot $t_{0}+T-1$ \\

$G(T)$ & The continuous time extension of $F(T)$, see Section \ref{sub:charOptWindLen} \\

$\theta(T)$ & Equal to $\frac{(\Gamma+1)G(T)+\sigma}{T}$, the upper bound in (\ref{eq:CostAccuracyBound}) after replacing $F(T)$ with $G(T)$ \\

\hline
\end{tabularx} 

Note: The timeslot argument $t$ may be omitted in some parts of the discussion for simplicity. Vector elements are referred to with multiple indexes, but we regard vectors as single-indexed vectors for the purposes of vector concatenation (i.e., joining two vectors into one vector) and gradient computation.
}

\end{table*}

\section{Proof of Proposition \ref{prop:NPHardness}}

\label{sec:NPHardProof}

We show that problem (\ref{eq:optWithVectors}) can be reduced from
the partition problem, which is known to be NP-complete \cite[Corollary 15.28]{korte2002combinatorial}.
The partition problem is defined as follows.

\begin{definition} \textbf{(Partition Problem)} Given \emph{positive
integers} $v_{1},v_{2},...,v_{M}$, is there a subset $\mathcal{S}\subseteq\{1,2,...,M\}$
such that $\sum_{j\in\mathcal{S}}v_{j}=\sum_{j\in\mathcal{S}^{c}}v_{j}$,
where $\mathcal{S}^{c}$ is the complement set of $\mathcal{S}$?
\end{definition}

Similarly to the proof of \cite[Theorem 18.1]{korte2002combinatorial},
we define a decision version of the bin packing problem, where we
assume that there are $M$ items each with size 
\[
a_{i}\triangleq\frac{2v_{i}}{\sum_{j=1}^{M}v_{j}}
\]
for all $i\in\{1,2,...,M\}$, and the problem is to determine whether
these $M$ items can be packed into \emph{two bins }each with \emph{unit
size} (i.e., its size is equal to one). It is obvious that this bin
packing decision problem is equivalent to the partition problem.

To solve the above defined bin packing decision problem, we can set
$t_{0}=1$, $T=1$, and $K=2$ in (\ref{eq:optWithVectors}). Because
we attempt to place all items, we set $\Lambda_{i}=\{1,2\}$ for all
$i$. By definition, $w_{kl,t}(\cdot,\cdot,\cdot)=0$ for $t=1$.
We omit the subscript $t$ in the following as we only consider a
single slot. We define $a_{i\boldsymbol{\lambda}k}=a_{i}$ for all
$\boldsymbol{\lambda},k$, and define 
\begin{equation}
u_{k}(y)=\begin{cases}
\epsilon y, & \textrm{if }y\leq1\\
\frac{2\epsilon}{c}(y-1)+\epsilon, & \textrm{if }y>1
\end{cases}\label{eq:NPHardProof}
\end{equation}
where $c\triangleq\frac{1}{\sum_{j=1}^{M}v_{j}}$, and $\epsilon>0$
is an arbitrary constant. 

Because $v_{i}$ is a positive integer for any $i$, we have that
$\frac{a_{i}}{c}=2v_{i}$ is always a positive integer, and $\frac{1}{c}=\sum_{j=1}^{M}v_{j}$
is also always a positive integer. It follows that $y$ can only be
integer multiples of $c$ (where we recall that $y$ is the sum of
$a_{i}$ for those items $i$ that are placed in the bin), and there
exists a positive integer $c'\triangleq\sum_{j=1}^{M}v_{j}$ such
that $c'c=1$. Thus, when $y>1$, we always have $y-1\geq c$. Therefore,
the choice of $u_{k}(y)$ in (\ref{eq:NPHardProof}) guarantees that
$u_{k}(y_{k})\geq3\epsilon>2\epsilon$ whenever bin $k$ ($k\in\{1,2\}$)
exceeds its size, and $\sum_{k=1}^{2}u_{k}\left(y_{k}\right)\leq2\epsilon$
when no bin has exceeded its size. At the same time, $u_{k}(y)$ satisfies
Assumption \ref{condition:costFunc} as long as $c\leq2$. 

By the definition of $c$, we always have $c\leq2$ because $\sum_{j=1}^{M}v_{j}\geq1$.
To solve the bin packing decision problem defined above (thus the
partition problem), we can solve (\ref{eq:optWithVectors}) with the
above definitions. If the solution is not larger than $2\epsilon$,
the packing is feasible and the answer to the partition problem is
``yes''; otherwise, the packing is infeasible and the answer to the
partition problem is ``no''. It follows that problem (\ref{eq:optWithVectors})
is ``at least as hard as'' the partition problem, which proves that
(\ref{eq:optWithVectors}) is NP-hard.

\section{Proof of Proposition \ref{prop:performanceGapResult}}

\label{sec:performanceGapProof}

We first introduce a few lemmas, with results used later in the proof. 

\begin{lemma}\label{lemma:viBound_gradientRelationship1}For any
instance $j$ and configuration sequence $\boldsymbol{\lambda}$,
we have
\begin{equation}
\frac{\partial\widetilde{D}}{\partial x_{j\boldsymbol{\lambda}}}\left(\mathbf{x}\right)=\nabla_{\mathbf{y,z}}\widetilde{D}\left(\mathbf{y},\mathbf{z}\right)\cdot\left(\mathbf{a}_{j\boldsymbol{\lambda}},\mathbf{b}_{j\boldsymbol{\lambda}}\right)
\end{equation}
 \end{lemma}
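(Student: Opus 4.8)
The plan is to derive the identity by a direct application of the multivariate chain rule, exploiting the fact that $\widetilde{D}$ depends on $\mathbf{x}$ only through the intermediate variables $\mathbf{y}$ and $\mathbf{z}$, and that the latter are \emph{linear} functions of $\mathbf{x}$. First I would recall from Section \ref{sub:PerfAnalysisDefinitions} the definitions $y_{k}(t)=\sum_{i=1}^{M}\sum_{\boldsymbol{\lambda}\in\Lambda}a_{i\boldsymbol{\lambda}k}(t)x_{i\boldsymbol{\lambda}}$ and $z_{kl}(t)=\sum_{i=1}^{M}\sum_{\boldsymbol{\lambda}\in\Lambda}b_{i\boldsymbol{\lambda}kl}(t)x_{i\boldsymbol{\lambda}}$. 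Because these are linear maps, differentiating with respect to a single coordinate $x_{j\boldsymbol{\lambda}}$ is immediate:
\[
\frac{\partial y_{k}(t)}{\partial x_{j\boldsymbol{\lambda}}}=a_{j\boldsymbol{\lambda}k}(t)=\left(\mathbf{a}_{j\boldsymbol{\lambda}}\right)_{kt},\qquad \frac{\partial z_{kl}(t)}{\partial x_{j\boldsymbol{\lambda}}}=b_{j\boldsymbol{\lambda}kl}(t)=\left(\mathbf{b}_{j\boldsymbol{\lambda}}\right)_{klt}.
\]

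Next I would write $\widetilde{D}(\mathbf{x})=\widetilde{D}\bigl(\mathbf{y}(\mathbf{x}),\mathbf{z}(\mathbf{x})\bigr)$ and apply the chain rule, summing over all intermediate coordinates indexed by $(k,t)$ and $(k,l,t)$:
\[
\frac{\partial\widetilde{D}}{\partial x_{j\boldsymbol{\lambda}}}=\sum_{k,t}\frac{\partial\widetilde{D}}{\partial y_{k}(t)}\,\frac{\partial y_{k}(t)}{\partial x_{j\boldsymbol{\lambda}}}+\sum_{k,l,t}\frac{\partial\widetilde{D}}{\partial z_{kl}(t)}\,\frac{\partial z_{kl}(t)}{\partial x_{j\boldsymbol{\lambda}}}.
\]
Substituting the partials from the first step turns the right-hand side into a weighted sum whose terms are exactly the products of the components of $\nabla_{\mathbf{y,z}}\widetilde{D}$ with the matching components of $\mathbf{a}_{j\boldsymbol{\lambda}}$ and $\mathbf{b}_{j\boldsymbol{\lambda}}$. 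Recognizing this as the dot product of the concatenated gradient $\nabla_{\mathbf{y,z}}\widetilde{D}(\mathbf{y},\mathbf{z})$ with the concatenated vector $\left(\mathbf{a}_{j\boldsymbol{\lambda}},\mathbf{b}_{j\boldsymbol{\lambda}}\right)$ delivers the claimed identity.

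The only subtlety, and the step where I would be most careful, is the index bookkeeping: one must check that the $(k,t)$-sum over the $\mathbf{y}$-block and the $(k,l,t)$-sum over the $\mathbf{z}$-block align with the single-indexed coordinate layout used for concatenation and gradient computation in the \textbf{Vector Notation} paragraph. Since the paper fixes that layout explicitly, the two block sums merge cleanly into one dot product over the concatenated index set, so no real obstacle remains. Differentiability of $\widetilde{D}$ needed to write these partials is understood from the smoothness of the component costs $u_{k,t}$ and $w_{kl,t}$ assumed throughout (cf.\ the derivative conditions in Assumption \ref{condition:costFunc}); the linearity of $\mathbf{y},\mathbf{z}$ in $\mathbf{x}$ then makes the composed function differentiable as well.
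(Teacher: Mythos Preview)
Your proposal is correct and follows essentially the same argument as the paper: apply the chain rule through the intermediate variables $y_{k}(t)$ and $z_{kl}(t)$, use linearity to identify $\partial y_{k}(t)/\partial x_{j\boldsymbol{\lambda}}=a_{j\boldsymbol{\lambda}k}(t)$ and $\partial z_{kl}(t)/\partial x_{j\boldsymbol{\lambda}}=b_{j\boldsymbol{\lambda}kl}(t)$, and recognize the resulting sum as the dot product $\nabla_{\mathbf{y,z}}\widetilde{D}(\mathbf{y},\mathbf{z})\cdot(\mathbf{a}_{j\boldsymbol{\lambda}},\mathbf{b}_{j\boldsymbol{\lambda}})$. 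Your added remarks on index alignment and differentiability are more explicit than the paper's terse derivation but do not change the substance.
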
\begin{proof}
\begin{align*}
\frac{\partial\widetilde{D}}{\partial x_{j\boldsymbol{\lambda}}}\left(\mathbf{x}\right) & =\!\!\!\sum_{t=t_{0}}^{t_{0}+T-1}\Bigg[\sum_{k=1}^{K}\frac{\partial\widetilde{D}}{\partial y_{k}(t)}\left(\mathbf{x}\right)\cdot\frac{\partial y_{k}(t)}{\partial x_{j\boldsymbol{\lambda}}}\left(\mathbf{x}\right)+\\
 & \quad\sum_{k=1}^{K}\sum_{l=1}^{K}\frac{\partial\widetilde{D}}{\partial z_{kl}(t)}\left(\mathbf{x}\right)\cdot\frac{\partial z_{kl}(t)}{\partial x_{j\boldsymbol{\lambda}}}\left(\mathbf{x}\right)\Bigg]\\
 & =\!\!\!\sum_{t=t_{0}}^{t_{0}+T-1}\Bigg[\sum_{k=1}^{K}\frac{\partial\widetilde{D}}{\partial y_{k}(t)}\left(\mathbf{x}\right)\cdot a_{j\boldsymbol{\lambda}k}(t)+\\
 & \quad\sum_{k=1}^{K}\sum_{l=1}^{K}\frac{\partial\widetilde{D}}{\partial z_{kl}(t)}\left(\mathbf{x}\right)\cdot b_{j\boldsymbol{\lambda}kl}(t)\Bigg]\\
 & =\nabla_{\mathbf{y,z}}\widetilde{D}\left(\mathbf{y},\mathbf{z}\right)\cdot\left(\mathbf{a}_{j\boldsymbol{\lambda}},\mathbf{b}_{j\boldsymbol{\lambda}}\right)
\end{align*}
where we recall that $y_{k}(t)$ and $z_{kl}(t)$ are functions of
$x_{j\boldsymbol{\lambda}}$ for all $j$ and $\boldsymbol{\lambda}$,
thus they are also functions of vector $\mathbf{x}$. \end{proof}

\begin{lemma}\label{lemma:viBound_gradientRelationship2}For any
instance $j$ and configuration sequence $\boldsymbol{\lambda}$,
we have
\begin{equation}
\nabla_{\mathbf{x}}\widetilde{D}\left(\mathbf{x}\right)\cdot\mathbf{x}=\nabla_{\mathbf{y,z}}\widetilde{D}\left(\mathbf{y},\mathbf{z}\right)\cdot\left(\mathbf{y},\mathbf{z}\right)
\end{equation}
 \end{lemma}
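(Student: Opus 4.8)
The plan is to reduce this identity to Lemma~\ref{lemma:viBound_gradientRelationship1} together with the defining linear relations $y_{k}(t)=\sum_{i=1}^{M}\sum_{\boldsymbol{\lambda}\in\Lambda}a_{i\boldsymbol{\lambda}k}(t)x_{i\boldsymbol{\lambda}}$ and $z_{kl}(t)=\sum_{i=1}^{M}\sum_{\boldsymbol{\lambda}\in\Lambda}b_{i\boldsymbol{\lambda}kl}(t)x_{i\boldsymbol{\lambda}}$. First I would write the left-hand side out componentwise as $\nabla_{\mathbf{x}}\widetilde{D}(\mathbf{x})\cdot\mathbf{x}=\sum_{i=1}^{M}\sum_{\boldsymbol{\lambda}\in\Lambda}\frac{\partial\widetilde{D}}{\partial x_{i\boldsymbol{\lambda}}}(\mathbf{x})\,x_{i\boldsymbol{\lambda}}$, and then invoke Lemma~\ref{lemma:viBound_gradientRelationship1} to replace each factor $\frac{\partial\widetilde{D}}{\partial x_{i\boldsymbol{\lambda}}}(\mathbf{x})$ by $\nabla_{\mathbf{y,z}}\widetilde{D}(\mathbf{y},\mathbf{z})\cdot(\mathbf{a}_{i\boldsymbol{\lambda}},\mathbf{b}_{i\boldsymbol{\lambda}})$.

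The next step is to expand that dot product over the cloud/slot indices, giving a sum in which $\frac{\partial\widetilde{D}}{\partial y_{k}(t)}$ is multiplied by $a_{i\boldsymbol{\lambda}k}(t)\,x_{i\boldsymbol{\lambda}}$ (and similarly the migration term carries $\frac{\partial\widetilde{D}}{\partial z_{kl}(t)}\,b_{i\boldsymbol{\lambda}kl}(t)\,x_{i\boldsymbol{\lambda}}$). Because all the sums are finite, I can freely interchange the order of summation and push the sum over $(i,\boldsymbol{\lambda})$ inside, past the partial-derivative factors, which do not depend on $i$ or $\boldsymbol{\lambda}$. The inner sums $\sum_{i}\sum_{\boldsymbol{\lambda}}a_{i\boldsymbol{\lambda}k}(t)x_{i\boldsymbol{\lambda}}$ and $\sum_{i}\sum_{\boldsymbol{\lambda}}b_{i\boldsymbol{\lambda}kl}(t)x_{i\boldsymbol{\lambda}}$ then collapse, by definition, to $y_{k}(t)$ and $z_{kl}(t)$ respectively.

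What remains is exactly $\sum_{t}\sum_{k}\frac{\partial\widetilde{D}}{\partial y_{k}(t)}y_{k}(t)+\sum_{t}\sum_{k}\sum_{l}\frac{\partial\widetilde{D}}{\partial z_{kl}(t)}z_{kl}(t)$, which is the componentwise expansion of $\nabla_{\mathbf{y,z}}\widetilde{D}(\mathbf{y},\mathbf{z})\cdot(\mathbf{y},\mathbf{z})$, i.e.\ the right-hand side. This is essentially the Euler-type relation arising from $\widetilde{D}$ depending on $\mathbf{x}$ only through the linear maps $\mathbf{x}\mapsto\mathbf{y}$ and $\mathbf{x}\mapsto\mathbf{z}$. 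I do not anticipate a genuine obstacle; the only thing to be careful about is the bookkeeping of the index families $(t,k)$ and $(t,k,l)$ when interchanging summations, and confirming that the partial-derivative factors are constant with respect to the $(i,\boldsymbol{\lambda})$ summation so that the defining sums for $y_{k}(t)$ and $z_{kl}(t)$ can be recognized cleanly.
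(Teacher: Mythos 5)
Your proposal is correct and matches the paper's own proof: both expand $\nabla_{\mathbf{x}}\widetilde{D}(\mathbf{x})\cdot\mathbf{x}$ componentwise, substitute Lemma~\ref{lemma:viBound_gradientRelationship1} for each partial derivative $\frac{\partial\widetilde{D}}{\partial x_{i\boldsymbol{\lambda}}}$, and then use linearity to pull the sum over $(i,\boldsymbol{\lambda})$ inside so that the defining sums for $y_{k}(t)$ and $z_{kl}(t)$ collapse to $(\mathbf{y},\mathbf{z})$. The only cosmetic difference is that you expand the dot product over the $(k,t)$ and $(k,l,t)$ indices explicitly, whereas the paper exchanges the summation at the level of whole vectors, which is the same interchange argument.
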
\begin{proof}
\begin{align*}
 & \nabla_{\mathbf{x}}\widetilde{D}\left(\mathbf{x}\right)\cdot\mathbf{x}\\
 & =\sum_{j=1}^{M}\sum_{\boldsymbol{\lambda}\in\Lambda}\frac{\partial\widetilde{D}}{\partial x_{j\boldsymbol{\lambda}}}\left(\mathbf{x}\right)\cdot x_{j\boldsymbol{\lambda}}\\
 & =\sum_{j=1}^{M}\sum_{\boldsymbol{\lambda}\in\Lambda}\nabla_{\mathbf{y,z}}\widetilde{D}\left(\mathbf{y},\mathbf{z}\right)\cdot\left(\mathbf{a}_{j\boldsymbol{\lambda}},\mathbf{b}_{j\boldsymbol{\lambda}}\right)\cdot x_{j\boldsymbol{\lambda}}\\
 & =\nabla_{\mathbf{y,z}}\widetilde{D}\left(\mathbf{y},\mathbf{z}\right)\cdot\left(\sum_{j=1}^{M}\sum_{\boldsymbol{\lambda}\in\Lambda}\left(\mathbf{a}_{j\boldsymbol{\lambda}},\mathbf{b}_{j\boldsymbol{\lambda}}\right)\cdot x_{j\boldsymbol{\lambda}}\right)\\
 & =\nabla_{\mathbf{y,z}}\widetilde{D}\left(\mathbf{y},\mathbf{z}\right)\cdot\left(\mathbf{y},\mathbf{z}\right)
\end{align*}
where the second step follows from Lemma \ref{lemma:viBound_gradientRelationship1},
the last step follows from the definition of vectors $\mathbf{y},\mathbf{z},\mathbf{a}_{j\boldsymbol{\lambda}},\mathbf{b}_{j\boldsymbol{\lambda}}$.
\end{proof}

We introduce some additional notations that are used in the proof
below. Recall that the values of vectors $\mathbf{x}$, $\mathbf{y}$,
and $\mathbf{z}$ may vary over time due to service arrivals and departures.
Let $\mathbf{x}_{j}^{(j)}$, $\mathbf{y}_{j}^{(j)}$, and $\mathbf{z}_{j}^{(j)}$
respectively denote the values of $\mathbf{x}$, $\mathbf{y}$, and
$\mathbf{z}$ \emph{immediately after} instance $j$ is placed; and
let $\mathbf{x}_{j-1}^{(j)}$, $\mathbf{y}_{j-1}^{(j)}$, and $\mathbf{z}_{j-1}^{(j)}$
respectively denote the values of $\mathbf{x}$, $\mathbf{y}$, and
$\mathbf{z}$ \emph{immediately before} instance $j$ is placed. We
note that the values of $\mathbf{x}$, $\mathbf{y}$, and $\mathbf{z}$
may change after placing each instance. Therefore, the notions of
``before'', ``after'', and ``time'' (used below) here correspond to
the sequence of service instance placement, instead of the actual
physical time.

We then introduce vectors that only consider the placement up to the
$j$th service instance, which are necessary because the proof below
uses an iterative approach. Let $\mathbf{x}_{j}$, $\mathbf{y}_{j}$,
and $\mathbf{z}_{j}$ respectively denote the values of $\mathbf{x}$,
$\mathbf{y}$, and $\mathbf{z}$ at \emph{any time after} placing
instance $j$ (where instance $j$ can be either still running in
the system or already departed) while \emph{ignoring the placement
of any subsequent instances} $j'>j$ (if any). This means, in vector
$\mathbf{x}_{j}$, we set $\left(\mathbf{x}_{j}\right)_{i\boldsymbol{\lambda}}\triangleq x_{i\boldsymbol{\lambda}}$
for any $i\leq j$ and $\boldsymbol{\lambda}$, and set $\left(\mathbf{x}_{j}\right)_{i\boldsymbol{\lambda}}\triangleq0$
for any $i>j$ and $\boldsymbol{\lambda}$, although the value of
$x_{i\boldsymbol{\lambda}}$ at the current time of interest may be
non-zero for some $i>j$ and $\boldsymbol{\lambda}$. Similarly, in
vectors $\mathbf{y}_{j}$ and $\mathbf{z}_{j}$, we only consider
the resource consumptions up to instance $j$, i.e., $\left(\mathbf{y}_{j}\right)_{kt}\triangleq\sum_{i=1}^{j}\sum_{\boldsymbol{\lambda}\in\Lambda}a_{i\boldsymbol{\lambda}k}(t)x_{i\boldsymbol{\lambda}}$
and $\left(\mathbf{z}_{j}\right)_{klt}\triangleq\sum_{i=1}^{j}\sum_{\boldsymbol{\lambda}\in\Lambda}b_{i\boldsymbol{\lambda}kl}(t)x_{i\boldsymbol{\lambda}}$
for any $k$, $l$, and $t$.

We assume that the last service instance that has arrived before the
current time of interest has index $M$, thus $\mathbf{x}=\mathbf{x}_{M}$,
$\mathbf{y}=\mathbf{y}_{M}$, and $\mathbf{z}=\mathbf{z}_{M}$.

Because an instance will never come back after it has departed (even
if an instance of the same type comes back, it will be given a new
index), we have $\mathbf{y}{}_{j-1}\leq\mathbf{y}_{j-1}^{(j)}$ and
$\mathbf{z}{}_{j-1}\leq\mathbf{z}_{j-1}^{(j)}$, where the inequalities
are defined element-wise for the vector. 

Define $v_{j}\triangleq\widetilde{D}\left(\mathbf{y}_{j}^{(j)},\mathbf{z}_{j}^{(j)}\right)-\widetilde{D}\left(\mathbf{y}_{j-1}^{(j)},\mathbf{z}_{j-1}^{(j)}\right)$
to denote the increase in the sum cost $\widetilde{D}(\mathbf{y},\mathbf{z})$
(or, equivalently, $\widetilde{D}(\mathbf{x})$) at the time when
placing service $j$. Note that after this placement, the value of
$\widetilde{D}\left(\mathbf{y}_{j},\mathbf{z}_{j}\right)-\widetilde{D}\left(\mathbf{y}_{j-1},\mathbf{z}_{j-1}\right)$
may vary over time, because some services $i\leq j$ may leave the
system, but the value of $v_{j}$ is only taken when service $j$
is placed upon its arrival.

\begin{lemma}\label{lemma:viBound2}When Assumption \ref{condition:costFunc}
is satisfied, for any $M$, we have
\begin{equation}
\widetilde{D}\left(\mathbf{x}_{M}\right)\leq\sum_{j=1}^{M}v_{j}
\end{equation}
\end{lemma}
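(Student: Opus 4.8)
The plan is to prove the bound by a telescoping decomposition in the order of instance placement, reducing it to a per-instance inequality. Since $\mathbf{y}_0=\mathbf{z}_0=\mathbf{0}$ and the normalizations $u_{k,t}(0)=0$, $w_{kl,t}(0,0,0)=0$ give $\widetilde{D}(\mathbf{0},\mathbf{0})=0$, I would write
\[
\widetilde{D}(\mathbf{x}_M)=\widetilde{D}(\mathbf{y}_M,\mathbf{z}_M)=\sum_{j=1}^{M}\left[\widetilde{D}(\mathbf{y}_j,\mathbf{z}_j)-\widetilde{D}(\mathbf{y}_{j-1},\mathbf{z}_{j-1})\right].
\]
It then suffices to show, for each $j$, that the current-time increment of adding instance $j$ is no larger than its arrival-time increment, i.e. $\widetilde{D}(\mathbf{y}_j,\mathbf{z}_j)-\widetilde{D}(\mathbf{y}_{j-1},\mathbf{z}_{j-1})\le v_j$. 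Summing this over $j$ gives the lemma.

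For the per-step inequality I would compare the two increments through two monotone moves. Let $\Delta_j\triangleq(\mathbf{a}_{j\boldsymbol{\lambda}_j},\mathbf{b}_{j\boldsymbol{\lambda}_j})$ be the resource that instance $j$ adds at arrival under its chosen sequence $\boldsymbol{\lambda}_j$, so that $(\mathbf{y}_j^{(j)},\mathbf{z}_j^{(j)})=(\mathbf{y}_{j-1}^{(j)},\mathbf{z}_{j-1}^{(j)})+\Delta_j$, and let $\widetilde{\Delta}_j$ be the (possibly truncated) resource that instance $j$ contributes at the current time, so that $(\mathbf{y}_j,\mathbf{z}_j)=(\mathbf{y}_{j-1},\mathbf{z}_{j-1})+\widetilde{\Delta}_j$. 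Because instance $j$ may have unpredictably departed, its sequence is truncated to zeros after departure, whence $\mathbf{0}\le\widetilde{\Delta}_j\le\Delta_j$ element-wise. The first move uses monotonicity (Lemma \ref{lemma:convexityOfH}): since $\widetilde{D}$ is non-decreasing and $\widetilde{\Delta}_j\le\Delta_j$,
\[
\widetilde{D}\!\left((\mathbf{y}_{j-1},\mathbf{z}_{j-1})+\widetilde{\Delta}_j\right)-\widetilde{D}(\mathbf{y}_{j-1},\mathbf{z}_{j-1})\le\widetilde{D}\!\left((\mathbf{y}_{j-1},\mathbf{z}_{j-1})+\Delta_j\right)-\widetilde{D}(\mathbf{y}_{j-1},\mathbf{z}_{j-1}).
\]
The second move shifts the base upward from $(\mathbf{y}_{j-1},\mathbf{z}_{j-1})$ to $(\mathbf{y}_{j-1}^{(j)},\mathbf{z}_{j-1}^{(j)})$, which is legitimate because $\mathbf{y}_{j-1}\le\mathbf{y}_{j-1}^{(j)}$ and $\mathbf{z}_{j-1}\le\mathbf{z}_{j-1}^{(j)}$ (earlier instances only depart, never return); this turns the right-hand side above into exactly $v_j$, provided the increment of $\widetilde{D}$ under the fixed displacement $\Delta_j$ is non-decreasing in the base point.

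The hard part is justifying this last comparison of increments at two different base points. I would establish it from the specific additive structure of $\widetilde{D}$ in (\ref{eq:costDef_T_xi}): the local terms $u_{k,t}(y_k(t))$ are univariate convex, for which the increasing-difference property is immediate from one-dimensional convexity, while the migration terms $w_{kl,t}(y_k(t-1),y_l(t),z_{kl}(t))$ receive the displacement of $\Delta_j$ in arguments that all move in the same non-negative direction, so their increment is likewise monotone in the base under Assumption \ref{condition:costFunc}. I emphasize that bare joint convexity plus monotonicity is \emph{not} by itself enough for such a comparison (a convex non-decreasing function of several variables need not have increments that are monotone in the base once cross-terms are present), so the argument must genuinely lean on this coordinatewise structure and be verified term-by-term rather than by appeal to a generic convexity fact. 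The remaining work is routine bookkeeping with the definitions in the excerpt: tracking which of the instances $1,\dots,j-1$ have departed so that $(\mathbf{y}_{j-1},\mathbf{z}_{j-1})\le(\mathbf{y}_{j-1}^{(j)},\mathbf{z}_{j-1}^{(j)})$ really holds, and keeping $v_j$ frozen at the placement instant while $\widetilde{D}(\mathbf{y}_j,\mathbf{z}_j)-\widetilde{D}(\mathbf{y}_{j-1},\mathbf{z}_{j-1})$ is read at the current time of interest. Combining the two moves gives the per-step bound, and summing over $j$ yields $\widetilde{D}(\mathbf{x}_M)\le\sum_{j=1}^{M}v_j$.
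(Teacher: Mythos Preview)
Your approach is essentially the paper's: a telescoping sum reducing to a per-instance bound, established via two monotone moves. The paper applies the moves in the opposite order---first shifting the base from $(\mathbf{y}_{j-1},\mathbf{z}_{j-1})$ to $(\mathbf{y}_{j-1}^{(j)},\mathbf{z}_{j-1}^{(j)})$, then enlarging the displacement from $(\Delta\mathbf{y}_j,\Delta\mathbf{z}_j)$ to $(\mathbf{a}_{j\boldsymbol{\lambda}_0},\mathbf{b}_{j\boldsymbol{\lambda}_0})$---but this is immaterial.

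Your caution about the base-shifting step is well placed and in fact goes beyond the paper: the paper justifies that step simply by ``the convex non-decreasing property of $\widetilde{D}$,'' which, as you correctly observe, does \emph{not} by itself guarantee that increments under a fixed nonnegative displacement are monotone in the base for a multivariate function. Your plan to argue term-by-term from the additive structure in (\ref{eq:costDef_T_xi}) is the right instinct and immediately handles the univariate local-cost terms $u_{k,t}$. For the trivariate migration terms $w_{kl,t}$, however, your stated justification (``arguments that all move in the same non-negative direction'') still appeals only to convexity and monotonicity, which you yourself flagged as insufficient; to close this rigorously you would need an additional property (e.g., nonnegative mixed second partials, which holds for the polynomial forms (\ref{eq:polyMigCost}) but is not part of Assumption~\ref{condition:costFunc} as stated). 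This residual gap is shared with the paper's own proof.
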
\begin{proof} Assume that service $j$ takes configuration
$\boldsymbol{\lambda}_{0}$ after its placement (and before it possibly
unpredictably departs), then $\mathbf{y}_{j}^{(j)}-\mathbf{y}_{j-1}^{(j)}=\mathbf{a}{}_{j\boldsymbol{\lambda}_{0}}$
and $\mathbf{z}_{j}^{(j)}-\mathbf{z}_{j-1}^{(j)}=\mathbf{b}{}_{j\boldsymbol{\lambda}_{0}}$.
For any time after placing instance $j$ we define $\Delta\mathbf{y}_{j}\triangleq\mathbf{y}{}_{j}-\mathbf{y}{}_{j-1}$
and $\Delta\mathbf{z}_{j}\triangleq\mathbf{z}{}_{j}-\mathbf{z}{}_{j-1}$.
We always have $\Delta\mathbf{y}_{j}=\mathbf{a}{}_{j\boldsymbol{\lambda}_{0}}$,
$\Delta\mathbf{z}_{j}=\mathbf{b}{}_{j\boldsymbol{\lambda}_{0}}$,
if instance $j$ has not yet departed from the system, and $\Delta\mathbf{y}_{j}=\Delta\mathbf{z}_{j}=0$
if $j$ has already departed from the system.

Noting that $\widetilde{D}\left(\mathbf{y}_{j},\mathbf{z}_{j}\right)$
is convex non-decreasing (from Lemma \ref{lemma:convexityOfH}), we
have
\begin{align}
 & \widetilde{D}\left(\mathbf{y}{}_{j},\mathbf{z}{}_{j}\right)-\widetilde{D}\left(\mathbf{y}{}_{j-1},\mathbf{z}{}_{j-1}\right)\nonumber \\
 & =\widetilde{D}\left(\mathbf{y}{}_{j-1}+\Delta\mathbf{y}_{j},\mathbf{z}{}_{j-1}+\Delta\mathbf{z}_{j}\right)-\widetilde{D}\left(\mathbf{y}{}_{j-1},\mathbf{z}{}_{j-1}\right)\nonumber \\
 & \leq\widetilde{D}\left(\mathbf{y}{}_{j-1}^{(j)}+\Delta\mathbf{y}_{j},\mathbf{z}{}_{j-1}^{(j)}+\Delta\mathbf{z}_{j}\right)-\widetilde{D}\left(\mathbf{y}{}_{j-1}^{(j)},\mathbf{z}{}_{j-1}^{(j)}\right)\label{eq:proofVi_Ineq1}\\
 & \leq\widetilde{D}\left(\mathbf{y}{}_{j-1}^{(j)}+\mathbf{a}{}_{j\boldsymbol{\lambda}_{0}},\mathbf{z}{}_{j-1}^{(j)}+\mathbf{b}{}_{j\boldsymbol{\lambda}_{0}}\right)-\widetilde{D}\left(\mathbf{y}{}_{j-1}^{0},\mathbf{z}{}_{j-1}^{0}\right)\label{eq:proofVi_Ineq2}\\
 & =\widetilde{D}\left(\mathbf{y}_{j}^{(j)},\mathbf{z}_{j}^{(j)}\right)-\widetilde{D}\left(\mathbf{y}_{j-1}^{(j)},\mathbf{z}_{j-1}^{(j)}\right)\nonumber \\
 & =v_{j}\nonumber 
\end{align}
where inequality (\ref{eq:proofVi_Ineq1}) is because $\mathbf{y}{}_{j-1}\leq\mathbf{y}_{j-1}^{(j)}$,
$\mathbf{z}{}_{j-1}\leq\mathbf{z}_{j-1}^{(j)}$ (see discussion above)
and due to the convex non-decreasing property of $\widetilde{D}\left(\mathbf{y}_{j},\mathbf{z}_{j}\right)$;
inequality (\ref{eq:proofVi_Ineq2}) is because $\Delta\mathbf{y}_{j}\leq\mathbf{a}{}_{j\boldsymbol{\lambda}_{0}}$,
$\Delta\mathbf{z}_{j}\leq\mathbf{b}{}_{j\boldsymbol{\lambda}_{0}}$
and also due to the non-decreasing property of $\widetilde{D}\left(\mathbf{y}_{j},\mathbf{z}_{j}\right)$. 

We now note that $\widetilde{D}\left(\mathbf{x}_{0}\right)=0$, where
$\mathbf{x}_{0}=\mathbf{0}$ and $\mathbf{0}$ is defined as a vector
with all zeros, thus $\mathbf{y}_{0}=\mathbf{z}_{0}=\mathbf{0}$.
We have 
\begin{align*}
\sum_{j=1}^{M}v_{j} & \geq\sum_{j=1}^{M}\left[\widetilde{D}\left(\mathbf{y}{}_{j},\mathbf{z}{}_{j}\right)-\widetilde{D}\left(\mathbf{y}{}_{j-1},\mathbf{z}{}_{j-1}\right)\right]\\
 & =\widetilde{D}\left(\mathbf{x}_{M}\right)-\widetilde{D}\left(\mathbf{x}_{0}\right)=\widetilde{D}\left(\mathbf{x}_{M}\right)
\end{align*}
\end{proof}

\begin{lemma}\label{lemma:viBound1}When Assumption \ref{condition:costFunc}
is satisfied, for any $j$ and $\boldsymbol{\lambda}$, we have
\begin{equation}
v_{j}\leq\phi\frac{\partial\widetilde{D}}{\partial x_{j\boldsymbol{\lambda}}}\left(\mathbf{x}_{M}\right)
\end{equation}
where $\phi$ is a constant satisfying (\ref{eq:alphaDef}). \end{lemma}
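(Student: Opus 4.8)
The plan is to chain together three ingredients: the greedy behavior of Algorithm~\ref{alg:onlineHighLevel}, the first-order inequality for the convex function $\widetilde{D}$ (Lemma~\ref{lemma:convexityOfH}), and the defining inequality (\ref{eq:alphaDef}) of $\phi$. Throughout, let $\boldsymbol{\lambda}_{0}$ be the configuration sequence actually assigned to instance $j$ upon its arrival, and recall that $(\mathbf{y}_{j-1}^{(j)},\mathbf{z}_{j-1}^{(j)})$ denotes the resource-consumption state immediately before $j$ is placed, so that $v_{j}=\widetilde{D}(\mathbf{y}_{j-1}^{(j)}+\mathbf{a}_{j\boldsymbol{\lambda}_{0}},\mathbf{z}_{j-1}^{(j)}+\mathbf{b}_{j\boldsymbol{\lambda}_{0}})-\widetilde{D}(\mathbf{y}_{j-1}^{(j)},\mathbf{z}_{j-1}^{(j)})$.

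First I would rewrite $v_{j}$ using the greedy property. Since Algorithm~\ref{alg:onlineHighLevel} places instance $j$ by minimizing the $T$-slot sum cost over $j$'s own configuration while holding all previously placed instances fixed, and since the subtracted term $\widetilde{D}(\mathbf{y}_{j-1}^{(j)},\mathbf{z}_{j-1}^{(j)})$ does not depend on $\boldsymbol{\lambda}$, the increment $v_{j}$ is in fact the \emph{minimum} over $\boldsymbol{\lambda}\in\Lambda_{j}$ of the induced cost increase. Hence for \emph{every} feasible $\boldsymbol{\lambda}\in\Lambda_{j}$,
\[
v_{j}\leq\widetilde{D}\!\left(\mathbf{y}_{j-1}^{(j)}+\mathbf{a}_{j\boldsymbol{\lambda}},\mathbf{z}_{j-1}^{(j)}+\mathbf{b}_{j\boldsymbol{\lambda}}\right)-\widetilde{D}\!\left(\mathbf{y}_{j-1}^{(j)},\mathbf{z}_{j-1}^{(j)}\right).
\]
This is precisely what allows the final bound to hold for an arbitrary $\boldsymbol{\lambda}$ rather than only for $\boldsymbol{\lambda}_{0}$.

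Next I would bound the right-hand side by a gradient term. By the first-order characterization of convexity for $\widetilde{D}$ (Lemma~\ref{lemma:convexityOfH}), $\widetilde{D}(\mathbf{p})-\widetilde{D}(\mathbf{q})\leq\nabla\widetilde{D}(\mathbf{p})\cdot(\mathbf{p}-\mathbf{q})$; taking $\mathbf{p}=(\mathbf{y}_{j-1}^{(j)}+\mathbf{a}_{j\boldsymbol{\lambda}},\mathbf{z}_{j-1}^{(j)}+\mathbf{b}_{j\boldsymbol{\lambda}})$ and $\mathbf{q}=(\mathbf{y}_{j-1}^{(j)},\mathbf{z}_{j-1}^{(j)})$ gives the cost increase $\leq\nabla_{\mathbf{y,z}}\widetilde{D}(\mathbf{p})\cdot(\mathbf{a}_{j\boldsymbol{\lambda}},\mathbf{b}_{j\boldsymbol{\lambda}})$. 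Because each $u_{k,t}$ and $w_{kl,t}$ is convex \emph{non-decreasing} (Assumption~\ref{condition:costFunc}), every partial derivative of $\widetilde{D}$ is non-decreasing in each coordinate, so $\nabla_{\mathbf{y,z}}\widetilde{D}$ is component-wise monotone. Combining this with $\mathbf{y}_{j-1}^{(j)}\leq\mathbf{y}_{\textrm{max}}$ and $\mathbf{z}_{j-1}^{(j)}\leq\mathbf{z}_{\textrm{max}}$ (since $\mathbf{y}_{\textrm{max}},\mathbf{z}_{\textrm{max}}$ are the element-wise maxima over all arrival epochs up to the current time of interest) and with $\mathbf{a}_{j\boldsymbol{\lambda}},\mathbf{b}_{j\boldsymbol{\lambda}}\geq\mathbf{0}$, I may replace the evaluation point by $(\mathbf{y}_{\textrm{max}}+\mathbf{a}_{j\boldsymbol{\lambda}},\mathbf{z}_{\textrm{max}}+\mathbf{b}_{j\boldsymbol{\lambda}})$ without decreasing the dot product.

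Finally I would invoke (\ref{eq:alphaDef}) with $i=j$ to bound $\nabla_{\mathbf{y,z}}\widetilde{D}(\mathbf{y}_{\textrm{max}}+\mathbf{a}_{j\boldsymbol{\lambda}},\mathbf{z}_{\textrm{max}}+\mathbf{b}_{j\boldsymbol{\lambda}})\cdot(\mathbf{a}_{j\boldsymbol{\lambda}},\mathbf{b}_{j\boldsymbol{\lambda}})$ by $\phi\,\nabla_{\mathbf{y,z}}\widetilde{D}(\mathbf{y},\mathbf{z})\cdot(\mathbf{a}_{j\boldsymbol{\lambda}},\mathbf{b}_{j\boldsymbol{\lambda}})$, and then apply Lemma~\ref{lemma:viBound_gradientRelationship1} to identify $\nabla_{\mathbf{y,z}}\widetilde{D}(\mathbf{y},\mathbf{z})\cdot(\mathbf{a}_{j\boldsymbol{\lambda}},\mathbf{b}_{j\boldsymbol{\lambda}})=\frac{\partial\widetilde{D}}{\partial x_{j\boldsymbol{\lambda}}}(\mathbf{x}_{M})$, which yields the claim. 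The main obstacle is the bookkeeping in the middle step: one must verify that the endpoint emerging from the first-order convexity bound is coordinate-wise dominated by the \emph{specific} point $(\mathbf{y}_{\textrm{max}}+\mathbf{a}_{j\boldsymbol{\lambda}},\mathbf{z}_{\textrm{max}}+\mathbf{b}_{j\boldsymbol{\lambda}})$ appearing in the numerator of (\ref{eq:alphaDef}), which is exactly where gradient monotonicity and the maximality of $\mathbf{y}_{\textrm{max}},\mathbf{z}_{\textrm{max}}$ are both required. A minor technical point is the differentiability of $\widetilde{D}$; if a cost function has kinks, the argument carries over verbatim with subgradients in place of gradients.
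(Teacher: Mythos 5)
Your proof is correct and takes essentially the same route as the paper's: greedy placement gives $v_{j}\leq\widetilde{D}\bigl(\mathbf{y}_{j-1}^{(j)}+\mathbf{a}_{j\boldsymbol{\lambda}},\mathbf{z}_{j-1}^{(j)}+\mathbf{b}_{j\boldsymbol{\lambda}}\bigr)-\widetilde{D}\bigl(\mathbf{y}_{j-1}^{(j)},\mathbf{z}_{j-1}^{(j)}\bigr)$ for every feasible $\boldsymbol{\lambda}$, the first-order convexity inequality evaluated at the displaced point bounds this by $\nabla_{\mathbf{y,z}}\widetilde{D}\bigl(\mathbf{y}_{j-1}^{(j)}+\mathbf{a}_{j\boldsymbol{\lambda}},\mathbf{z}_{j-1}^{(j)}+\mathbf{b}_{j\boldsymbol{\lambda}}\bigr)\cdot\bigl(\mathbf{a}_{j\boldsymbol{\lambda}},\mathbf{b}_{j\boldsymbol{\lambda}}\bigr)$, the element-wise maximality of $\bigl(\mathbf{y}_{\textrm{max}},\mathbf{z}_{\textrm{max}}\bigr)$ together with (\ref{eq:alphaDef}) replaces this by $\phi\,\nabla_{\mathbf{y,z}}\widetilde{D}\left(\mathbf{y},\mathbf{z}\right)\cdot\bigl(\mathbf{a}_{j\boldsymbol{\lambda}},\mathbf{b}_{j\boldsymbol{\lambda}}\bigr)$, and Lemma \ref{lemma:viBound_gradientRelationship1} identifies the result with $\phi\,\frac{\partial\widetilde{D}}{\partial x_{j\boldsymbol{\lambda}}}\left(\mathbf{x}_{M}\right)$, exactly as in the paper's chain (\ref{eq:proofVUpBound1})--(\ref{eq:proofVUpBound4}). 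Even the component-wise gradient monotonicity you state for the middle step is the same property the paper invokes implicitly when passing to $\bigl(\mathbf{y}_{\textrm{max}}+\mathbf{a}_{j\boldsymbol{\lambda}},\mathbf{z}_{\textrm{max}}+\mathbf{b}_{j\boldsymbol{\lambda}}\bigr)$, so your argument matches the paper's in both structure and detail.
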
\begin{proof}Assume
that service $j$ takes configuration $\boldsymbol{\lambda}_{0}$
after its placement (and before it possibly unpredictably departs).
Because we perform a greedy assignment in Algorithm \ref{alg:onlineHighLevel},
we have
\begin{align*}
v_{j} & =\widetilde{D}\left(\mathbf{y}_{j}^{(j)},\mathbf{z}_{j}^{(j)}\right)-\widetilde{D}\left(\mathbf{y}_{j-1}^{(j)},\mathbf{z}_{j-1}^{(j)}\right)\\
 & =\widetilde{D}\left(\mathbf{y}_{j-1}^{(j)}+\mathbf{a}_{j\boldsymbol{\lambda}_{0}},\mathbf{z}_{j-1}^{(j)}+\mathbf{b}_{j\boldsymbol{\lambda}_{0}}\right)-\widetilde{D}\left(\mathbf{y}_{j-1}^{(j)},\mathbf{z}_{j-1}^{(j)}\right)\\
 & \leq\widetilde{D}\left(\mathbf{y}_{j-1}^{(j)}+\mathbf{a}_{j\boldsymbol{\lambda}},\mathbf{z}_{j-1}^{(j)}+\mathbf{b}_{j\boldsymbol{\lambda}}\right)-\widetilde{D}\left(\mathbf{y}_{j-1}^{(j)},\mathbf{z}_{j-1}^{(j)}\right)
\end{align*}
for any $\boldsymbol{\lambda}\in\Lambda_{i}$.

Then, we have
\begin{align}
 & \widetilde{D}\left(\mathbf{y}_{j-1}^{(j)}+\mathbf{a}_{j\boldsymbol{\lambda}},\mathbf{z}_{j-1}^{(j)}+\mathbf{b}_{j\boldsymbol{\lambda}}\right)-\widetilde{D}\left(\mathbf{y}_{j-1}^{(j)},\mathbf{z}_{j-1}^{(j)}\right)\nonumber \\
 & \leq\nabla_{\mathbf{y,z}}\widetilde{D}\left(\mathbf{y}_{j-1}^{(j)}+\mathbf{a}_{j\boldsymbol{\lambda}},\mathbf{z}_{j-1}^{(j)}+\mathbf{b}_{j\boldsymbol{\lambda}}\right)\cdot\left(\mathbf{a}_{j\boldsymbol{\lambda}},\mathbf{b}_{j\boldsymbol{\lambda}}\right)\label{eq:proofVUpBound1}\\
 & \leq\nabla_{\mathbf{y,z}}\widetilde{D}\left(\mathbf{y}_{\textrm{max}}+\mathbf{a}_{j\boldsymbol{\lambda}},\mathbf{z}_{\textrm{max}}+\mathbf{b}_{j\boldsymbol{\lambda}}\right)\cdot\left(\mathbf{a}_{j\lambda},\mathbf{b}_{j\boldsymbol{\lambda}}\right)\label{eq:proofVUpBound2}\\
 & \leq\phi\nabla_{\mathbf{y,z}}\widetilde{D}\left(\mathbf{y}_{M},\mathbf{z}_{M}\right)\cdot\left(\mathbf{a}_{j\boldsymbol{\lambda}},\mathbf{b}_{j\boldsymbol{\lambda}}\right)\label{eq:proofVUpBound3}\\
 & =\phi\frac{\partial\widetilde{D}}{\partial x_{j\boldsymbol{\lambda}}}\left(\mathbf{x}_{M}\right)\label{eq:proofVUpBound4}
\end{align}
where ``$\cdot$'' denotes the dot-product. The above relationship
is explained as follows. Inequality (\ref{eq:proofVUpBound1}) follows
from the first-order conditions of convex functions \cite[Section 3.1.3]{boyd2004convex}.
The definition of $\mathbf{y}_{\textrm{max}}$ and $\mathbf{z}_{\textrm{max}}$
in Proposition \ref{prop:performanceGapResult} gives (\ref{eq:proofVUpBound2}).
The definition of $\phi$ in (\ref{eq:alphaDef}) gives (\ref{eq:proofVUpBound3}).
Equality (\ref{eq:proofVUpBound4}) follows from Lemma \ref{lemma:viBound_gradientRelationship1}.
This completes the proof. \end{proof}

\vspace{0.2in}

Using the above lemmas, we now proof Proposition \ref{prop:performanceGapResult}.

\vspace{0.2in}

\begin{proof} \textbf{(Proposition \ref{prop:performanceGapResult})}
Due to the convexity of $\widetilde{D}(\mathbf{x})$, from the first-order
conditions of convex functions \cite[Section 3.1.3]{boyd2004convex},
we have
\begin{align}
 & \widetilde{D}(\phi\psi\mathbf{x}_{M}^{*})-\widetilde{D}(\mathbf{x}_{M})\nonumber \\
 & \geq\nabla_{\mathbf{x}}\widetilde{D}\left(\mathbf{x}_{M}\right)\cdot\left(\phi\psi\mathbf{x}_{M}^{*}-\mathbf{x}_{M}\right)\\
 & =\phi\psi\nabla_{\mathbf{x}}\widetilde{D}\left(\mathbf{x}_{M}\right)\cdot\mathbf{x}_{M}^{*}-\nabla_{\mathbf{x}}\widetilde{D}\left(\mathbf{x}_{M}\right)\mathbf{x}_{M}\\
 & =\sum_{i=1}^{M}\sum_{\boldsymbol{\lambda}\in\Lambda}\phi\psi x_{i\boldsymbol{\lambda}}^{*}\frac{\partial\widetilde{D}}{\partial x_{i\boldsymbol{\lambda}}}\left(\mathbf{x}_{M}\right)-\nabla_{\mathbf{x}}\widetilde{D}\left(\mathbf{x}_{M}\right)\cdot\mathbf{x}_{M}\\
 & =\psi\left(\sum_{i=1}^{M}\sum_{\boldsymbol{\lambda}\in\Lambda}x_{i\boldsymbol{\lambda}}^{*}\phi\frac{\partial\widetilde{D}}{\partial x_{i\boldsymbol{\lambda}}}\left(\mathbf{x}_{M}\right)-\frac{\nabla_{\mathbf{x}}\widetilde{D}\left(\mathbf{x}_{M}\right)\cdot\mathbf{x}_{M}}{\psi}\right)\label{eq:boundProofRHS1}
\end{align}
where $x_{i\boldsymbol{\lambda}}^{*}$ is the $(i,\boldsymbol{\lambda})$th
element of vector $\mathbf{x}_{M}^{*}$. From Lemma \ref{lemma:viBound1},
we have

\begin{align}
\textrm{Eq. (\ref{eq:boundProofRHS1})} & \geq\psi\left(\sum_{i=1}^{M}\sum_{\boldsymbol{\lambda}\in\Lambda}x_{i\boldsymbol{\lambda}}^{*}v_{i}-\frac{\nabla_{\mathbf{x}}\widetilde{D}\left(\mathbf{x}_{M}\right)\cdot\mathbf{x}_{M}}{\psi}\right)\\
 & =\psi\left(\sum_{i=1}^{M}v_{i}\sum_{\boldsymbol{\lambda}\in\Lambda}x_{i\boldsymbol{\lambda}}^{*}-\frac{\nabla_{\mathbf{x}}\widetilde{D}\left(\mathbf{x}_{M}\right)\cdot\mathbf{x}_{M}}{\psi}\right)\label{eq:boundProofRHS2}
\end{align}
From the constraint $\sum_{\boldsymbol{\lambda}\in\Lambda}x_{i\boldsymbol{\lambda}}^{*}=1$
and the definition of $\psi$, we get
\begin{align}
\textrm{Eq. (\ref{eq:boundProofRHS2})} & =\psi\left(\sum_{i=1}^{M}v_{i}-\frac{\nabla_{\mathbf{x}}\widetilde{D}\left(\mathbf{x}_{M}\right)\cdot\mathbf{x}_{M}}{\psi}\right)\\
 & \geq\psi\left(\sum_{i=1}^{M}v_{i}-\widetilde{D}(\mathbf{x}_{M})\right)\label{eq:boundProofRHS3}\\
 & \geq0
\end{align}
where the last equality follows from Lemma \ref{lemma:viBound2}.
This gives (\ref{eq:performGap_x}). 

Equation (\ref{eq:performGap_yz}) follows from the fact that $y_{k,j}(t)$
and $z_{kl,j}(t)$ are both linear in $x_{i\lambda}$. 

The last equality in (\ref{eq:betaDef}) follows from Lemma \ref{lemma:viBound_gradientRelationship2}
and the fact that $\widetilde{D}\left(\mathbf{x}\right)=\widetilde{D}\left(\mathbf{y},\mathbf{z}\right)$
as well as $\mathbf{x}=\mathbf{x}_{M}$, $\mathbf{y}=\mathbf{y}_{M}$,
and $\mathbf{z}=\mathbf{z}_{M}$.\end{proof}

\section{Proof of Proposition \ref{prop:polyCostCompetitive}}

\label{sec:polyCostCompetProof}

\begin{lemma} \label{lemma:limit} For polynomial functions $\Xi_{1}(y)$
and $\Xi_{2}(y)$ in the general form:
\begin{align*}
\Xi_{1}(y) & \triangleq\sum_{\rho=0}^{\Omega}\omega_{1}^{(\rho)}y^{\rho}\\
\Xi_{2}(y) & \triangleq\sum_{\rho=0}^{\Omega}\omega_{2}^{(\rho)}y^{\rho}
\end{align*}
where the constants $\omega_{1}^{(\rho)}\geq0$ and $\omega_{2}^{(\rho)}\geq0$
for $0\leq\rho<\Omega$, while $\omega_{1}^{(\Omega)}>0$ and $\omega_{2}^{(\Omega)}>0$,
we have
\[
\lim_{y\rightarrow+\infty}\frac{\Xi_{1}(y)}{\Xi_{2}(y)}=\frac{\omega_{1}^{(\Omega)}}{\omega_{2}^{(\Omega)}}
\]
 \end{lemma}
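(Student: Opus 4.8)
The plan is to exploit the fact that both polynomials have the same degree $\Omega$ and to factor out the dominant power of $y$. Concretely, I would divide both the numerator and the denominator by $y^{\Omega}$, rewriting
\begin{equation}
\frac{\Xi_{1}(y)}{\Xi_{2}(y)}=\frac{\sum_{\rho=0}^{\Omega}\omega_{1}^{(\rho)}y^{\rho-\Omega}}{\sum_{\rho=0}^{\Omega}\omega_{2}^{(\rho)}y^{\rho-\Omega}}\nonumber
\end{equation}
for all $y>0$, which is a valid manipulation since $y^{\Omega}\neq 0$. This isolates the leading coefficients, because every exponent $\rho-\Omega$ is strictly negative except when $\rho=\Omega$, in which case $y^{\rho-\Omega}=y^{0}=1$.

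Next I would take the limit $y\to+\infty$ term by term. For each $\rho<\Omega$ we have $\lim_{y\to+\infty}y^{\rho-\Omega}=0$, so every lower-order term in both sums vanishes, leaving $\lim_{y\to+\infty}\sum_{\rho=0}^{\Omega}\omega_{1}^{(\rho)}y^{\rho-\Omega}=\omega_{1}^{(\Omega)}$ and likewise $\lim_{y\to+\infty}\sum_{\rho=0}^{\Omega}\omega_{2}^{(\rho)}y^{\rho-\Omega}=\omega_{2}^{(\Omega)}$. These limits exist because each is a finite sum of convergent sequences, so the algebraic limit laws apply.

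Finally, since $\omega_{2}^{(\Omega)}>0$ by hypothesis, the limit of the denominator is nonzero, which justifies applying the quotient rule for limits and concluding
\begin{equation}
\lim_{y\to+\infty}\frac{\Xi_{1}(y)}{\Xi_{2}(y)}=\frac{\omega_{1}^{(\Omega)}}{\omega_{2}^{(\Omega)}}.\nonumber
\end{equation}
This is an elementary result, and no genuine obstacle arises; the only point requiring care is verifying that the denominator is eventually bounded away from zero so that the quotient rule is legitimate. That is precisely where the assumption $\omega_{2}^{(\Omega)}>0$ (rather than merely $\omega_{2}^{(\Omega)}\geq 0$) is used, and it is the one hypothesis I would flag explicitly. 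The non-negativity of the remaining coefficients plays no essential role in the limit itself and is inherited from the polynomial cost structure of the surrounding proposition.
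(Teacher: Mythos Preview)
Your argument is correct and is in fact the standard, most direct way to prove this elementary fact. The paper, however, takes a different route: it treats the case $\Omega=0$ separately and, for $\Omega>0$, observes that both $\Xi_{1}(y)\to+\infty$ and $\Xi_{2}(y)\to+\infty$ and then applies L'Hospital's rule iteratively $\Omega$ times, showing by induction that
\[
\lim_{y\to+\infty}\frac{\Xi_{1}(y)}{\Xi_{2}(y)}
=\lim_{y\to+\infty}\frac{\sum_{\rho=n}^{\Omega}\bigl(\prod_{m=0}^{n-1}(\rho-m)\bigr)\omega_{1}^{(\rho)}y^{\rho-n}}{\sum_{\rho=n}^{\Omega}\bigl(\prod_{m=0}^{n-1}(\rho-m)\bigr)\omega_{2}^{(\rho)}y^{\rho-n}}
\]
for all $1\leq n\leq\Omega$, and then evaluates at $n=\Omega$ to obtain $\Omega!\,\omega_{1}^{(\Omega)}/(\Omega!\,\omega_{2}^{(\Omega)})$. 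Your factor-out-$y^{\Omega}$ approach is shorter and more elementary, needing only the algebraic limit laws and the quotient rule rather than repeated differentiation and an induction; the paper's approach is heavier machinery for the same conclusion. Your closing remark that the non-negativity of the lower-order coefficients is inessential to the limit is also accurate and worth noting.
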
 \begin{proof} When $\Omega=0$, we have 
\[
\lim_{y\rightarrow+\infty}\frac{\Xi_{1}(y)}{\Xi_{2}(y)}=\frac{\omega_{1}^{(0)}}{\omega_{2}^{(0)}}
\]
When $\Omega>0$, we note that $\lim_{y\rightarrow+\infty}\Xi_{1}(y)=+\infty$
and $\lim_{y\rightarrow+\infty}\Xi_{2}(y)=+\infty$, because $\omega_{1}^{(\Omega)}>0$
and $\omega_{2}^{(\Omega)}>0$. We apply the L'Hospital's rule and
get
\begin{align}
\lim_{y\rightarrow+\infty}\frac{\Xi_{1}(y)}{\Xi_{2}(y)} & =\lim_{y\rightarrow+\infty}\frac{\frac{d\Xi_{1}(y)}{dy}}{\frac{d\Xi_{2}(y)}{dy}}\nonumber \\
 & =\lim_{y\rightarrow+\infty}\frac{\sum_{\rho=1}^{\Omega}\rho\omega_{1}^{(\rho)}y^{\rho-1}}{\sum_{\rho=1}^{\Omega}\rho\omega_{2}^{(\rho)}y^{\rho-1}}\label{eq:limitProof1}
\end{align}
Suppose we have 
\begin{align}
\lim_{y\rightarrow+\infty}\frac{\Xi_{1}(y)}{\Xi_{2}(y)} & =\lim_{y\rightarrow+\infty}\frac{\sum_{\rho=n}^{\Omega}\left(\prod_{m=0}^{n-1}(\rho-m)\right)\omega_{1}^{(\rho)}y^{\rho-n}}{\sum_{\rho=n}^{\Omega}\left(\prod_{m=0}^{n-1}(\rho-m)\right)\omega_{2}^{(\rho)}y^{\rho-n}}\label{eq:limitProof2}
\end{align}
which equals to (\ref{eq:limitProof1}) for $n=1$. For $1\leq n<\Omega$,
we note that $\Omega-n>0$, hence the numerator and denominator in
the right hand-side (RHS) of (\ref{eq:limitProof2}) still respectively
approach $+\infty$ when $y\rightarrow+\infty$ (because $\omega_{1}^{(\Omega)}>0$
and $\omega_{2}^{(\Omega)}>0$). Let $\Psi(n)$ denote the RHS (\ref{eq:limitProof2}),
we can reapply the L'Hospital's rule on $\Psi(n)$, yielding 
\begin{align*}
\Psi(n) & =\lim_{y\rightarrow+\infty}\frac{\sum_{\rho=n+1}^{\Omega}\left(\prod_{m=0}^{(n+1)-1}(\rho-m)\right)\omega_{1}^{(\rho)}y^{\rho-(n+1)}}{\sum_{\rho=n+1}^{\Omega}\left(\prod_{m=0}^{(n+1)-1}(\rho-m)\right)\omega_{2}^{(\rho)}y^{\rho-(n+1)}}\\
 & =\Psi(n+1)
\end{align*}
which proofs that (\ref{eq:limitProof2}) holds for $1\leq n\leq\Omega$.
Therefore, 
\begin{align*}
\lim_{y\rightarrow+\infty}\frac{\Xi_{1}(y)}{\Xi_{2}(y)} & =\Psi(\Omega)=\frac{\rho!\omega_{1}^{(\Omega)}}{\rho!\omega_{2}^{(\Omega)}}=\frac{\omega_{1}^{(\Omega)}}{\omega_{2}^{(\Omega)}}
\end{align*}
\end{proof} 

\begin{lemma} \label{lemma:proofPolyCostMonotonicityDiff} For variables
$0\leq y\leq y',0\leq y_{k}\leq y'_{k},0\leq y_{l}\leq y'_{l},0\leq z_{kl}\leq z'_{kl}$,
we always have 
\begin{align}
\frac{du_{k,t}}{dy}(y) & \leq\frac{du_{k,t}}{dy}(y')\label{eq:polyCostProof_MonoDiff1}\\
\frac{\partial w_{kl,t}}{\partial\Upsilon}(y_{k},y_{l},z_{kl}) & \leq\frac{\partial w_{kl,t}}{\partial\Upsilon}(y'_{k},y'_{l},z'_{kl})\label{eq:polyCostProof_MonoDiff2}
\end{align}
where $\Upsilon$ stands for either $y_{k}$, $y_{l}$, or $z_{kl}$.
\end{lemma}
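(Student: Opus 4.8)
The plan is to establish both monotonicity claims by direct term-by-term differentiation of the polynomial forms in (\ref{eq:polyLocalCost}) and (\ref{eq:polyMigCost}), then invoke two elementary facts: every coefficient in these polynomials is non-negative, and every non-negative integer power is non-decreasing on $[0,\infty)$.

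First I would handle the local cost. Differentiating $u_{k,t}(y)=\sum_\rho \gamma_{k,t}^{(\rho)}y^\rho$ gives $\frac{du_{k,t}}{dy}(y)=\sum_\rho \rho\,\gamma_{k,t}^{(\rho)}y^{\rho-1}$. Since $\rho\geq1$ and $\gamma_{k,t}^{(\rho)}\geq0$, each coefficient $\rho\,\gamma_{k,t}^{(\rho)}$ is non-negative and each surviving power $y^{\rho-1}$ (with exponent $\rho-1\geq0$) is non-decreasing in $y$ on $[0,\infty)$. A non-negative combination of non-decreasing functions is non-decreasing, so $0\leq y\leq y'$ yields (\ref{eq:polyCostProof_MonoDiff1}) at once.

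Next I would treat the migration cost by the same argument extended to three variables. For $\Upsilon=y_k$, term-by-term differentiation of (\ref{eq:polyMigCost}) gives $\frac{\partial w_{kl,t}}{\partial y_k}=\sum_{\rho_1}\sum_{\rho_2}\sum_{\rho_3}\rho_1\,\kappa_{kl,t}^{(\rho_1,\rho_2,\rho_3)}y_k^{\rho_1-1}y_l^{\rho_2}z_{kl}^{\rho_3}$, in which only the terms with $\rho_1\geq1$ survive. Each coefficient $\rho_1\,\kappa_{kl,t}^{(\rho_1,\rho_2,\rho_3)}$ is non-negative, and each monomial $y_k^{\rho_1-1}y_l^{\rho_2}z_{kl}^{\rho_3}$ is a product of non-negative powers, hence non-decreasing in each argument separately on the non-negative orthant. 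Because $(y_k,y_l,z_{kl})$ is increased to $(y_k',y_l',z_{kl}')$ in all three coordinates simultaneously, every monomial is non-decreasing across this change, and the non-negative combination preserves monotonicity, giving (\ref{eq:polyCostProof_MonoDiff2}). The cases $\Upsilon=y_l$ and $\Upsilon=z_{kl}$ follow identically after permuting the roles of the exponents.

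I do not anticipate any real obstacle; the entire argument reduces to $\frac{d}{dy}y^n=n\,y^{n-1}\geq0$ for $n\geq1$ and $y\geq0$ together with the non-negativity of the polynomial coefficients. The only point requiring a moment of care is the bookkeeping of which terms vanish upon differentiation (those with $\rho=0$, or with $\rho_1=0$ when differentiating in $y_k$, and so on) and the observation that the remaining exponents $\rho-1$ and $\rho_1-1$ are still non-negative, so that the surviving powers are genuinely non-decreasing rather than possibly singular at the origin.
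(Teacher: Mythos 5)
Your proof is correct and follows essentially the same route as the paper's: term-by-term differentiation of the polynomial forms in (\ref{eq:polyLocalCost}) and (\ref{eq:polyMigCost}), followed by the observation that the non-negative coefficients $\gamma_{k,t}^{(\rho)}$ and $\kappa_{kl,t}^{(\rho_{1},\rho_{2},\rho_{3})}$ make each surviving monomial, and hence the whole derivative, non-decreasing on the non-negative orthant. Your added care about which terms vanish under differentiation and the non-negativity of the remaining exponents is a slightly more explicit rendering of the same argument, not a different one.
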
 \begin{proof} We note that 
\[
\frac{du_{k,t}}{dy}(y)=\sum_{\rho}\rho\gamma_{k,t}^{(\rho)}y^{\rho-1}
\]
from which (\ref{eq:polyCostProof_MonoDiff1}) follows directly because
$\gamma_{k,t}^{(\rho)}\geq0$. We then note that
\[
\frac{\partial w_{kl,t}}{\partial y_{k}}(y_{k},y_{l},z_{kl})=\sum_{\rho_{1}}\sum_{\rho_{2}}\sum_{\rho_{3}}\rho_{1}\kappa_{kl,t}^{(\rho_{1},\rho_{2},\rho_{3})}y_{k}^{\rho_{1}-1}y_{l}^{\rho_{2}}z_{kl}^{\rho_{3}}
\]
from which (\ref{eq:polyCostProof_MonoDiff2}) follows for $\Upsilon=y_{k}$
because $\kappa_{kl,t}^{(\rho_{1},\rho_{2},\rho_{3})}\geq0$. Similarly,
(\ref{eq:polyCostProof_MonoDiff2}) also follows for $\Upsilon=y_{l}$
and $\Upsilon=z_{kl}$. \end{proof} 

\begin{lemma} \label{lemma:proofPolyCostBound} Let $\Omega$ denote
the maximum value of $\rho$ such that \emph{either} $\gamma_{k,t}^{(\rho)}>0$
\emph{or} $\kappa_{kl,t}^{(\rho_{1},\rho_{2},\rho_{3})}>0$, where
$\rho_{1}+\rho_{2}+\rho_{3}=\rho$. Assume that the cost functions
are defined as in (\ref{eq:polyLocalCost}) and (\ref{eq:polyMigCost}),
then for any constants $\delta>0$, $B\geq0$, there exist sufficiently
large values of $y,y_{k},y_{l},z_{kl}$, such that
\begin{align}
\frac{\frac{du_{k,t}}{dy}(y+B)}{\frac{du_{k,t}}{dy}(y)} & \leq1+\delta\label{eq:polyCostProof_boundResult1}\\
\frac{\frac{du_{k,t}}{dy}(y)\cdot y}{u_{k,t}(y)} & \leq\Omega+\delta\label{eq:polyCostProof_boundResult2}\\
\frac{\frac{\partial w_{kl,t}}{\partial\Upsilon}(y_{k}+B,y_{l}+B,z_{kl}+B)}{\frac{\partial w_{kl,t}}{\partial\Upsilon}(y_{k},y_{l},z_{kl})} & \leq1+\delta\label{eq:polyCostProof_boundResult3}\\
\frac{\frac{\partial w_{kl,t}}{\partial\Upsilon}(y_{k},y_{l},z_{kl})\cdot\Upsilon}{w_{k,t}(y_{k},y_{l},z_{kl})} & \leq\Omega+\delta\label{eq:polyCostProof_boundResult4}
\end{align}
for any $k,l,t$, where $\Upsilon$ stands for either $y_{k}$, $y_{l}$,
or $z_{kl}$.\end{lemma}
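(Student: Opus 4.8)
The plan is to reduce every one of the four ratios to a quotient of two single-variable polynomials in a scalar argument $y$, and then invoke Lemma~\ref{lemma:limit} to evaluate the limit as $y\to+\infty$. Since each limiting value will turn out to be no larger than the claimed bound (namely $1$ for (\ref{eq:polyCostProof_boundResult1}) and (\ref{eq:polyCostProof_boundResult3}), and $\Omega$ for (\ref{eq:polyCostProof_boundResult2}) and (\ref{eq:polyCostProof_boundResult4})), the conclusion ``there exist sufficiently large values'' follows directly from the definition of a limit: for any $\delta>0$ one can take the arguments large enough that the ratio lies within $\delta$ of its limit, hence below the bound plus $\delta$. Because the index sets for $k,l,t$ are finite, taking the maximum of the finitely many resulting thresholds yields a single regime in which all four inequalities hold simultaneously.

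First I would dispose of the two local-cost inequalities, which are genuinely univariate. For (\ref{eq:polyCostProof_boundResult1}), set $\Xi_1(y)=\tfrac{du_{k,t}}{dy}(y+B)$ and $\Xi_2(y)=\tfrac{du_{k,t}}{dy}(y)$; both are polynomials of the same degree with the same leading coefficient, because shifting the argument by the constant $B$ does not alter the top-degree term, so Lemma~\ref{lemma:limit} gives limit $1$. For (\ref{eq:polyCostProof_boundResult2}), take $\Xi_1(y)=y\,\tfrac{du_{k,t}}{dy}(y)=\sum_\rho \rho\,\gamma_{k,t}^{(\rho)}y^{\rho}$ and $\Xi_2(y)=u_{k,t}(y)=\sum_\rho \gamma_{k,t}^{(\rho)}y^{\rho}$; these share the degree $d_u$ of $u_{k,t}$, and Lemma~\ref{lemma:limit} returns the ratio of leading coefficients, namely $d_u\le\Omega$.

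For the two migration-cost inequalities I would evaluate everything along the diagonal $y_k=y_l=z_{kl}=y$, which is legitimate since only the \emph{existence} of large arguments is claimed. For (\ref{eq:polyCostProof_boundResult3}) this is the cleanest case: writing $P(y)\triangleq\tfrac{\partial w_{kl,t}}{\partial\Upsilon}(y,y,y)$, the numerator is simply $P(y+B)$ and the denominator $P(y)$, the same polynomial shifted, so Lemma~\ref{lemma:limit} again gives limit $1$ whenever $P\not\equiv 0$ (and the inequality is vacuous when $P\equiv 0$). For (\ref{eq:polyCostProof_boundResult4}) I would compute leading coefficients explicitly: with $\Upsilon=z_{kl}$ the numerator $\tfrac{\partial w_{kl,t}}{\partial z_{kl}}(y,y,y)\cdot y$ and the denominator $w_{kl,t}(y,y,y)$ both have degree $\Omega_w$ (the degree of $w_{kl,t}$, with $\Omega_w\le\Omega$), and the limit equals the $\kappa$-weighted average of the exponents $\rho_3$ over the top-degree monomials, which is at most $\Omega_w\le\Omega$; the cases $\Upsilon=y_k,y_l$ are analogous, the average then being taken over $\rho_1$ or $\rho_2$.

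The main obstacle is bookkeeping the degenerate degree situations in (\ref{eq:polyCostProof_boundResult4}). Two facts are needed. First, that $\tfrac{\partial}{\partial z_{kl}}w_{kl,t}$ attains the full degree $\Omega_w-1$ along the diagonal: this follows because the earlier convention $w_{kl,t}(\cdot,\cdot,0)=0$ forces every monomial of $w_{kl,t}$ to carry $z_{kl}$ to at least the first power, so every top-degree term survives differentiation and the leading coefficient $\sum_{\rho_1+\rho_2+\rho_3=\Omega_w}\rho_3\,\kappa_{kl,t}^{(\rho_1,\rho_2,\rho_3)}$ is strictly positive, legitimizing Lemma~\ref{lemma:limit}. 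Second, for $\Upsilon=y_k$ or $y_l$ the numerator may have \emph{strictly lower} degree than the denominator (when no top-degree monomial contains that variable), or may vanish identically; in those cases Lemma~\ref{lemma:limit} does not apply verbatim, but the quotient of a lower-degree polynomial by a higher-degree one tends to $0\le\Omega$, so the bound holds a fortiori. Collecting these limits and applying the $\delta$-threshold argument of the first paragraph completes the proof.
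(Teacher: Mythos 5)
Your proof is correct, and it diverges from the paper's in a genuine way on the migration-cost half. On the local-cost inequalities (\ref{eq:polyCostProof_boundResult1})--(\ref{eq:polyCostProof_boundResult2}) you and the paper argue identically: form the univariate polynomial ratios and apply Lemma~\ref{lemma:limit}, getting limits $1$ and $\Omega'\le\Omega$. For (\ref{eq:polyCostProof_boundResult3})--(\ref{eq:polyCostProof_boundResult4}), however, the paper parametrizes $(y_k,y_l,z_{kl})=(\zeta_1 q,\zeta_2 q,\zeta_3 q)$ along an \emph{arbitrary} positive ray, applies Lemma~\ref{lemma:limit} to the ray-restricted polynomial $w_{kl,t}(q)$ and its total derivative $\frac{dw_{kl,t}}{dq}$, then converts the resulting bounds back into statements about the partials $\frac{\partial w_{kl,t}}{\partial\Upsilon}$ (a conversion that is itself rather terse, since the ray derivative is a $\zeta$-weighted sum of the three partials) and must reconcile the unevenly scaled shifts $\zeta_i B'$, with $B'=B/\min\{\zeta_1,\zeta_2,\zeta_3\}$, against the uniform shift $B$ via the monotonicity Lemma~\ref{lemma:proofPolyCostMonotonicityDiff}. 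You instead differentiate first and restrict to the diagonal $y_k=y_l=z_{kl}=y$, so that in (\ref{eq:polyCostProof_boundResult3}) numerator and denominator are literally the same univariate polynomial shifted by $B$, and (\ref{eq:polyCostProof_boundResult4}) becomes an explicit leading-coefficient computation (a $\kappa$-weighted average of $\rho_3$, resp.\ $\rho_1$ or $\rho_2$, over top-degree monomials, bounded by $\Omega_w\le\Omega$). This is more elementary -- it dispenses with the ray/chain-rule conversion and with Lemma~\ref{lemma:proofPolyCostMonotonicityDiff} entirely -- and your degenerate-case bookkeeping (using the convention $w_{kl,t}(\cdot,\cdot,0)=0$ to force $\rho_3\ge1$ in every monomial, so the $\Upsilon=z_{kl}$ derivative retains full degree with a strictly positive leading coefficient, and noting that a lower-degree or vanishing numerator for $\Upsilon=y_k,y_l$ only helps) is spelled out more carefully than in the paper, whose treatment of zero terms is deferred to the pairing hypothesis of Lemma~\ref{lemma:proofPolyCostRatioBound}. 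What the paper's heavier machinery buys is that the bounds hold for all sufficiently large points along \emph{every} positive direction, which is the form its proof of Proposition~\ref{prop:polyCostCompetitive} actually invokes at the algorithm's operating point $(\mathbf{y},\mathbf{z})$; your diagonal argument proves the lemma exactly as stated (an existence claim) but would not by itself plug into that downstream step, since the resource vector generated by the algorithm need not lie near the diagonal. The repair is immediate: your differentiate-then-restrict argument runs verbatim along the paper's general ray, recovering the stronger per-direction version without reintroducing the paper's scaled-shift detour.
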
 \begin{proof} Let $\Omega'$ denote the
maximum value of $\rho$ such that $\gamma_{k,t}^{(\rho)}>0$, we
always have $\Omega'\leq\Omega$. We note that
\begin{equation}
\frac{\frac{du_{k,t}}{dy}(y+B)}{\frac{du_{k,t}}{dy}(y)}=\frac{\sum_{\rho=1}^{\Omega'}\rho\gamma_{k,t}^{(\rho)}\left(y+B\right)^{\rho-1}}{\sum_{\rho=1}^{\Omega'}\rho\gamma_{k,t}^{(\rho)}y^{\rho-1}}\label{eq:polyCostProof_ratio1}
\end{equation}
\begin{equation}
\frac{\frac{du_{k,t}}{dy}(y)\cdot y}{u_{k,t}(y)}=\frac{\sum_{\rho=1}^{\Omega'}\rho\gamma_{k,t}^{(\rho)}y^{\rho}}{\sum_{\rho=1}^{\Omega'}\gamma_{k,t}^{(\rho)}y^{\rho}}
\end{equation}
According to Lemma \ref{lemma:limit}, we have 
\begin{align}
\lim_{y\rightarrow+\infty}\frac{\frac{du_{k,t}}{dy}(y+B)}{\frac{du_{k,t}}{dy}(y)} & =\frac{\Omega'\gamma_{k,t}^{(\Omega')}}{\Omega'\gamma_{k,t}^{(\Omega')}}=1\label{eq:polyCostProof_limit1}\\
\lim_{y\rightarrow+\infty}\frac{\frac{du_{k,t}}{dy}(y)\cdot y}{u_{k,t}(y)} & =\frac{\Omega'\gamma_{k,t}^{(\Omega')}}{\gamma_{k,t}^{(\Omega')}}=\Omega'\label{eq:polyCostProof_limit2}
\end{align}
where we note that after expanding the numerator in the RHS of (\ref{eq:polyCostProof_ratio1}),
the constant $B$ does not appear in the coefficient of $y^{\Omega'-1}$.

Now, define a variable $q>0$, and we let $y_{k}=\zeta_{1}q,y_{l}=\zeta_{2}q,z_{kl}=\zeta_{3}q$,
where $\zeta_{1},\zeta_{2},\zeta_{3}>0$ are arbitrary constants.
Using $\zeta_{1},\zeta_{2},\zeta_{3}$, and $q$, we can represent
any value of $\left(y_{k},y_{l},z_{kl}\right)>\mathbf{0}$. With this
definition, we have
\begin{align}
w_{kl,t}\left(q\right) & \triangleq w_{kl,t}\left(\zeta_{1}q,\zeta_{2}q,\zeta_{3}q\right)\nonumber \\
 & =\sum_{\rho_{1}}\sum_{\rho_{2}}\sum_{\rho_{3}}\kappa_{kl,t}^{(\rho_{1},\rho_{2},\rho_{3})}\zeta_{1}^{\rho_{1}}\zeta_{2}^{\rho_{2}}\zeta_{3}^{\rho_{3}}q^{\rho_{1}+\rho_{2}+\rho_{3}}\nonumber \\
 & =\sum_{\rho=1}^{\Omega''}(\kappa')_{kl,t}^{(\rho)}q^{\rho}\label{eq:polyCostProof_redefine_w}
\end{align}
where the constant 
\[
(\kappa'){}_{kl,t}^{(\rho)}\triangleq\sum_{\{(\rho_{1},\rho_{2},\rho_{3}):\rho_{1}+\rho_{2}+\rho_{3}=\rho\}}\kappa_{kl,t}^{(\rho_{1},\rho_{2},\rho_{3})}\zeta_{1}^{\rho_{1}}\zeta_{2}^{\rho_{2}}\zeta_{3}^{\rho_{3}}
\]
and $\Omega''$ is defined as the maximum value of $\rho$ such that
$(\kappa'){}_{kl,t}^{(\rho)}>0$, we always have $\Omega''\leq\Omega$.
Note that (\ref{eq:polyCostProof_redefine_w}) is in the same form
as (\ref{eq:polyLocalCost}). Following the same procedure as for
obtaining (\ref{eq:polyCostProof_limit1}) and (\ref{eq:polyCostProof_limit2}),
we get
\begin{align}
\lim_{q\rightarrow+\infty}\frac{\frac{dw_{kl,t}}{dq}(q+B')}{\frac{dw_{kl,t}}{dq}(q)} & =\frac{\Omega''\gamma_{k,t}^{(\Omega'')}}{\Omega''\gamma_{k,t}^{(\Omega'')}}=1\label{eq:polyCostProof_limit3}\\
\lim_{q\rightarrow+\infty}\frac{\frac{dw_{kl,t}}{dq}(q)\cdot q}{w_{k,t}(q)} & =\frac{\Omega''\gamma_{k,t}^{(\Omega'')}}{\gamma_{k,t}^{(\Omega'')}}=\Omega''\label{eq:polyCostProof_limit4}
\end{align}
where $B'\triangleq\frac{B}{\min\left\{ \zeta_{1};\zeta_{2};\zeta_{3}\right\} }$.

According to the definition of limits, for any $\delta>0$, there
exist sufficiently large values of $y$ and $q$ (thus $y_{k},y_{l},z_{kl}$),
such that
\begin{align}
\frac{\frac{du_{k,t}}{dy}(y+B)}{\frac{du_{k,t}}{dy}(y)} & \leq1+\delta\label{eq:polyCostProof_bound_u1}\\
\frac{\frac{du_{k,t}}{dy}(y)\cdot y}{u_{k,t}(y)} & \leq\Omega'+\delta\label{eq:polyCostProof_bound_u2}\\
\frac{\frac{dw_{kl,t}}{dq}(q+B')}{\frac{dw_{kl,t}}{dq}(q)} & \leq1+\delta\label{eq:polyCostProof_bound1}\\
\frac{\frac{dw_{kl,t}}{dq}(q)\cdot q}{w_{k,t}(q)} & \leq\Omega''+\delta\label{eq:polyCostProof_bound2}
\end{align}
for any $k,l,t$.

Because 
\[
\frac{\frac{dw_{kl,t}}{dq}(q+B')}{\frac{dw_{kl,t}}{dq}(q)}=\frac{\frac{dw_{kl,t}}{d(\zeta q)}(q+B')}{\frac{dw_{kl,t}}{d(\zeta q)}(q)}
\]
\[
\frac{dw_{kl,t}}{dq}(q)\cdot q=\frac{dw_{kl,t}}{d(\zeta q)}(q)\cdot\zeta q
\]
for any $\zeta>0$, we can also express the bounds (\ref{eq:polyCostProof_bound1})
and (\ref{eq:polyCostProof_bound2}) in terms of $y_{k},y_{l},z_{kl}$,
yielding 
\begin{align}
\frac{\frac{\partial w_{kl,t}}{\partial\Upsilon}(y_{k}+\zeta_{1}B',y_{l}+\zeta_{2}B',z_{kl}+\zeta_{3}B')}{\frac{\partial w_{kl,t}}{\partial\Upsilon}(y_{k},y_{l},z_{kl})} & \leq1+\delta\label{eq:polyCostProof_bound1-1}\\
\frac{\frac{\partial w_{kl,t}}{\partial\Upsilon}(y_{k},y_{l},z_{kl})\cdot\Upsilon}{w_{k,t}(y_{k},y_{l},z_{kl})} & \leq\Omega''+\delta\label{eq:polyCostProof_bound2-1}
\end{align}
where $\Upsilon$ stands for either $y_{k}$, $y_{l}$, or $z_{kl}$.
According to the definition of $B'$, we have $B\leq\zeta_{1}B'$,
$B\leq\zeta_{2}B'$, $B\leq\zeta_{3}B'$. From Lemma \ref{lemma:proofPolyCostMonotonicityDiff},
we have
\begin{align}
 & \frac{\partial w_{kl,t}}{\partial\Upsilon}(y_{k}+B,y_{l}+B,z_{kl}+B)\nonumber \\
 & \leq\frac{\partial w_{kl,t}}{\partial\Upsilon}(y_{k}+\zeta_{1}B',y_{l}+\zeta_{2}B',z_{kl}+\zeta_{3}B')\label{eq:polyCostProof_bound1-1-1}
\end{align}

Combining (\ref{eq:polyCostProof_bound1-1-1}) with (\ref{eq:polyCostProof_bound1-1})
and noting that $\Omega'\leq\Omega$ and $\Omega''\leq\Omega$, together
with (\ref{eq:polyCostProof_bound_u1}), (\ref{eq:polyCostProof_bound_u2}),
and (\ref{eq:polyCostProof_bound2-1}), we get (\ref{eq:polyCostProof_boundResult1})--(\ref{eq:polyCostProof_boundResult4}).
\end{proof}

\begin{lemma} \label{lemma:proofPolyCostRatioBound} For arbitrary
values $\vartheta_{1,n}\geq0$ and $\vartheta_{2,n}\geq0$ for all
$n=1,2,...,N$, where $\vartheta_{1,n}$ and $\vartheta_{2,n}$ are
either both zero or both non-zero and there exists $n$ such that
$\vartheta_{1,n}$ and $\vartheta_{2,n}$ are non-zero, if the following
bound is satisfied:
\[
\max_{\{n\in\{1,...,N\}:\vartheta_{1,n}\neq0,\vartheta_{2,n}\neq0\}}\frac{\vartheta_{1,n}}{\vartheta_{2,n}}\leq\varTheta
\]
then we have
\[
\frac{\sum_{n=1}^{N}\omega_{n}\vartheta_{1,n}}{\sum_{n=1}^{N}\omega_{n}\vartheta_{2,n}}\leq\varTheta
\]
for any $\omega_{n}\geq0$. \end{lemma}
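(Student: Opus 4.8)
The plan is to prove the statement by reducing it to a single termwise inequality that can be summed, so that no division is needed until the final step. First I would establish the pointwise bound $\vartheta_{1,n}\leq\varTheta\,\vartheta_{2,n}$ for \emph{every} $n\in\{1,\dots,N\}$. For an index $n$ with $\vartheta_{2,n}\neq0$, the hypothesis guarantees $\vartheta_{1,n}\neq0$ as well, so $\vartheta_{1,n}/\vartheta_{2,n}$ appears in the maximum on the left-hand side of the assumed bound and is therefore at most $\varTheta$; multiplying through by $\vartheta_{2,n}>0$ gives $\vartheta_{1,n}\leq\varTheta\,\vartheta_{2,n}$. For an index $n$ with $\vartheta_{2,n}=0$, the same ``simultaneously zero or non-zero'' hypothesis forces $\vartheta_{1,n}=0$, so the bound reads $0\leq0$ and again holds. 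Thus the pointwise bound is valid for all $n$, with no case distinction needed in what follows.

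Next I would multiply this pointwise bound by the nonnegative weight $\omega_n$, which preserves the inequality since $\omega_n\geq0$, obtaining $\omega_n\vartheta_{1,n}\leq\varTheta\,\omega_n\vartheta_{2,n}$ for each $n$. Summing over $n=1,\dots,N$ yields the cross-multiplied form
\[
\sum_{n=1}^{N}\omega_n\vartheta_{1,n}\leq\varTheta\sum_{n=1}^{N}\omega_n\vartheta_{2,n},
\]
which is really the content of the lemma; conceptually it is just the upper half of the weighted-mediant inequality, that a ratio of nonnegative sums is controlled by the largest of the individual ratios.

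Finally, to recover the ratio form I would divide both sides by $\sum_{n=1}^{N}\omega_n\vartheta_{2,n}$. The only point needing care is that this divisor be strictly positive, which requires at least one index with both $\omega_n>0$ and $\vartheta_{2,n}>0$; when the divisor vanishes, the pointwise bound forces the numerator to vanish as well, so the displayed ratio is the indeterminate $0/0$ and the ratio statement is simply vacuous. Under the implicit nondegeneracy that makes the ratio meaningful, the divisor is positive and division preserves the inequality, giving the claimed bound. I do not expect any genuine obstacle here: the argument is entirely elementary, and the only subtlety is the bookkeeping of the zero/non-zero pairing together with the positivity of the denominator.
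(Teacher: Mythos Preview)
Your proof is correct and follows exactly the same approach as the paper: establish the pointwise bound $\vartheta_{1,n}\leq\varTheta\,\vartheta_{2,n}$ for all $n$, multiply by $\omega_n\geq 0$, and sum. You are in fact more careful than the paper, which takes the pointwise bound for granted and does not discuss the positivity of the denominator.
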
 \begin{proof} Because $\vartheta_{1,n}\leq\varTheta\vartheta_{2,n}$
for all $n$, we have 
\[
\sum_{n=1}^{N}\omega_{n}\vartheta_{1,n}\leq\sum_{n=1}^{N}\omega_{n}\varTheta\vartheta_{2,n}
\]
yielding the result. \end{proof}

\begin{lemma} \label{lemma:existenceOfBoundInPolyCost} When Assumption
\ref{condition:boundedArrDept} is satisfied and the window size $T$
is a constant, there exists a constant $B\geq0$ such that 

\begin{equation}
\left(\mathbf{y}_{\textrm{max}}+\mathbf{a}_{i\boldsymbol{\lambda}},\mathbf{z}_{\textrm{max}}+\mathbf{b}_{i\boldsymbol{\lambda}}\right)-\left(\mathbf{y},\mathbf{z}\right)\leq B\mathbf{e}\label{eq:polyCostCompetLoadDifferenceCondition}
\end{equation}
for any $i$ and any $\boldsymbol{\lambda}\in\Lambda_{i}$, where
$\mathbf{e}\triangleq[1,...,1]$ is a vector of all ones that has
the same dimension as $\left(\mathbf{y},\mathbf{z}\right)$. \end{lemma}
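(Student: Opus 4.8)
The plan is to prove the vector inequality \eqref{eq:polyCostCompetLoadDifferenceCondition} coordinate by coordinate, establishing for each element of the concatenated vector $(\mathbf{y},\mathbf{z})$ a bound that depends only on the fixed constants $T$, $B_d$, $a_{\textrm{max}}$, $b_{\textrm{max}}$, and then taking $B$ to be the largest such bound. First I would fix an arbitrary coordinate of $\mathbf{y}$, say $(k,t)$, and track how $y_{k}(t)$ evolves as instances are placed and unpredictably depart within the window under the online procedure of Algorithm~\ref{alg:onlineHighLevel}. Since the placement of previously placed instances is never altered, $y_{k}(t)$ increases by $a_{i\boldsymbol{\lambda}k}(t)\geq0$ whenever an instance is placed at cloud $k$ in slot $t$, and decreases only when such an instance unpredictably departs (its column being zeroed for later slots by Line~\ref{algOnline:lineSetRemainingToZero}). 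Here $(\mathbf{y}_{\textrm{max}})_{kt}$ is the peak value of $y_{k}(t)$ over this history, while $y_{k}(t)$ is its value at the current time of interest.

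The key step is to argue that the gap $(\mathbf{y}_{\textrm{max}})_{kt}-y_{k}(t)$ equals the net decrease of $y_{k}(t)$ from the event where the peak is attained to the current event. Writing this net change as the total increase due to arrivals minus the total decrease due to departures over the intervening events, and noting that the arrival contribution is non-negative, the gap is bounded above by the total departure-induced decrease alone. This is the crux of the argument: arrivals only push $y_{k}(t)$ upward and therefore cannot enlarge the gap between the running maximum and the current value, so only departures matter.

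Next I would bound the total departure-induced decrease using Assumption~\ref{condition:boundedArrDept}. Each unpredictable departure removes at most $a_{\textrm{max}}$ from $y_{k}(t)$, and the departures that can affect slot $t$ (those occurring strictly before $t$) number at most $B_d$ per slot over at most $T$ slots, hence at most $B_d T$. Thus $(\mathbf{y}_{\textrm{max}})_{kt}-y_{k}(t)\leq B_d T\, a_{\textrm{max}}$, and adding $(\mathbf{a}_{i\boldsymbol{\lambda}})_{kt}=a_{i\boldsymbol{\lambda}k}(t)\leq a_{\textrm{max}}$ gives $(\mathbf{y}_{\textrm{max}})_{kt}+a_{i\boldsymbol{\lambda}k}(t)-y_{k}(t)\leq(B_d T+1)a_{\textrm{max}}$. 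The identical argument applied to each coordinate $(k,l,t)$ of $\mathbf{z}$ yields $(\mathbf{z}_{\textrm{max}})_{klt}+b_{i\boldsymbol{\lambda}kl}(t)-z_{kl}(t)\leq(B_d T+1)b_{\textrm{max}}$. Setting $B\triangleq(B_d T+1)\max\{a_{\textrm{max}},b_{\textrm{max}}\}$, a constant since $T$, $B_d$, $a_{\textrm{max}}$, and $b_{\textrm{max}}$ are all constants, establishes \eqref{eq:polyCostCompetLoadDifferenceCondition}.

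The hard part will be making the ``net decrease is bounded by departures'' step airtight. The running maximum is taken per coordinate over a non-monotone trajectory—values rise and fall as instances come and go—so the peak for coordinate $(k,t)$ may be attained at a different event than for another coordinate, and one must not confuse the per-coordinate decrease with the aggregate behavior of $\widetilde{D}$. Handling each coordinate independently and charging each downward movement to a distinct relevant departure resolves this, but it requires relating the decrease of a single coordinate to the \emph{count of relevant departures} rather than to the total number of instances $M$—which is precisely what makes $B$ independent of $M$, as needed for the competitive-ratio argument in Proposition~\ref{prop:polyCostCompetitive}.
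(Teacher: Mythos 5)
Your proposal is correct and follows essentially the same route as the paper's proof: bounding $\mathbf{a}_{i\boldsymbol{\lambda}}$ and $\mathbf{b}_{i\boldsymbol{\lambda}}$ element-wise by $a_{\textrm{max}}$ and $b_{\textrm{max}}$, attributing the gap between $\left(\mathbf{y}_{\textrm{max}},\mathbf{z}_{\textrm{max}}\right)$ and $\left(\mathbf{y},\mathbf{z}\right)$ solely to unpredictable departures (at most $B_{d}$ per slot over at most $T$ slots, each removing at most $a_{\textrm{max}}$ or $b_{\textrm{max}}$ per coordinate), and setting $B=\left(B_{d}T+1\right)\max\left\{ a_{\textrm{max}};b_{\textrm{max}}\right\}$, which matches the paper's constant exactly. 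Your coordinate-by-coordinate accounting merely spells out in detail the step the paper asserts in a single sentence, so there is no substantive difference.
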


\begin{proof} We note that 
\begin{align}
 & \left(\mathbf{y}_{\textrm{max}}+\mathbf{a}_{i\boldsymbol{\lambda}},\mathbf{z}_{\textrm{max}}+\mathbf{b}_{i\boldsymbol{\lambda}}\right)-\left(\mathbf{y},\mathbf{z}\right)\nonumber \\
 & \leq\left(\mathbf{y}_{\textrm{max}}+a_{\textrm{max}}\mathbf{e_{y}},\mathbf{z}_{\textrm{max}}+b_{\textrm{max}}\mathbf{e_{z}}\right)-\left(\mathbf{y},\mathbf{z}\right)\label{eq:proofConstBExist1}\\
 & \leq\left(a_{\textrm{max}}\left(B_{d}T+1\right)\mathbf{e_{y}},b_{\textrm{max}}\left(B_{d}T+1\right)\mathbf{e_{z}}\right)\label{eq:proofConstBExist2}\\
 & \leq\max\left\{ a_{\textrm{max}}\left(B_{d}T+1\right);b_{\textrm{max}}\left(B_{d}T+1\right)\right\} \cdot\mathbf{e}\label{eq:proofConstBExist3}
\end{align}
where $\mathbf{e_{y}}\triangleq[1,...,1]$ and $\mathbf{e_{z}}\triangleq[1,...,1]$
are vectors of all ones that respectively have the same dimensions
as $\mathbf{y}$ and $\mathbf{z}$. Inequality (\ref{eq:proofConstBExist1})
follows from the boundedness assumption in Assumption \ref{condition:boundedArrDept}.
Inequality (\ref{eq:proofConstBExist2}) follows by noting that the
gap between $\left(\mathbf{y}_{\textrm{max}},\mathbf{z}_{\textrm{max}}\right)$
and $\left(\mathbf{y},\mathbf{z}\right)$ is because of instances
unpredictably leaving the system before their maximum lifetime, and
that there are at most $T$ slots, at most $B_{d}$ instances unpredictably
leave the system in each slot (according to Assumption \ref{condition:boundedArrDept}).
Inequality (\ref{eq:proofConstBExist3}) is obvious (note that the
maximum is taken element-wise). 

By setting $B=\max\left\{ a_{\textrm{max}}\left(B_{d}T+1\right);b_{\textrm{max}}\left(B_{d}T+1\right)\right\} $,
we prove the result. \end{proof}

\vspace{0.2in}

We now proof Proposition \ref{prop:polyCostCompetitive}.

\vspace{0.2in}

\begin{proof} \textbf{(Proposition \ref{prop:polyCostCompetitive})}
We note that $\widetilde{D}\left(\mathbf{y},\mathbf{z}\right)$ sums
up $u_{k,t}(y_{k})$ and $w_{kl,t}\left(y_{k},y_{l},z_{kl}\right)$
over $t,k,l$, as defined in (\ref{eq:costDef_T_xi}). 

The numerator in the RHS of (\ref{eq:alphaDef}) can be expanded into
a sum containing terms of either 
\[
\frac{du_{k,t}}{dy}\left(\left(\mathbf{y}_{\textrm{max}}+\mathbf{a}_{i\boldsymbol{\lambda}}\right)_{kt}\right)
\]
 or 
\[
\frac{\partial w_{kl,t}}{\partial\Upsilon}\left(\left(\mathbf{y}_{\textrm{max}}+\mathbf{a}_{i\boldsymbol{\lambda}}\right)_{kt},\left(\mathbf{y}_{\textrm{max}}+\mathbf{a}_{i\boldsymbol{\lambda}}\right)_{lt},\left(\mathbf{z}_{\textrm{max}}+\mathbf{b}_{i\boldsymbol{\lambda}}\right)_{klt}\right)
\]
where $\Upsilon$ stands for either $y_{k}(t)$, $y_{l}(t)$, or $z_{kl}(t)$,
with either $a_{i\boldsymbol{\lambda}k}(t)$ or $b_{i\boldsymbol{\lambda}kl}(t)$
as weights. Because Assumption \ref{condition:boundedArrDept} is
satisfied, according to (\ref{eq:polyCostCompetLoadDifferenceCondition})
in Lemma \ref{lemma:existenceOfBoundInPolyCost}, we have
\begin{align*}
\left(\mathbf{y}_{\textrm{max}}+\mathbf{a}_{i\boldsymbol{\lambda}}\right)_{kt} & \leq y_{k}(t)+B\\
\left(\mathbf{z}_{\textrm{max}}+\mathbf{b}_{i\boldsymbol{\lambda}}\right)_{klt} & \leq z_{kl}(t)+B
\end{align*}
for all $k,l,t$. From Lemma \ref{lemma:proofPolyCostMonotonicityDiff},
we have
\begin{align*}
\frac{du_{k,t}}{dy}\left(y_{k}(t)+B\right) & \geq\frac{du_{k,t}}{dy}\left(\left(\mathbf{y}_{\textrm{max}}+\mathbf{a}_{i\boldsymbol{\lambda}}\right)_{kt}\right)
\end{align*}
and
\begin{align*}
 & \frac{\partial w_{kl,t}}{\partial\Upsilon}\left(y_{k}(t)+B,y_{l}(t)+B,z_{kl}(t)+B\right)\\
 & \geq\frac{\partial w_{kl,t}}{\partial\Upsilon}\left(\left(\mathbf{y}_{\textrm{max}}+\mathbf{a}_{i\boldsymbol{\lambda}}\right)_{kt},\left(\mathbf{y}_{\textrm{max}}+\mathbf{a}_{i\boldsymbol{\lambda}}\right)_{lt},\left(\mathbf{z}_{\textrm{max}}+\mathbf{b}_{i\boldsymbol{\lambda}}\right)_{klt}\right)
\end{align*}
Therefore, if
\begin{equation}
\phi\geq\frac{\nabla_{\mathbf{y,z}}\widetilde{D}\left(\left(\mathbf{y},\mathbf{z}\right)+B\mathbf{e}\right)\cdot\left(\mathbf{a}_{i\boldsymbol{\lambda}},\mathbf{b}_{i\boldsymbol{\lambda}}\right)}{\nabla_{\mathbf{y,z}}\widetilde{D}\left(\mathbf{y},\mathbf{z}\right)\cdot\left(\mathbf{a}_{i\boldsymbol{\lambda}},\mathbf{b}_{i\boldsymbol{\lambda}}\right)}\label{eq:polyCostProofalphaDefLoose}
\end{equation}
then (\ref{eq:alphaDef}) is always satisfied. Similarly to the above,
the numerator in the RHS of (\ref{eq:polyCostProofalphaDefLoose})
can be expanded into a sum containing terms of either $\frac{du_{k,t}}{dy}\left(y_{k}(t)+B\right)$
and $\frac{\partial w_{kl,t}}{\partial\Upsilon}\left(y_{k}(t)+B,y_{l}(t)+B,z_{kl}(t)+B\right)$
with either $a_{i\boldsymbol{\lambda}k}(t)$ or $b_{i\boldsymbol{\lambda}kl}(t)$
as weights.

Again, the denominator in the RHS of (\ref{eq:alphaDef}) (or equivalently,
(\ref{eq:polyCostProofalphaDefLoose})) can be expanded into a sum
containing terms of either $\frac{du_{k,t}}{dy}(y(t))$ or $\frac{\partial w_{kl,t}}{\partial\Upsilon}(y_{k}(t),y_{l}(t),z_{kl}(t))$,
with either $a_{i\boldsymbol{\lambda}k}(t)$ or $b_{i\boldsymbol{\lambda}kl}(t)$
as weights. 

For any given $i,\boldsymbol{\lambda}$, the terms $\frac{du_{k,t}}{dy}(y_{k}(t)+B)$
and $\frac{du_{k,t}}{dy}(y_{k}(t))$ have the same weight $a_{i\boldsymbol{\lambda}k}(t)$,
and $\frac{\partial w_{kl,t}}{\partial\Upsilon}(y_{k}(t)+B,y_{l}(t)+B,z_{kl}(t)+B)$
and $\frac{\partial w_{kl,t}}{\partial\Upsilon}(y_{k}(t),y_{l}(t),z_{kl}(t))$
have the same weight $b_{i\boldsymbol{\lambda}kl}(t)$. According
to Lemmas \ref{lemma:proofPolyCostBound} and \ref{lemma:proofPolyCostRatioBound},
for any $\delta>0$, there exist sufficiently large values of $\mathbf{y}$
and $\mathbf{z}$, such that 
\[
\textrm{RHS of (\ref{eq:polyCostProofalphaDefLoose})}\leq1+\delta
\]
Following a similar reasoning, we know that, for any $\delta>0$,
there exist sufficiently large values of $\mathbf{y}$ and $\mathbf{z}$,
such that
\[
\textrm{RHS of (\ref{eq:betaDef})}\leq\Omega+\delta
\]

We assume sufficiently large $\mathbf{y},\mathbf{z}$ in the following,
in which case we can set $\phi=1+\delta$ and $\psi=\Omega+\delta$
while satisfying (\ref{eq:polyCostProofalphaDefLoose}) (thus (\ref{eq:alphaDef}))
and (\ref{eq:betaDef}).

We then note that from (\ref{eq:polyLocalCost}), (\ref{eq:polyMigCost}),
and the definition of $\Omega$, we have 
\begin{align*}
\widetilde{D}(\phi\psi\mathbf{x}^{*}) & \leq(\phi\psi)^{\Omega}\widetilde{D}(\mathbf{x}^{*})\\
 & =((1+\delta)(\Omega+\delta))^{\Omega}\widetilde{D}(\mathbf{x}^{*})\\
 & =\left(\Omega^{\Omega}+\delta'\right)\widetilde{D}(\mathbf{x}^{*})
\end{align*}
where $\delta'\triangleq\delta\Omega+\delta+\delta^{2}>0$ is an arbitrary
constant (because $\delta$ is an arbitrary constant). The first inequality
is because of $\phi,\psi\geq1$ and $\widetilde{D}(\phi\psi\mathbf{x}^{*})$
is a polynomial of $\phi\psi\mathbf{x}^{*}$ with maximum order of
$\Omega$, where we note that $\mathbf{y}$ and $\mathbf{z}$ are
both linear in $\mathbf{x}$. 

We then have 
\begin{equation}
\frac{\widetilde{D}(\mathbf{x})}{\widetilde{D}(\mathbf{x}^{*})}\leq\frac{\widetilde{D}(\phi\psi\mathbf{x}^{*})}{\widetilde{D}(\mathbf{x}^{*})}=\Omega^{\Omega}+\delta'\label{eq:polyCostProofBeforeFinalResult}
\end{equation}

Until now, we have shown that (\ref{eq:polyCostProofBeforeFinalResult}) holds for sufficiently large
$\mathbf{y}$ and $\mathbf{z}$. According to Assumption \ref{condition:boundedArrDept},
the number of instances that unpredictably leave the system in each
slot is upper bounded by a constant $B_{d}$. It follows that $\mathbf{y}$
and $\mathbf{z}$ increases with $M$ when $M$ is larger than a certain
threshold. Therefore, there exists a sufficiently large $M$, so that
we have a sufficiently large $\mathbf{y}$ and $\mathbf{z}$ that
satisfies (\ref{eq:polyCostProofBeforeFinalResult}). 

Hence, the competitive
ratio upper bound can be expressed as
\begin{equation}
\Gamma\triangleq\max_{\mathcal{I}(M)}\Gamma(\mathcal{I}(M))\leq\Omega^{\Omega}+\delta'\label{eq:polyCostProofFinalResult}
\end{equation}
for sufficiently large $M$. 

According to the definition of the big-$O$
notation, we can also write 
\begin{equation}
\Gamma=O(1)\label{eq:polyCostProofFinalResultBigO}
\end{equation}
because $\Omega$ and $\delta'$ are both constants in $M$.
\end{proof}

\section{Proof of Proposition \ref{prop:costDifferenceBound}}

\label{sec:costDifferenceBoundProof}

Define $T_{\textrm{max}}>1$ as an arbitrarily large timeslot index.
We note that there are $\left\lfloor \frac{T_{\textrm{max}}}{T}\right\rfloor $
\emph{full} look-ahead windows of size $T$ within timeslots from
$1$ to $T_{\textrm{max}}$, where $\left\lfloor x\right\rfloor $
denotes the integral part of $x$. In the last window, there are $T_{\textrm{max}}-T\cdot\left\lfloor \frac{T_{\textrm{max}}}{T}\right\rfloor $
slots. We have 
\begin{equation}
F\left(T_{\textrm{max}}-T\cdot\left\lfloor \frac{T_{\textrm{max}}}{T}\right\rfloor \right)\leq\frac{T_{\textrm{max}}-T\cdot\left\lfloor \frac{T_{\textrm{max}}}{T}\right\rfloor }{T}F(T)\label{eq:proofFBoundConvex}
\end{equation}
because $F(T)$ is convex non-decreasing and $F(0)=0$. 

For the true optimal configuration $\boldsymbol{\pi}^{*}$, according
to the definitions of $\epsilon(\tau)$ and $F(T)$, the difference
in the predicted and actual sum-costs satisfies 
\begin{align}
 & \sum_{t=1}^{T_{\textrm{max}}}D_{\boldsymbol{\pi}^{*}}(t)-\sum_{t=1}^{T_{\textrm{max}}}A_{\boldsymbol{\pi}^{*}}(t)\nonumber \\
 & \leq\left\lfloor \frac{T_{\textrm{max}}}{T}\right\rfloor F(T)+F\left(T_{\textrm{max}}-T\cdot\left\lfloor \frac{T_{\textrm{max}}}{T}\right\rfloor \right)\nonumber \\
 & \leq\frac{T_{\textrm{max}}}{T}F(T)\label{eq:boundProof1}
\end{align}
where the last inequality follows from (\ref{eq:proofFBoundConvex}).
Similarly, for the configuration $\boldsymbol{\pi}_{p}$ obtained
from predicted costs, we have
\begin{equation}
\sum_{t=1}^{T_{\textrm{max}}}A_{\boldsymbol{\pi}_{p}}(t)-\sum_{t=1}^{T_{\textrm{max}}}D_{\boldsymbol{\pi}_{p}}(t)\leq\frac{T_{\textrm{max}}}{T}F(T)\label{eq:boundProof2}
\end{equation}

In the following, we establish the relationship between $\boldsymbol{\pi}^{*}$
and $\boldsymbol{\pi}_{p}$. Assume that, in (\ref{eq:objFuncFrame}),
we neglect the migration cost at the beginning of each look-ahead
window, i.e. we consider each window independently and there is no
migration cost in the first timeslot of each window, then we have
\[
\sum_{t=1}^{T_{\textrm{max}}}D_{\boldsymbol{\pi}_{p}}(t)\leq\Gamma\sum_{t=1}^{T_{\textrm{max}}}D_{\boldsymbol{\pi}^{*}}(t)
\]
where the constant $\Gamma\geq1$ is the competitive ratio of solving
(\ref{eq:objFuncFrame}). This holds because there is no connection
between different windows, thus the optimal sequences (considering
predicted costs) obtained from (\ref{eq:objFuncFrame}) constitute
the optimal sequence up to a factor $\Gamma$ for all timeslots $\left[1,T_{\textrm{max}}\right]$.
Now we relax the assumption and consider the existence of migration
cost in the first slot of each window. Note that we cannot have more
than $\left\lfloor \frac{T_{\textrm{max}}}{T}\right\rfloor +1$ windows
and the first timeslot $t=1$ does not have migration cost. Thus,
\begin{equation}
\sum_{t=1}^{T_{\textrm{max}}}D_{\boldsymbol{\pi}_{p}}(t)\leq\Gamma\sum_{t=1}^{T_{\textrm{max}}}D_{\boldsymbol{\pi}^{*}}(t)+\frac{T_{\textrm{max}}}{T}\sigma\label{eq:boundProof3}
\end{equation}
The bound holds because regardless of the configuration in slot $t_{0}-1$,
the migration cost in slot $t_{0}$ cannot exceed $\sigma$.

By multiplying $\Gamma$ on both sides of (\ref{eq:boundProof1})
and summing up the result with (\ref{eq:boundProof3}), we get
\begin{equation}
\sum_{t=1}^{T_{\textrm{max}}}D_{\boldsymbol{\pi}_{p}}(t)-\Gamma\sum_{t=1}^{T_{\textrm{max}}}A_{\boldsymbol{\pi}^{*}}(t)\leq\frac{T_{\textrm{max}}}{T}\left(\Gamma F(T)+\sigma\right)\label{eq:boundProof4}
\end{equation}
Summing up (\ref{eq:boundProof2}) with (\ref{eq:boundProof4}), dividing
both sides by $T_{\textrm{max}}$, and taking the limit on both sides
yields the proposition.

\section{Proof of Proposition \ref{prop:optT} and Corollary \ref{prop:optTcorr}}

\label{sec:costOptTProof}

Taking the derivative of $\Phi(T)$, we get
\begin{align}
\frac{d\Phi}{dT} & =(\Gamma+1)T\frac{d^{2}G(T)}{dT{}^{2}}\geq0
\end{align}
where the last inequality is because $G(T)$ is convex. This implies
that $\Phi(T)$ is non-decreasing with $T$. Hence, there is at most
one consecutive interval of $T$ (the interval may only contain one
value) such that (\ref{eq:diffTEqu0}) is satisfied. We denote this
interval by $\left[T_{-},T_{+}\right]$, and a specific solution to
(\ref{eq:diffTEqu0}) is $T_{0}\in\left[T_{-},T_{+}\right]$. 

We note that $\frac{d\ln\theta}{dT}$ and $\Phi(T)$ have the same
sign, because $\frac{d\ln\theta}{dT}\lessgtr0$ yields $\Phi(T)\lessgtr0$
and vice versa, which can be seen from (\ref{eq:diffLnErrorCost})
and (\ref{eq:diffTEqu0}). When $T<T_{-}$, we have $\Phi(T)<0$ and
hence $\frac{d\ln\theta}{dT}<0$; when $T>T_{+}$, we have $\Phi(T)>0$
and hence $\frac{d\ln\theta}{dT}>0$. This implies that $\ln\theta$,
thus $\theta(T)$, keeps decreasing with $T$ until the optimal solution
is reached, and afterwards it keeps increasing with $T$. It follows
that the minimum value of $\theta(T)$ is attained at $T\in\left[T_{-},T_{+}\right]$.
Because $T_{0}\in\left[T_{-},T_{+}\right]$ and $T^{*}$ is a discrete
variable, we complete the proof of the proposition. 

Noting that we do not consider the convexity of $\theta(T)$ in the
above analysis, we
can also conclude the corollary.

\section{Additional Simulation Results}
\label{sec:additionalSimResult}

\begin{table*}[!t]
\renewcommand{\arraystretch}{1.3}
\caption{Statistics of computation time and FLOP count}
\label{table:StatisticsOfComputResource}
\centering
\begin{tabular}{c | c c c c c}
\hline
Performance measure & Approach & Sum & Mean & Std. dev. & Maximum \\
\hline
\multirow{ 2}{*}{Computation time  (seconds)} & Prec. fut. knowledge (D)  & $4.46\times 10^{3}$ & $4.39$ & $4.78$ & $50.9$ \\
& Proposed (E)  & $6.45\times 10^{3}$  & $1.84$ & $0.72$ & $3.22$\\
\hline
\multirow{ 2}{*}{FLOP count} & Prec. fut. knowledge (D)  & $1.71\times 10^{10}$  & $1.68\times 10^{7}$ & $1.85\times 10^{7}$  & $2.03\times 10^{8}$\\
& Proposed (E) & $2.14\times 10^{10}$  & $6.08\times 10^{6}$  & $2.43\times 10^{6}$  & $7.56\times 10^{6}$\\
\hline
\end{tabular}
\end{table*}

\begin{figure}
\centering \subfigure[]{\includegraphics[width=1\columnwidth]{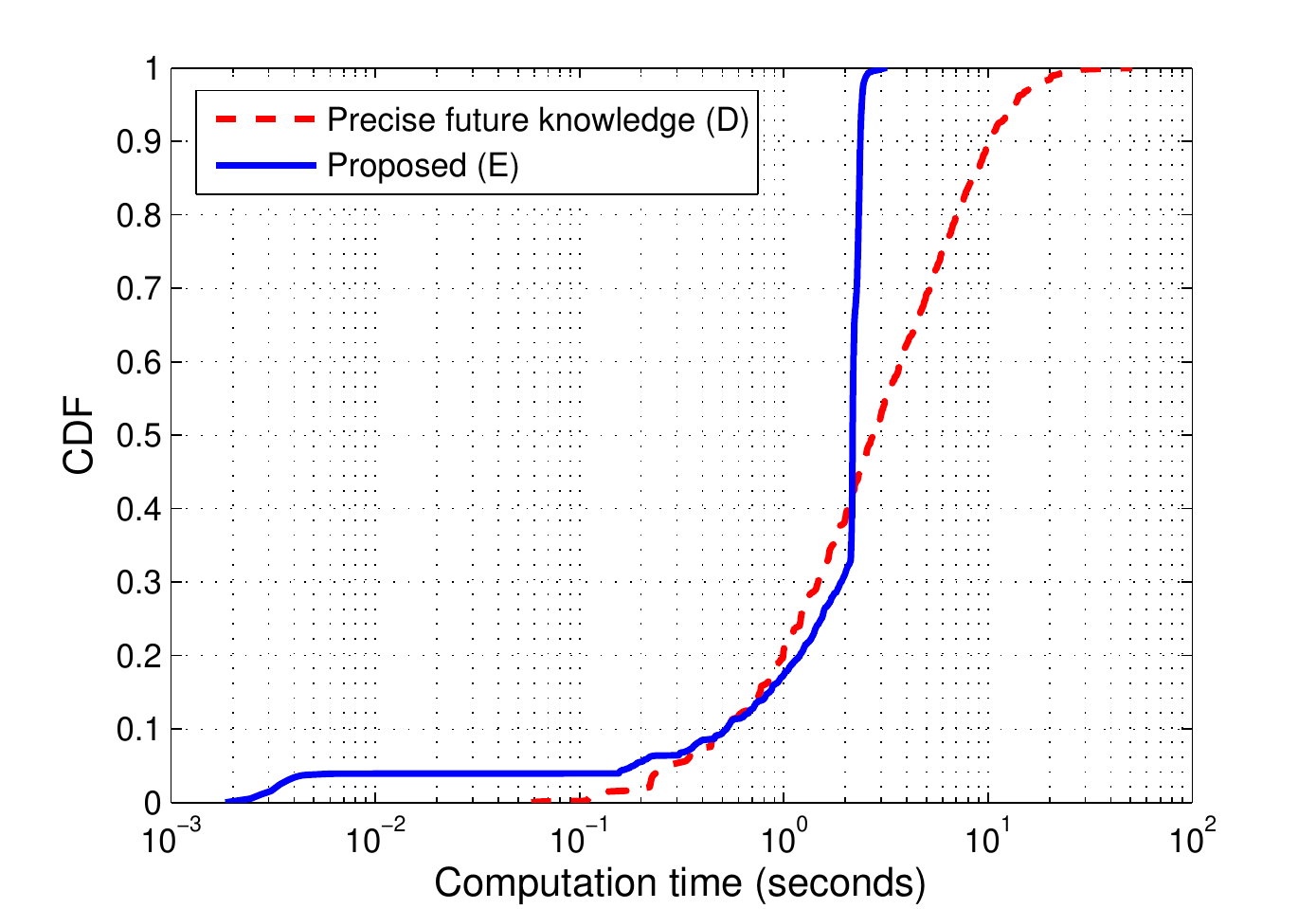}}
\subfigure[]{\includegraphics[width=1\columnwidth]{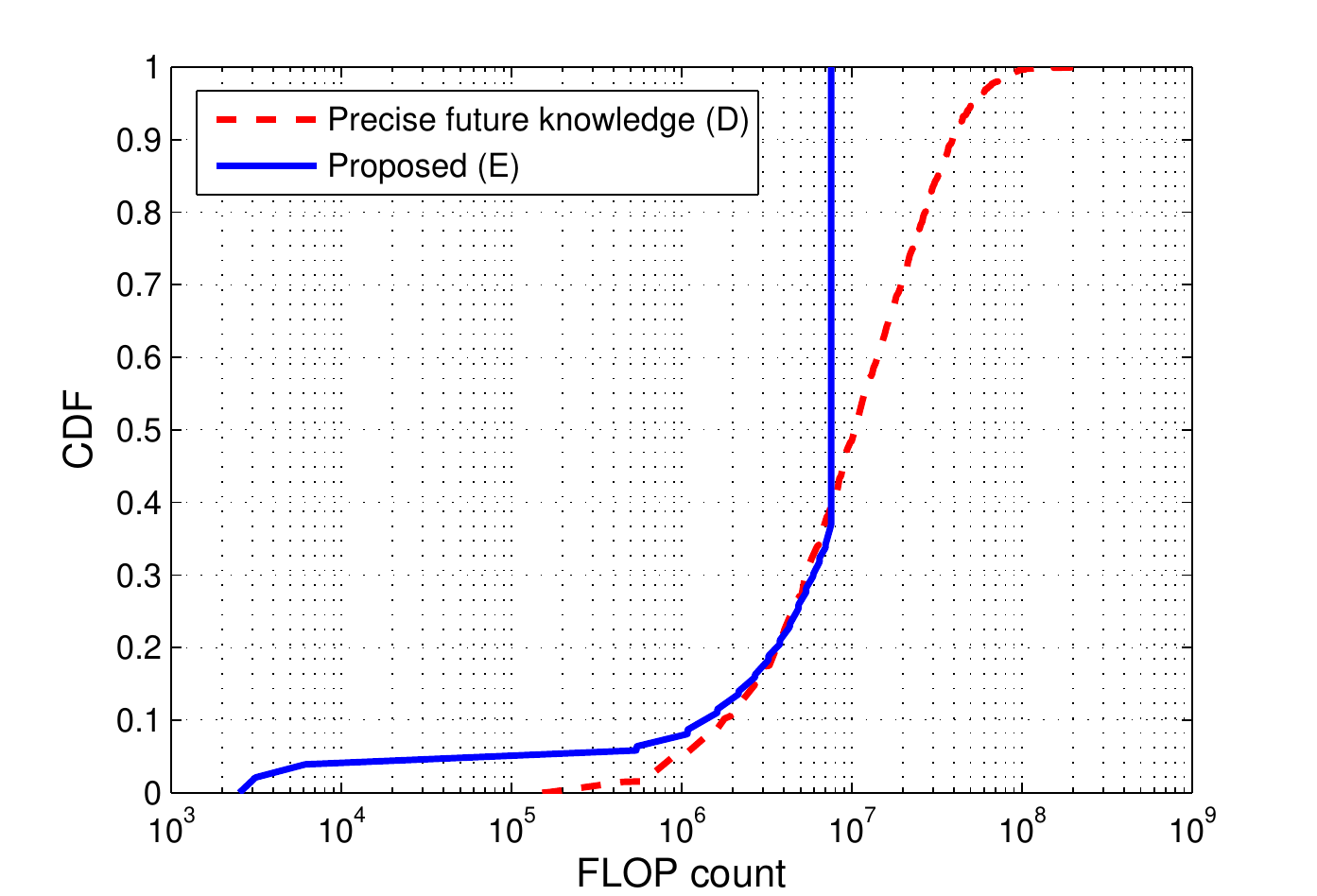}}

\protect\caption{Cumulative distribution functions (CDF) of computation time and FLOP count.}
\label{fig:simComputResource}
\end{figure}

We study the computational overhead of the proposed algorithm for obtaining service configurations corresponding to the results shown in Fig. \ref{fig:simRealWorld}(a). The simulations were run in MATLAB on a laptop with Intel(R) Core(TM) i5-4300U CPU, 4~GB memory, and 64-bit Windows 10.
We focus on the computation time and amount of floating-point operations (FLOP) for every computation of the next $T$-slot configuration for one instance. We compare the performance between the proposed approach (E) and the precise future knowledge scenario (D). We do not compare against MMC-only and backend cloud-only approaches (A, B, and C) because those approaches are much simpler but produce a significantly higher cost than approaches D and E (see Fig. \ref{fig:simRealWorld}(a)) thus it is not fair to compare their computational overheads.
The FLOP counts are obtained using the toolbox in \ref{appReference1} (references for appendices are listed at the end of this document).

We collected statistics of the computation time and FLOP count for each execution of the algorithm to compute the next $T$-slot configuration when using approaches D and E.
Fig.~\ref{fig:simComputResource} shows the cumulative distribution function (CDF), where $\textrm{CDF}(x)$ is defined as the percentage of observations (i.e., computation time or FLOP count) that has a value smaller than or equal to $x$. Table~\ref{table:StatisticsOfComputResource} lists the sum (for all executions), mean, standard deviation, and maximum values.

We see that in terms of computation time, the proposed approach (E) can find the configuration within $1.84$~s on average. Recalling that each timeslot corresponds to a physical time of $60$~s and the configuration for multiple timeslots is usually found by a single execution of the algorithm, the time needed for running the algorithm is relatively short. 

We note that because we ran the simulation in MATLAB and our code is not optimized for performance, our results are pessimistic and the algorithm should run much faster with a better optimized code written in C, for instance. 
This can be seen by the fact that the average FLOP count of the proposed approach is only $6.08 \times 10^6$ while modern personal computers can easily complete over $10^9$ floating-point operations per second (FLOPS),  see the experimental results in \ref{appReference2} for example.
Server machines will be even more powerful. 
This means that if we fully utilize the resource of a personal computer, the proposed algorithm should be able to find the configuration (for a single instance) within $0.01$~s. Such a timescale will be sufficient for deploying newly arrived instances on the cloud in real time.
From this FLOP count comparison, we can conclude that the proposed algorithm only consumes a small amount of processing capability, so it is applicable in practice and scalable to a reasonably large amount of users. 

We further note that the proposed algorithm may benefit from parallelization and performing the computation on GPUs. For example, parallelization is possible for the ``for all'' loops between Lines~\ref{alg:shortestPath:forall1:start} and \ref{alg:shortestPath:forall1:end} and also between Lines~\ref{alg:shortestPath:forall2:start} and \ref{alg:shortestPath:forall2:end} in Algorithm~\ref{alg:shortestPath}. This can further expedite the process of finding configurations.

Comparing approach D with the proposed approach (E), Table~\ref{table:StatisticsOfComputResource} shows that the mean, standard deviation, and maximum values of the proposed approach is lower than those of approach D. This is because the proposed approach uses the optimal look-ahead window size, which is usually much smaller than the total number of timeslots in which the instance remains active, whereas approach D is assumed to have precise knowledge of the future and therefore takes into account the entire duration in which the instance is active. According to the complexity analysis in Section~\ref{sub:AlgorithmOffline}, considering more timeslots in the optimization causes higher algorithmic complexity, thus explaining the result. 

When looking at the sum values in Table~\ref{table:StatisticsOfComputResource}, the proposed approach has a similar but slightly larger computation time and FLOP count than approach D. This is because approach D is assumed to know exactly when an instance departs from the system, whereas the proposed approach does not have this knowledge and may consider additional slots after instance departure as part of the optimization process. However, the gap is not large because the proposed approach has a relatively small look-ahead window; the instance departure becomes known to the proposed approach at the beginning of a new look-ahead window.

The fact that the mean and maximum values (in Table~\ref{table:StatisticsOfComputResource}) of the proposed approach are much smaller than those of approach D also makes the proposed approach more suitable for real-time requests, because it can apply a configuration quickly after instance arrival. More importantly, we recall that approach~D is impractical because it assumes precise prediction of future costs and instance departure times.


\section*{References for Appendices}
\begin{enumerate}[label={[A\arabic*] }]
\item  H. Qian, ``Counting the Floating Point Operations (FLOPS),'' MATLAB Central File Exchange, No. 50608, available at \url{http://www.mathworks.com/matlabcentral/fileexchange/50608} \label{appReference1}
\item MaxxPI$^2$ -- the System Bench, \url{http://www.maxxpi.net/pages/result-browser/top15---flops.php} \label{appReference2}
\end{enumerate}

\end{document}